\DeclareMathOperator{\bark}{bar}
\DeclareMathOperator{\rep}{rep}
\DeclareMathOperator{\Fac}{Fac}
\DeclareMathOperator{\phifac}{\varphi-\!\Fac}
\theoremstyle{plain}
\newtheorem{theorem}{Theorem}
\newtheorem{lemma}[theorem]{Lemma}
\newtheorem{corollary}[theorem]{Corollary}
\newtheorem{proposition}[theorem]{Proposition}
\theoremstyle{definition}
\newtheorem{definition}[theorem]{Definition}
\newtheorem{remark}[theorem]{Remark}
\newtheorem{example}[theorem]{Example}
\title{Computing the $k$-binomial complexity of the Thue--Morse word}
\author{Marie Lejeune}
\thanks{The first author is supported by a FNRS fellowship}
\author{Julien Leroy}
\author{Michel Rigo}
\email{\{M.Lejeune,J.Leroy,M.Rigo\}@uliege.be}
\address{University of Li\`ege, 
Dept. of Mathematics, 
All\'ee de la d\'ecouverte 12 (B37), 
B-4000 Li\`ege, 
Belgium}
\def\Cut{{\rm Cut}}
\def\bt{\mathbf{t}}
\begin{document}

\begin{abstract}
    Two words are $k$-binomially equivalent whenever they share the same subwords, i.e., subsequences, of length at most $k$ with the same multiplicities. This is a refinement of both abelian equivalence and the Simon congruence. The $k$-binomial complexity of an infinite word $\mathbf{x}$ maps the integer $n$ to the number of classes in the quotient, by this $k$-binomial equivalence relation, of the set of factors of length $n$ occurring in $\mathbf{x}$. This complexity measure has not been investigated very much. In this paper, we characterize the $k$-binomial complexity of the Thue--Morse word. The result is striking, compared to more familiar complexity functions. Although the Thue--Morse word is aperiodic, its $k$-binomial complexity eventually takes only two values. In this paper, we first obtain general results about the number of occurrences of subwords appearing in iterates of the form $\Psi^\ell(w)$ for an arbitrary morphism $\Psi$. We also thoroughly describe the factors of the Thue--Morse word by introducing a relevant new equivalence relation.
\end{abstract}

\maketitle

\section{Introduction}

The Thue--Morse word $\mathbf{t}=011010011001\cdots$ is ubiquitous in combinatorics on words \cite{AS99,Pytheas}. It is an archetypal example of a $2$-automatic sequence: it is the fixed point of the morphism $0\mapsto 01$, $1\mapsto 10$. See, for instance, \cite{AS}. Its most prominent property is that it avoids overlaps, i.e., it does not contain any factors of the form $auaua$ where $u$ is a word and $a$ a symbol. Consequently it also avoids cubes of the form $uuu$ and is aperiodic. The Thue--Morse word appears in many problems with a number-theoretic flavor, to cite a few: the Prouhet--Tarry--Escott problem for partitioning integers, transcendence of real numbers, duplication of the sine,\ldots \cite{AB07,AF08,Dek77,LM}. Let us also mention a sentence from the review of \cite{padic}: ``The nice combinatorial properties of its subword structure have inspired a number of papers'' and Ochsenschl\"ager \cite{Och} was the first to consider the subwords of its prefixes.

\medskip
Various measures of complexity of infinite words have been considered in the literature. In terms of descriptional complexity (i.e., here we are not concerned with algorithms generating infinite words), the most usual one is the {\em factor complexity} that one can, for instance, relate to the topological entropy of a symbolic dynamical system. The factor complexity of an infinite word $\mathbf{x}$ simply counts the number $p_\mathbf{x}(n)=\#\Fac_n(\mathbf{x})$ of factors of length $n$ occurring in $\mathbf{x}$. One can also consider other measures such as arithmetical complexity related to Van der Waerden's theorem \cite{Avg}, abelian complexity introduced by Erd\"os in the sixties (he raised the question whether abelian squares can be avoided by an infinite word over an alphabet of size 4) or, recently $k$-abelian complexity \cite{KS1}. In an attempt to generalize Parikh's theorem on context-free languages, $k$-abelian complexity counts the number of equivalence classes partitioning the set of factors of length $n$ for the so-called $k$-abelian equivalence. Two finite words $u$ and $v$ are {\em $k$-abelian equivalent} if $|u|_x=|v|_x$, for all words of length at most $k$, and where $|u|_x$ denotes the number of occurrences of $x$ as a factor of $u$.
\medskip

The celebrated theorem of Morse--Hedlund characterizes ultimately periodic words in terms of a bounded factor complexity function; for a reference, see \cite{AS} or \cite[Section 4.3]{CANT}. Hence, aperiodic words with the lowest factor complexity are exactly the Sturmian words characterized by $p_\mathbf{x}(n)=n+1$. It is also a well-known result of Cobham that a $k$-automatic sequence has factor complexity in $\mathcal{O}(n)$. The factor complexity of the Thue--Morse word is in $\Theta(n)$ and is recalled in Proposition~\ref{pro:ptm}. 
\medskip

For many complexity measures, Sturmian words have the lowest complexity among aperiodic words, and variations of Morse--Hedlund theorem notably exist for $k$-abelian complexity \cite{KS2}. However, the arithmetical complexity of Sturmian words is in $\mathcal{O}(n^3)$; see \cite{CasFrid}.
\medskip

Binomial coefficients of words have been extensively studied \cite{Lot}: $\binom{u}{x}$ denotes the number of occurrences of $x$ as a subword, i.e., a subsequence, of $u$. They have been successfully used in several applications: $p$-adic topology \cite{padic}, non-commutative extension of Mahler's theorem on interpolation series \cite{Pin}, formal language theory \cite{Karandikar}, Parikh matrices, and a generalization of Sierpi\'nski's triangle \cite{LRS}.
\medskip

Binomial complexity of infinite words has been recently investigated \cite{Rao,RigoSalimov:2015}. The definition is parallel to that of $k$-abelian complexity. Two finite words $u$ and $v$ are {\em $k$-binomially equivalent} if $\binom{u}{x}=\binom{v}{x}$, for all words of length at most $k$. This relation is a refinement of abelian equivalence and Simon's congruence. We thus take the quotient of the set of factors of length $n$ by this new equivalence relation. For all $k\ge 2$, Sturmian words have $k$-binomial complexity that is the same as their factor complexity. However, the Thue--Morse word has bounded $k$-binomial complexity \cite{RigoSalimov:2015}. So we have a striking difference with the usual complexity measures. This phenomenon therefore has to be closely investigated. In this paper, we compute the exact value of the $k$-binomial complexity $b_{\mathbf{t},k}(n)$ of the Thue--Morse word $\mathbf{t}$. To achieve this goal, we first obtain general results computing the number of occurrences of a subword in the (iterated) image by a morphism. This discussion is not restricted to the Thue--Morse morphism.
\medskip

This paper is organized as follows. In Section~\ref{sec:basics}, we recall basic results about binomial coefficients, binomial equivalence and the Thue--Morse word.  In Section~\ref{sec:occ}, we give an expression to compute the coefficient $\binom{\Psi(w)}{u}$ for an arbitrary morphism $\Psi$ in terms of binomial coefficients for the preimage $w$. To that end, we study factorizations of $u$ of the form $u=x\Psi(u')y$. In particular, we are able to express the difference $\binom{\Psi(w)}{u}-\binom{\Psi(w')}{u}$ as a linear combination of the form
$$\sum_x m(x) \left[\binom{w}{x}-\binom{w'}{x}\right],$$
where the sum is ranging over words $x$ shorter than $u$, and we are able to precisely describe the integer coefficients $m(x)$. These coefficients are studied in detail in Section~\ref{sec:mult} where we prove results about $k$-binomially (non)-equivalent factors of the Thue--Morse word of the form $\varphi^k(a)$. In particular, we recover the result of Ochsenschl\"ager about prefixes of the Thue--Morse word \cite{Och}. Indeed, to prove that two words $u,v$ are not $k$-binomially equivalent, it is enough to show that the difference $\binom{u}{x}-\binom{v}{x}$ is non-zero for some word $x$ of length $k$.

In the second part of this paper, we specifically study the $k$-binomial complexity of the Thue--Morse word. For $k=1$, the abelian complexity of $\mathbf{t}$ is well-known and takes only the values $2$ and $3$. The case $k=2$ is treated in Section~\ref{sec:k2}. In the last three sections, we consider the general case $k\ge 3$. The precise statement of our main result is given in Theorem~\ref{the:main}.
The principal tool to get our result is a new equivalence relation discussed in Section~\ref{sec:types}. This relation is based on particular factorizations of factors occurring in the Thue--Morse word. Many authors have been interested in the so-called desubstitution \cite{Grei,mos1,mos2}.

\section{Basics}\label{sec:basics}
Let $A=\{0,1\}$. Let $\varphi:A^*\to A^*$ be the classical Thue--Morse morphism defined by 
$\varphi(0)=01$ and $\varphi(1)=10$. The {\em complement} of a word $u\in A^*$ is the image of $u$ under the involutive morphism mapping $0$ to $1$ and $1$ to $0$. It is denoted by $\overline{u}$. The length of the word $u$ is denoted by $|u|$.
\smallskip

\subsection{Binomial coefficients and binomial equivalence}

The binomial coefficient $\binom{u}{v}$ of two finite words $u$ and $v$ is
the number of times $v$ occurs as a subsequence of $u$ (meaning as a
``scattered'' subword). As an example, we consider two particular words over $\{0,1\}$ and 
$$\binom{101001}{101}=6\ .$$
Indeed, if we index the letters of the first word $u_1u_2\cdots u_6=101001$, we have 
$$u_1u_2u_3=u_1u_2u_6=u_1u_4u_6=u_1u_5u_6=u_3u_4u_6=u_3u_5u_6=101\ .$$
Observe that this concept is a natural generalization of the binomial coefficients of integers. For a one-letter alphabet $\{a\}$, we have
$$\binom{a^m}{a^n}=\binom{m}{n},\quad \forall\, m,n\in\mathbb{N}$$
where $a^m$ denotes the concatenation of $m$ $a$'s. For more on these binomial coefficients, see, for instance, \cite[Chap.~6]{Lot}. In particular, $\binom{u}{\varepsilon}=1$. In this paper, a {\em factor} of a word is made of consecutive letters. However this is not necessarily the case for a {\em subword} of a word.

\begin{definition}[Binomial equivalence]
  Let $k\in\mathbb{N}$ and $u,v$ be two words over $A$. We let $A^{\le k}$ denote the set of words of length at most $k$ over $A$. We say that $u$ and $v$ are {\em $k$-binomially equivalent} if 
  $$\binom{u}{x}=\binom{v}{x},\ \forall x\in A^{\le k}\ .$$
  We simply write $u\sim_k v$ if $u$ and $v$ are $k$-binomially equivalent. The word $u$ is obtained as a permutation of the letters in $v$ if and only if $u\sim_1 v$. In that case, we say that $u$ and $v$ are {\em abelian equivalent} and we write instead $u\sim_{\mathsf{ab}}v$. Note that, for all $k\ge 1$, if $u\sim_{k+1}v$, then $u\sim_k v$.
\end{definition}

\begin{example}\label{exa1}
    The four words $0101110$, $0110101$, $1001101$ and
    $1010011$ are $2$-binomially equivalent. Let $u$ be any of these four words. We have
$$\binom{u}{0}=3,\ \binom{u}{1}=4,\ \binom{u}{00}=3,\ \binom{u}{01}=7,\ \binom{u}{10}=5,\ \binom{u}{11}=6\ .$$
For instance, the word $0001111$ is abelian equivalent to $0101110$ but these two words are not $2$-binomially equivalent. To see this, simply compute the number of occurrences of the subword $10$.
\end{example}

Many classical questions in combinatorics on words can be considered in this binomial context \cite{Rigobook,survey}. Avoiding binomial squares and cubes is considered in \cite{Rao}. The problem of testing whether or not two words are $k$-binomially equivalent is discussed in \cite{testing}. In particular, one can introduce the $k$-binomial complexity function. 

\begin{definition}[Binomial complexity]
    Let $\mathbf{x}$ be an infinite word. The $k$-binomial complexity function of $\mathbf{x}$ is defined as
$$b_{\mathbf{x},k}:\mathbb{N}\to\mathbb{N},\ n\mapsto \# \left( \Fac_n(\mathbf{x})/\!\sim_k\right)$$
where $\Fac_n(\textbf{x})$ is the set of factors of length $n$ occurring in $\textbf{x}$.
\end{definition}

\subsection{Context of this paper}

The Thue--Morse word denoted by $\mathbf{t}$ is the fixed point starting with $0$ of the morphism $\varphi$. In \cite[Thm.~13]{RigoSalimov:2015}, it is shown that $\mathbf{t}$ has a bounded $k$-binomial complexity. Actually, this behavior occurs for all morphisms where images of letters are permutations of the same word.
\begin{theorem}\cite{RigoSalimov:2015} Let $k\ge 1$. There exists
    $C_k>0$ such that the $k$-binomial complexity of the Thue--Morse
    word satisfies $b_{\mathbf{t},k}(n)\le C_k$ for all $n\ge 0$.
\end{theorem}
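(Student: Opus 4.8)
The plan is to argue by induction on $k$, exploiting the single structural feature that makes the Thue--Morse morphism special: the two images $\varphi(0)=01$ and $\varphi(1)=10$ are permutations of one another, hence share the same Parikh vector. The base case $k=1$ is exactly the boundedness of the abelian complexity, which takes only the values $2$ and $3$, so I may set $C_1=3$. For the inductive step I would shrink factors by roughly a factor of two through desubstitution: since $\mathbf{t}=\varphi(\mathbf{t})$, every factor $f\in\Fac_n(\mathbf{t})$ with $n\ge 2$ can be written as $f=x\,\varphi(w)\,y$, where $w$ is a factor of $\mathbf{t}$ of length $m=(n-|x|-|y|)/2$ and $x,y\in\{\varepsilon,0,1\}$ record how $f$ sits across the block boundaries of $\varphi(\mathbf{t})$. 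There are at most nine boundary types $(x,y)$, and for a fixed type and fixed $n$ all the associated preimages $w$ share one common length $m$.

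The heart of the argument is the claim that desubstitution collapses $k$-binomial classes onto $(k-1)$-binomial classes of the preimages: if $w,w'$ have the same length and $w\sim_{k-1}w'$, then $x\,\varphi(w)\,y\sim_k x\,\varphi(w')\,y$ for each fixed type $(x,y)$. I would prove this directly by computing $\binom{\varphi(w)}{u}$ for $|u|\le k$ via the multiplicative formula, distributing $u$ across the length-two blocks $\varphi(w_i)$, each of which contributes a scattered subword of length $0$, $1$ or $2$. A length-$0$ or length-$1$ contribution is insensitive to whether $w_i=0$ or $w_i=1$, since both $01$ and $10$ contain exactly one $0$ and one $1$; only blocks contributing a length-two piece actually ``see'' the letter of $w$. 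Consequently the term using no length-two block equals $\binom{|w|}{|u|}$, which depends only on $|w|$, while every remaining term is an integer combination of coefficients $\binom{w}{z}$ with $|z|\le|u|-1\le k-1$. Subtracting the corresponding expansion for $w'$ and using $|w|=|w'|$ gives
$$\binom{\varphi(w)}{u}-\binom{\varphi(w')}{u}=\sum_{|z|\le k-1} c_z\left[\binom{w}{z}-\binom{w'}{z}\right],$$
and incorporating the bounded boundary words through $\binom{x\varphi(w)y}{u}=\sum_{u=u_1u_2u_3}\binom{x}{u_1}\binom{\varphi(w)}{u_2}\binom{y}{u_3}$ only shortens the relevant $z$ further. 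Hence $w\sim_{k-1}w'$ forces all these differences to vanish, which proves the claim.

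With the claim in hand the induction closes. For each fixed boundary type the map $w\mapsto[x\varphi(w)y]_{\sim_k}$ factors through the $(k-1)$-binomial class of $w$, so the number of $k$-binomial classes contributed by that type is at most $b_{\mathbf{t},k-1}(m)\le C_{k-1}$. Summing over the at most nine types yields $b_{\mathbf{t},k}(n)\le 9\,C_{k-1}$ for all $n\ge 2$, and the two values $n\in\{0,1\}$ are trivially bounded; thus $C_k:=9\,C_{k-1}$ works. I expect the main work to lie in making the block-distribution computation precise enough to extract the degree drop $|z|\le k-1$, since this is exactly what powers the induction and is entirely driven by the permutation property of $\varphi$; by comparison the desubstitution bookkeeping—checking which types $(x,y)$ occur and that each preimage $w$ is genuinely a factor of $\mathbf{t}$—is fiddly but routine for $\mathbf{t}$.
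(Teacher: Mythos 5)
Your argument is correct, and it is essentially the approach underlying the cited result: the key claim that $w\sim_{k-1}w'$ implies $x\varphi(w)y\sim_k x\varphi(w')y$ is exactly Corollary~\ref{cor:phik} combined with the congruence property of $\sim_k$, both of which the paper establishes via the same block-distribution expansion (Theorem~\ref{the:phifac}) you sketch, and the desubstitution $f=x\varphi(w)y$ with the $9$ boundary types is the standard decomposition used throughout Sections~\ref{sec:k2}--\ref{sec:types}. The paper itself only cites \cite{RigoSalimov:2015} for this theorem, but your induction $C_k\le 9\,C_{k-1}$ with $C_1=3$ is a sound reconstruction of that proof.
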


In the same paper, the following remark was made, \cite[Rem.~5]{RigoSalimov:2015}. 
\begin{remark}
    By computer experiments, $b_{\mathbf{t},2}(n)$ is equal to $9$
    if $n\equiv 0\pmod{4}$ and to $8$ otherwise, for $10\le n\le
    1000$. Moreover, $b_{\mathbf{t},3}(n)$ is equal to $21$ if
    $n\equiv 0\pmod{8}$ and to $20$ otherwise, for $8\le n\le 500$.
\end{remark}

Our contribution is the exact characterization of  $b_{\mathbf{t},k}(n)$. 

\begin{theorem}\label{the:main}
    Let $k$ be a positive integer. 
For all $n \leq 2^k-1$, we have $$b_{\bt,k}(n)=p_\bt(n).$$    
    For all $n\ge 2^k$, we have
$$b_{\mathbf{t},k}(n)=\left\{\begin{array}{l}
3\cdot 2^k-3,\ \text{if } n\equiv 0 \pmod{2^k};\\
3\cdot 2^k-4,\ \text{otherwise}.
\end{array}\right.$$
\end{theorem}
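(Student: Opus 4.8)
The plan is to exploit two structural facts. First, $k$-binomial equivalence is a \emph{monoid congruence}: from the Cauchy-type identity
\[
\binom{uv}{x}=\sum_{x=x_1x_2}\binom{u}{x_1}\binom{v}{x_2}
\]
one sees that $u\sim_k u'$ implies $wuw'\sim_k wu'w'$ for all $w,w'$, since every prefix $x_1$ of a distinguishing word $x$ with $|x|\le k$ again has length $\le k$. Second, specializing the difference formula of Section~\ref{sec:occ} to $\Psi=\varphi$ and to target words $u$ with $|u|\le k$, the index words $x$ occurring on the right-hand side satisfy $|x|<|u|\le k$; hence $w\sim_{k-1}w'$ forces $\varphi(w)\sim_k\varphi(w')$. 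Iterating this \emph{lifting} property and using that $\varphi(0)=01$ and $\varphi(1)=10$ are abelian equivalent yields the cornerstone $\varphi^k(0)\sim_k\varphi^k(1)$.

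Since $\bt=\varphi^k(\bt)$, every factor $w$ with $|w|=n\ge 2^k$ decomposes along the length-$2^k$ blocks $\varphi^k(a)$ as $w=s\,\varphi^k(v)\,p$, where $s$ is a suffix of some block $\varphi^k(c_1)$, $p$ a prefix of some block $\varphi^k(c_2)$, and $\varphi^k(v)$ is a concatenation of complete interior blocks. Using the congruence together with $\varphi^k(0)\sim_k\varphi^k(1)$, each interior block may be replaced by $\varphi^k(0)$, so that $w\sim_k s\,\varphi^k(0)^{m}\,p$ and the class of $w$ is governed only by the \emph{boundary data} $(i,j,c_1,c_2)$: the lengths $i,j$ of the portions of the first and last block lying outside $w$, together with the two block types $c_1,c_2$, subject to $|v|\,2^k-i-j=n$. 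This boundary data is precisely the \emph{type} that I would formalize as the new equivalence relation of Section~\ref{sec:types}; the goal there is to show that it is a \emph{complete} invariant for $\sim_k$ on $\Fac_n(\bt)$.

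Before counting, I would record the nontrivial coincidences between types, each of which I expect to establish by rewriting the collapsed word as a $\varphi^{k-1}$-image of an abelian pair and invoking the lifting property. For instance, writing $\varphi^k(0)=\varphi^{k-1}(0)\varphi^{k-1}(1)$ shows that the balanced central offset produces $\varphi^{k-1}\!\big((10)^m\big)$, which is $\sim_k$ to the all-blocks word $\varphi^{k-1}\!\big((01)^m\big)=\varphi^k(0)^m$ because $(10)^m\sim_1(01)^m$. Other coincidences — exchanging the two ends, complementing types, and matching a full end against a short partial end — are of the same flavor but subtler, and some must be verified directly from the difference formula rather than by a clean rewriting. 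Establishing these equalities (the upper bound: equal type $\Rightarrow\ \sim_k$) is routine once the lifting property and the congruence are in hand; pinning down the exact finite list of coincidences and proving that there are no others is the delicate point.

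The main obstacle is the \emph{lower bound}: that two factors of distinct type are never $k$-binomially equivalent. For this I would, for each pair of distinct admissible types, produce a word $x$ with $|x|\le k$ on which the two factors disagree, by feeding the boundary data into the difference formula of Section~\ref{sec:occ} and showing that the relevant multiplicity coefficient $m(x)$ is nonzero and cannot be cancelled by the remaining contributions. Controlling the coefficients $m(x)$ for the Thue--Morse morphism is exactly the content of Section~\ref{sec:mult}, and this non-cancellation analysis is where the real difficulty lies. Granting completeness of the type invariant, the theorem reduces to a finite combinatorial count: the admissible boundary data are the tuples $(i,j,c_1,c_2)$ with $0\le i,j\le 2^k-1$ and $i+j\equiv -n\pmod{2^k}$, where a block meeting $w$ at full length ($i=0$ or $j=0$) carries no type. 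Counting these gives $4\cdot 2^k-3$ tuples when $n\equiv 0\pmod{2^k}$ (the extra one being the all-blocks tuple $i=j=0$) and $4\cdot 2^k-4$ otherwise; the identifications above merge exactly $2^k$ of them in each case, independently of the residue, leaving $3\cdot 2^k-3$ and $3\cdot 2^k-4$. Finally, for $n\le 2^k-1$ one invokes the general fact — of which $\varphi^k(0)\sim_k\varphi^k(1)$ exhibits a shortest instance — that $\sim_k$ restricted to words of length below $2^k$ is the identity; hence each factor is its own class and $b_{\bt,k}(n)=p_\bt(n)$.
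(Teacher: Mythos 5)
Your plan follows the paper's strategy almost exactly: the congruence and lifting properties are Lemma~\ref{lem:cancel} and Corollary~\ref{cor:phik}, the cornerstone $\varphi^k(0)\sim_k\varphi^k(1)$ is Lemma~\ref{lem:equiv-not}, your boundary data $(i,j,c_1,c_2)$ is the ``type of order~$k$'' of Section~\ref{sec:types}, and your tallies ($4\cdot 2^k-3$, resp.\ $4\cdot 2^k-4$, raw tuples reduced by exactly $2^k$ identifications) are arithmetically consistent with Theorem~\ref{thm:numberTypes}. You also correctly locate the two hard points --- completeness of the type invariant (the lower bound) and the exact list of identifications --- which the paper settles via the cutting-set/desubstitution machinery, Lemma~\ref{lem:relationCuts}, and the case analyses of Lemmas~\ref{lem:nonlengthmod}, \ref{lem:lengthMod} and~\ref{lem:deletePref}, all powered by the multiplicity estimate $m(01)-m(10)=2^{n(n-1)/2}>0$ of Proposition~\ref{pro:diffpower}. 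Those parts are admitted sketches, which is acceptable in a plan.

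There are, however, two concrete gaps. First, for $n\le 2^k-1$ you invoke ``the general fact that $\sim_k$ restricted to words of length below $2^k$ is the identity.'' This cannot simply be cited: it is not proved in the paper and is a substantial claim in its own right. The paper only establishes the weaker statement it actually needs, namely that two distinct \emph{factors of $\bt$} of length at most $2^k-1$ are never $k$-binomially equivalent (Proposition~\ref{prop:smallWords} together with the cancellation property), and that proposition is itself a nontrivial argument using types of lower order $j\le k-2$, cube- and overlap-freeness of $\bt$, and the same multiplicity computations; your plan gives no route to it. Second, your count of equivalence classes silently assumes that every admissible boundary tuple $(i,j,c_1,c_2)$ is realized by an actual factor of $\bt$ of length $n$; without this you only obtain an upper bound on $b_{\bt,k}(n)$. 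The paper proves realizability separately (Proposition~\ref{pro:factorsappear}, an explicit construction via the base-$2$ sum-of-digits description of $\bt$), and this step should appear in your argument as well.
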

Observe that $3\cdot 2^k-4$ is exactly the number of words of length $2^k-1$ in $\mathbf{t}$, for $k\neq 2 $. Indeed, the factor complexity of $\mathbf{t}$ is well known \cite[Corollary~4.10.7]{CANT}. 

\begin{proposition}\cite{Brlek}\label{pro:ptm}
The factor complexity $p_\mathbf{t}$ of the Thue--Morse word is given by 
$p_{\mathbf{t}}(0)=1$,
$p_{\mathbf{t}}(1)=2$,
$p_{\mathbf{t}}(2)=4$ and for $n\ge 3$, 
$$p_{\mathbf{t}}(n)=\left\{
    \begin{array}{cl}
      4n-2\cdot 2^m-4,& \text{ if }2\cdot 2^m<n\le 3\cdot 2^m;\\
      2n+4\cdot 2^m-2,& \text{ if }3\cdot 2^m<n\le 4\cdot 2^m.\\
    \end{array}\right.$$
\end{proposition}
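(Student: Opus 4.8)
The plan is to compute the second difference of the complexity function, $b(n):=p_\mathbf{t}(n+2)-2p_\mathbf{t}(n+1)+p_\mathbf{t}(n)$, via the bispecial factors of $\mathbf{t}$, and then to recover $p_\mathbf{t}$ by summing twice. Recall the standard tools. For a factor $w$ write $d^-(w)=\#\{a\in A: aw\in\Fac(\mathbf{t})\}$ and $d^+(w)=\#\{b\in A: wb\in\Fac(\mathbf{t})\}$ for its left and right valences, $m(w)=\#\{(a,b)\in A^2: awb\in\Fac(\mathbf{t})\}$, and $e(w)=m(w)-d^-(w)-d^+(w)+1$ for its bilateral multiplicity. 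Cassaigne's formula gives $b(n)=\sum_{w\in\Fac_n(\mathbf{t})}e(w)$, and $e(w)=0$ unless $w$ is \emph{bispecial}, i.e. $d^-(w)=d^+(w)=2$. Since $A$ is binary, a bispecial $w$ has $e(w)=m(w)-3\in\{-1,0,+1\}$; following the usual terminology it is \emph{strong}, \emph{neutral} or \emph{weak} accordingly.

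First I would reduce everything to a classification of the bispecial factors. The key structural input is that $\mathbf{t}=\varphi(\mathbf{t})$ with $\varphi$ a $2$-uniform \emph{recognizable} morphism: every sufficiently long factor parses uniquely into the blocks $\varphi(0)=01$ and $\varphi(1)=10$. From this I would prove a transfer lemma: $w$ is bispecial if and only if $\varphi(w)$ is bispecial, and moreover $e(\varphi(w))=e(w)$ once $|w|$ exceeds the recognizability bound. Indeed, the letter immediately following $\varphi(w)$ is the first letter of the next block, hence equals the symbol $b$ with $wb\in\Fac(\mathbf{t})$, and dually on the left; recognizability then identifies $\{(c,d):c\,\varphi(w)\,d\in\Fac(\mathbf{t})\}$ with $\{(a,b):awb\in\Fac(\mathbf{t})\}$, so $m$ and hence $e$ are unchanged. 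Combined with the complementation symmetry $u\mapsto\overline{u}$, which preserves $\Fac(\mathbf{t})$ and $e$ and pairs each bispecial $w$ with $\overline{w}\neq w$, the lemma propagates a few short \emph{seeds} to all scales by applying $\varphi$ and doubling lengths. Computing directly on short factors I expect the seeds $\varepsilon$ (with $e=+1$), the neutral pair $\{0,1\}$, the strong pair $\{01,10\}$ at length $2$, and the weak pair $\{010,101\}$ at length $3$; applying $\varphi$ yields a strong complementary pair at each length $2^k$ ($k\ge1$) and a weak complementary pair at each length $3\cdot2^k$ ($k\ge0$), while the descent direction of the lemma shows these, with $\varepsilon$, are the only bispecial factors with nonzero $e$. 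Thus $b(0)=1$, $b(2^k)=+2$ for $k\ge1$, $b(3\cdot2^k)=-2$ for $k\ge0$, and $b(n)=0$ otherwise.

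Granting this, I would integrate twice. Since $b(1)=0$, the first difference $s(n):=p_\mathbf{t}(n+1)-p_\mathbf{t}(n)$ satisfies $s(n)=2+\sum_{j=2}^{n-1}b(j)$ for $n\ge1$; the partial sums of the $\pm2$ spikes telescope, the spike $+2$ at $j=2^{m+1}$ raising $s$ from $2$ to $4$ and the spike $-2$ at $j=3\cdot2^m$ lowering it back, so that $s(n)=4$ on each interval $2\cdot2^m<n\le3\cdot2^m$ and $s(n)=2$ on each interval $3\cdot2^m<n\le4\cdot2^m$. A second summation $p_\mathbf{t}(n)=1+\sum_{j=0}^{n-1}s(j)$, split according to these two regimes, then produces the piecewise-linear expressions $4n-2\cdot2^m-4$ and $2n+4\cdot2^m-2$; the values $p_\mathbf{t}(0),p_\mathbf{t}(1),p_\mathbf{t}(2)$ and $s(0)=1$ are checked by hand as initial conditions.

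The hard part will be the classification, and in particular the descent direction: proving that no non-neutral bispecial factor is missed and that the strong/weak alternation is exactly the claimed one. This rests on the recognizability of $\varphi$ and on a careful analysis of how the four two-sided extensions $awb$ of a factor behave under desubstitution — the boundary letters sitting between consecutive $\varphi$-blocks are precisely what distinguishes the strong configuration at length $2^k$ from the weak one at length $3\cdot2^k$, and determining the recognizability bound (so as to pin down the short seeds by hand and invoke $e(\varphi(w))=e(w)$ beyond it) is where the real work lies. Once the transfer of $e$ under $\varphi$ is established on the finitely many relevant short bispecial factors, induction on the length propagates it to all scales.
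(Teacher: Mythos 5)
The paper does not prove this proposition at all: it is imported verbatim from Brlek's 1989 enumeration of the factors of $\mathbf{t}$ (and restated as in \cite[Cor.~4.10.7]{CANT}), so there is no internal proof to compare against. Your route --- Cassaigne's second-difference formula $b(n)=\sum_{w\in\Fac_n(\mathbf{t})}e(w)$ together with a classification of the bispecial factors --- is a genuinely different, self-contained argument, and all of its checkable claims are correct: the seeds are exactly $\varepsilon$ (with $e=1$), the neutral pair $\{0,1\}$, the strong pair $\{01,10\}$ and the weak pair $\{010,101\}$ (e.g.\ all four of $0010,0011,1010,1011$ occur while $00100$ and $10101$ do not, the latter being an overlap); the resulting spike pattern $b(2^k)=+2$, $b(3\cdot 2^k)=-2$ integrates, with $s(1)=2$ and $p_\mathbf{t}(0)=1$, to precisely the two stated linear pieces (I verified the off-by-one at the breakpoints: $s(3\cdot 2^m)=4$ and $s(4\cdot 2^m)=2$). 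The part you flag as the real work --- that every bispecial factor of length at least $4$ is a full $\varphi$-image of a bispecial factor with the same bilateral multiplicity --- is exactly where recognizability enters, and it is worth noting that the paper itself supplies the needed input: the discussion before Lemma~\ref{lemma: bijection cut k cut k-1} shows $\Cut_1(u)$ is a singleton for $|u|\ge 4$, which gives both that a non-synchronized factor is not bispecial (its missing boundary letter is forced) and that $m(\varphi(w))=m(w)$ for $|w|\ge 2$. So your proof could be closed entirely with the paper's own desubstitution machinery, whereas Brlek's original argument counts special factors directly to obtain the first difference; what your approach buys is a conceptually cleaner derivation (the strong/weak alternation at lengths $2^k$ and $3\cdot 2^k$ is visible at a glance), at the cost of having to establish the transfer and descent lemmas that Brlek's bare-hands count avoids.
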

There are $2$ factors of length $1=2^1-1$ and $6$ factors of length $3=2^2-1$. The number of factors of $\mathbf{t}$ of length $2^k-1$ for $k\ge 3$ is given by $
2(2^k -1)+4 \cdot 2^{k-2} -2 =
3\cdot 2^k-4$, 
$$(p_{\mathbf{t}}(2^k-1))_{k\ge 0}=1,2,6, 20, 44, 92, 188, 380, 764, 1532,\ldots$$
which is exactly one of two values stated in our main result, Theorem~\ref{the:main}.

\subsection{Basic facts about binomial coefficients}
We collect some standard results about binomial coefficients.
\begin{lemma}\label{lem:binomial}
Let $u,v$ be two words and let $a,a'$ be two letters. Then we have
    $$\binom{au}{a'v}=\binom{u}{a'v}+\delta_{a,a'}\binom{u}{v}$$
and 
    $$\binom{ua}{va'}=\binom{u}{va'}+\delta_{a,a'}\binom{u}{v}\ .$$
\end{lemma}

\begin{lemma}\label{lem:binomial2}
Let $s,t,w$ be three words over $A$. Then we have
    $$\binom{sw}{t}=\sum_{\substack{uv=t\\ u,v\in A^*}} \binom{s}{u}\binom{w}{v}\ .$$
\end{lemma}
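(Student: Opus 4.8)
The plan is to prove the identity by a direct combinatorial bijection: I would count the occurrences of $t$ as a subword of the concatenation $sw$ and group them according to how the selected positions split between the $s$-part and the $w$-part. Write $t=t_1\cdots t_m$ and index the letters of $sw$ so that positions $1,\ldots,|s|$ carry the letters of $s$ and positions $|s|+1,\ldots,|s|+|w|$ carry those of $w$. By definition, $\binom{sw}{t}$ is the number of strictly increasing sequences of indices $i_1<\cdots<i_m$ that select the letters of $t$ in order.

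First I would observe that each such occurrence determines a unique cut point. Because the indices are increasing, there is a well-defined integer $p\in\{0,\ldots,m\}$ with $i_1,\ldots,i_p\le|s|<i_{p+1},\ldots,i_m$; that is, the first $p$ chosen letters lie in $s$ and the remaining $m-p$ lie in $w$. This associates to the occurrence the factorization $t=uv$ with $u=t_1\cdots t_p$ and $v=t_{p+1}\cdots t_m$. The key, and essentially only, point requiring care is precisely this well-definedness: it is the monotonicity of the indices that forbids selecting a letter from $w$ before one from $s$, so the prefix/suffix split of $t$ is forced.

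Then I would show that, for a fixed factorization $t=uv$, the occurrences with this cut are in bijection with pairs consisting of an occurrence of $u$ in $s$ and an occurrence of $v$ in $w$: the indices $i_1<\cdots<i_p$ form an occurrence of $u$ in $s$, while the shifted indices $(i_{p+1}-|s|)<\cdots<(i_m-|s|)$ form an occurrence of $v$ in $w$, and this correspondence is plainly reversible. Hence the number of occurrences with cut $t=uv$ equals $\binom{s}{u}\binom{w}{v}$. Summing over all factorizations $uv=t$ (equivalently over $p=0,\ldots,m$) yields the claimed formula, the edge cases $u=\varepsilon$ or $v=\varepsilon$ being handled uniformly once one recalls $\binom{s}{\varepsilon}=\binom{w}{\varepsilon}=1$.

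Alternatively, one can avoid the bijection and argue by induction on $|s|$ using Lemma~\ref{lem:binomial}: the base case $s=\varepsilon$ is immediate since only the factorization with $u=\varepsilon$ survives, and peeling off the first letter of $s$ and applying the one-letter recurrence reorganizes the sum over factorizations. I expect the bijective argument to be the cleaner of the two, and I do not anticipate a genuine obstacle; the only subtlety is to state carefully that the increasing order of the selected positions makes the cut point unique.
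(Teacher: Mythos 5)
Your argument is correct: the paper states this lemma as a standard fact (it is classical, cf.\ Lothaire) and gives no proof, and your bijective argument --- splitting each occurrence of $t$ in $sw$ at the unique cut point determined by the monotonicity of the selected indices, then summing over the $|t|+1$ cut positions --- is exactly the standard proof one would supply. The one point worth stating explicitly is that the sum $\sum_{uv=t}$ is indexed by cut positions (so it always has $|t|+1$ terms, matching your parameter $p\in\{0,\dots,m\}$), which is the convention the paper uses elsewhere, e.g.\ in the proof of Lemma~\ref{lem:cancel}.
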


As a consequence of this lemma, the $k$-binomial equivalence is a congruence. Assume that $u_1\sim_k u_2$ and $v_1\sim_k v_2$, then $u_1v_1\sim_k u_2v_2$.

\begin{lemma}[Cancellation property]\label{lem:cancel}
Let $u,v,w$ be three words. We have
    $$v\sim_k w \Leftrightarrow u\, v\sim_k u\, w
\quad\text{ and }\quad 
    v\sim_k w \Leftrightarrow v\, u\sim_k w\, u\ .$$
\end{lemma}

\begin{proof}
   Since $\sim_k$ is a congruence, we only have to prove that the condition is sufficient. Assume that $v\nsim_k w$. There exists a shortest word $t$, of length at most $k$, such that 
$$\binom{v}{t}\neq\binom{w}{t}\ .$$
We compute
\begin{equation}
    \label{eq:splituvt}
    \binom{uv}{t}=
\sum_{\substack{rs=t\\ r,s\in A^*}} \binom{u}{r}\binom{v}{s}
=\binom{u}{t}+\sum_{\substack{rs=t\\ r,s\in A^+}} \binom{u}{r}\binom{v}{s}+\binom{v}{t}\ .
\end{equation}
In the above formula, $\binom{v}{s}=\binom{w}{s}$ for all $s$ shorter than $t$. Hence, we get exactly the same decomposition for $\binom{uw}{t}$ except for the last term. Thus,
  $$\binom{uv}{t}-\binom{uw}{t}=\binom{v}{t}-\binom{w}{t}\neq 0\ .$$
This means that $uv\nsim_k uw$. Proceed similarly for the second equivalence or observe that $$\binom{\widetilde{x}}{\widetilde{y}}=\binom{x}{y}$$ where $\widetilde{x}$ is the reversal of $x$.
\end{proof}

\begin{lemma}\label{lem:noneq}
    Let $u,v,u',v'$ be four words such that $u\sim_{k-1}u'$ but $u\nsim_{k}u'$ and $v\sim_{k}v'$, then $uv\nsim_k u'v'$. 
\end{lemma}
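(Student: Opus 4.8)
The plan is to mimic the computation used in the proof of Lemma~\ref{lem:cancel}, exploiting the splitting formula of Lemma~\ref{lem:binomial2} while keeping careful track of lengths. First, since $u\sim_{k-1}u'$ but $u\nsim_k u'$, any word witnessing the failure of $k$-equivalence must have length exactly $k$; so I would fix a word $t$ with $|t|=k$ and $\binom{u}{t}\neq\binom{u'}{t}$. This single $t$ will serve as the test word establishing $uv\nsim_k u'v'$.

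Next I would expand both $\binom{uv}{t}$ and $\binom{u'v'}{t}$ via Lemma~\ref{lem:binomial2}, writing
$$\binom{uv}{t}=\binom{u}{t}+\binom{v}{t}+\sum_{\substack{rs=t\\ r,s\in A^+}}\binom{u}{r}\binom{v}{s},$$
together with the analogous identity for $u'v'$, where the two extremal summands arise from the splittings $(r,s)=(t,\varepsilon)$ and $(r,s)=(\varepsilon,t)$ using $\binom{w}{\varepsilon}=1$. Subtracting the two expansions termwise, the goal is to show that everything cancels except the contribution of $\binom{u}{t}-\binom{u'}{t}$.

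The cancellations come entirely from the length bounds. In each middle term the factors satisfy $1\le|r|\le k-1$ and $1\le|s|\le k-1$; hence $\binom{u}{r}=\binom{u'}{r}$ by $u\sim_{k-1}u'$ and $\binom{v}{s}=\binom{v'}{s}$ by $v\sim_k v'$ (indeed $|s|<k$), so each middle term matches its primed counterpart. The extremal term $\binom{v}{t}-\binom{v'}{t}$ vanishes as well, this time because $|t|=k$ and $v\sim_k v'$. What remains is
$$\binom{uv}{t}-\binom{u'v'}{t}=\binom{u}{t}-\binom{u'}{t}\neq 0,$$
which proves $uv\nsim_k u'v'$. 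I do not expect any real obstacle here: the only delicate point is the bookkeeping of which hypothesis ($\sim_{k-1}$ on the $u$-side, $\sim_k$ on the $v$-side) is needed for each summand, and in particular noticing that the one place where the full strength $\sim_k$ of the assumption on $v$ is used is precisely the extremal equality $\binom{v}{t}=\binom{v'}{t}$ at length exactly $k$.
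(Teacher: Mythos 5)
Your proof is correct and follows essentially the same route as the paper: the paper's own argument also picks a witness $t$ of length exactly $k$ (guaranteed by $u\sim_{k-1}u'$) and applies the expansion~\eqref{eq:splituvt} from the proof of Lemma~\ref{lem:cancel}, with the cancellations you describe. Your write-up merely makes explicit the bookkeeping that the paper leaves to the reader.
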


\begin{proof}
    There exists a word of length $k$ such that $\binom{u}{t}\neq\binom{u'}{t}$ but equality holds for all words shorter than $t$.  One may apply exactly the same reasoning as in~\eqref{eq:splituvt}.
\end{proof}


\section{Occurrences of subwords in images by $\varphi$}\label{sec:occ}

The aim of this section is to obtain an expression for coefficients of the form $\binom{\varphi(w)}{u}$. Even though we are mainly interested in the Thue--Morse word, our observations can be applied to any non-erasing morphism as summarized by Theorem~\ref{the:general_statement}. 

A {\em multiset} is just a set where elements can be repeated with a (finite) integer multiplicity. If $x$ belongs to a multiset $M$, its multiplicity is denoted by $m_M(x)$ or simply $m(x)$. If $x\not\in M$, then $m_M(x)=0$. If we enumerate the elements of a multiset, we adopt the convention to write multiplicities with indices. The {\em multiset sum} $M\uplus N$ of two multisets $M,N$ is the union of the two multisets and the multiplicity of an element is equal to the sum of the respective multiplicities, i.e., for $x\in M\cup N$, $m_{M\uplus N}(x)=m_M(x)+m_N(x)$. For instance, $\{1_2,3_1,4_3\}\uplus \{1_3,2_1,3_4\}=\{1_5,2_1,3_5,4_3\}$.  

Let us start with an introductory example. We hope that this example will forge the intuition of the reader about the general scheme.
\begin{example}\label{exa:intro}
We want to compute
$$\binom{\varphi(0110001)}{u}\quad\text{ with }u=01011\ .$$
The word $w=\varphi(0110001)$ belongs to $\{01,10\}^*$. It can be factorized with consecutive blocks $b_1b_2\cdots b_7$ of length~$2$. To count the number of occurrences of the subword $u$ in the image by $\varphi$ of a word, several cases need to be taken into account:
\begin{itemize}
  \item the five symbols of $u$ appear in pairwise distinct $2$-blocks of $w$ (each $2$-block contains both $0$ and $1$ exactly once), and there are 
$$\binom{|w|/2}{|u|}=\binom{7}{5}$$
such choices;
\item the prefix $01$ of $u$ is one of the $2$-blocks $b_i$ of $w$ and the last three symbols of $u$ appear in subsequent pairwise distinct $2$-blocks $b_j$, $j>i$. Since $\varphi(0)=01$, we have to count the number of occurrences of the subword $0z$, for all words $z$ of length $3$, in the preimage of $w$. There are 
$$\sum_{z\in A^3} \binom{0110001}{0z}=\binom{6}{3}+1$$
such choices;
\item the first symbol of $u$ appear in a $2$-block, the first occurrence of $10$ in $u$ appears as a whole $2$-block and the last two symbols $11$ occur in subsequent pairwise distinct $2$-blocks. There are 
$$\sum_{x\in A}\sum_{z\in A^2} \binom{0110001}{x1z}
=\binom{5}{2}+\binom{2}{1}\binom{4}{2}$$
such choices;
\item the first two symbols of $u$ appear in distinct $2$-blocks, the second occurrence of $01$ in $u$ appears as a whole $2$-block and the last symbol $1$ occurs in a subsequent $2$-block. There are 
$$\sum_{x\in A^2}\sum_{z\in A} \binom{0110001}{x0z}
=\binom{3}{2}\binom{3}{1}+\binom{4}{2}\binom{2}{1}+\binom{5}{2}$$
such choices;
\item finally, $u$ can appear as two $2$-blocks $\varphi(0)$ followed by a single letter occurring in a subsequent $2$-block. There are 
$$\sum_{z\in A} \binom{0110001}{00z}
=\binom{3}{1}+\binom{2}{1}\binom{2}{1}+\binom{3}{1}$$
such choices.
\end{itemize}
\end{example}
The general scheme behind this computation is expressed by Theorem~\ref{the:phifac} given below. The reader can already feel that we need to take into account particular factorizations of $u$ with respect to occurrences of a factor $\varphi(0)$ or $\varphi(1)$. The five cases discussed in Example~\ref{exa:intro} correspond to the following factorizations of $u$:
$$01011,\ \varphi(0)011,\ 0\varphi(1)11,\ 01\varphi(0)1,\ \varphi(0)\varphi(0)1$$ 
We thus introduce the notion of a $\varphi$-factorization.

\begin{definition}[$\varphi$-factorization]
    If a word $u\in A^*$ contains a factor $01$ or $10$, then it can be factorized as
\begin{equation}
    \label{eq:factorization}
    u=w_0\, \varphi(a_1)\, w_1 \cdots w_{k-1} \, \varphi(a_k)\, w_k
\end{equation}
for some $k\ge 1$, $a_1,\ldots,a_k\in A$ and $w_0,\ldots,w_k\in A^*$ (some of these words are possibly empty). We call this factorization, a {\em $\varphi$-factorization} of $u$. It is coded by the $k$-tuple of positions where the $\varphi(a_i)$'s occurs: 
$$\kappa=(|w_0|, |w_0\varphi(a_1)w_1|,|w_0\varphi(a_1)w_1\varphi(a_2)w_2|,\ldots,|w_0\varphi(a_1)w_1\varphi(a_2)w_2\cdots w_{k-1}|)\ .$$
The set of all the $\varphi$-factorizations of $u$ is denoted by $\phifac(u)$. 
\end{definition}

Since $|\varphi(a)|=2$, for all $a\in A$, observe that if $(i_1,\ldots,i_k)$ codes a $\varphi$-factorization, then $i_{j+1}-i_j\ge 2$ for all $j$. Note that $u$ starts with a prefix $01$ or $10$ if and only if there are $\varphi$-factorizations of $u$ coded by tuples starting with $0$.

\begin{example}
Consider the word $010110$. It has $9$ $\varphi$-factorizations. These factorizations and the coding tuples are depicted in Figure~\ref{fig:phifact}.

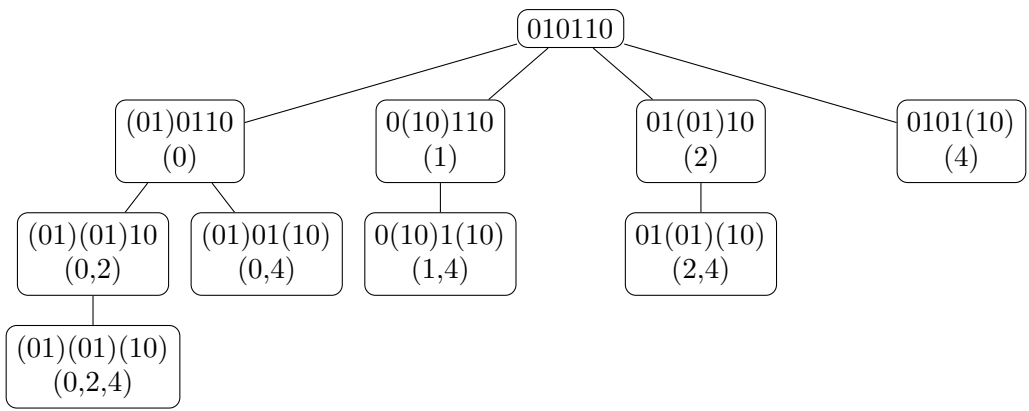
\begin{figure}[h!tb]
    \centering
\begin{tikzpicture}[
  every node/.style = {shape=rectangle, rounded corners, draw, align=center},
level 1/.style={sibling distance=9em},
level 2/.style={sibling distance=6em}
]
  \node {010110}
    child { node {(01)0110\\ (0)} 
     child {node {(01)(01)10\\ (0,2)} 
      child {node {(01)(01)(10)\\ (0,2,4)}}
           }
     child {node {(01)01(10)\\ (0,4)} }
          }
    child { node {0(10)110\\ (1)}
     child { node {0(10)1(10)\\ (1,4)}} 
          }
    child { node {01(01)10\\ (2)} 
     child { node {01(01)(10)\\ (2,4)}} }
    child { node {0101(10)\\ (4)} }
;
\end{tikzpicture}
\caption{The tree of $\varphi$-factorizations of $010110$.}
    \label{fig:phifact}
\end{figure}
\end{example}

We define a map $f$ from $A^*$ to the set of finite multisets of words over $A^*$. This map is defined as follows. 
\begin{definition}
If $u\in 0^*\cup 1^*$, then $f(u)=\emptyset$ (the meaning for this choice will be clear with Theorem~\ref{the:phifac}). If $u$ is not of this form, it contains a factor $01$ or $10$. With every $\varphi$-factorization $\kappa\in\phifac(u)$ of $u$ of the form~\eqref{eq:factorization}
$$u=w_0\, \varphi(a_1)\, w_1 \cdots w_{k-1} \, \varphi(a_k)\, w_k$$
for some $k\ge 1$, $a_1,\ldots,a_k\in A$ and $w_0,\ldots,w_k\in A^*$,  we define the language 
$$\mathcal{L}(u,\kappa):=A^{|w_0|}\, a_1\, A^{|w_1|} \cdots A^{|w_{k-1}|} a_k A^{|w_k|}$$
of words of length $|u|-k$ (there are $2^{|u|-2k}$ of these words\footnote{We have all the words of length $|u|-k$ where in $k$ positions the occurring symbol is given.}). Such a language is considered as a multiset whose elements have multiplicities equal to $1$. Now, $f(u)$ is defined as the multiset sum (i.e., we sum the multiplicities) of the above languages for all $\varphi$-factorizations of $u$, i.e., 
$$f(u):=\biguplus_{\kappa\in \phifac(u)} \mathcal{L}(u,\kappa)\ .$$
\end{definition}

Note that for $u\not\in 0^*\cup 1^*$, $f(u)$ only contains words of length less than $|u|$. In particular, since there always exist $\varphi$-factorizations coded by a $1$-tuple, then $f(u)$ contains words of length $|u|-1$. If $|u|\ge 2$, the languages of the form $A^{|w_0|}a_1A^{|w_1|}$ associated with this $\varphi$-factorization coded by a $1$-tuple contains at least one word not in $0^*\cup 1^*$ because $|w_0|+|w_1|=|u|-1\ge 1$.

\begin{example}\label{exa:fu}
Consider the word $u=01011$. It has four $\varphi$-factorizations of the form~\eqref{eq:factorization}
$$(01)011,\ 
0(10)11,\ 
01(01)1,
(01)(01)1\ .$$
The first three are coded respectively by the $1$-tuples $(0)$, $(1)$ and $(2)$. The last one is coded by $(0,2)$. The corresponding four languages are
$$0\, A^3=\{0 000, 0 001, 0 010, 0 011, 0 100, 0 101, 0 110, 0111\}, $$
$$A\, 1\, A^2=\{0100, 0101, 0110, 0111, 1100, 1101, 1110, 1111\}, $$
$$A^2\, 0\, A=\{0000, 0001, 0100, 0101, 1000, 1001, 1100, 1101\}, $$
$$00\, A=\{000,001\}\ .$$
Consequently, $f(u)$ is the multiset (multiplicities are indicated as indices)
\begin{eqnarray*}
  \{000_1,001_1,0000_2,0001_2,0010_1,0011_1,0100_3,0101_3,0110_2,0111_2,\\
    1000_1,1001_1,1100_2,1101_2,1110_1,1111_1\}.
\end{eqnarray*}
\end{example}

\begin{definition}
    Now that $f$ is defined over $A^*$, we can extend it to any finite multiset $M$ of words over $A$. It is the multiset sum of the $f(v)$'s, for all $v\in M$, repeated with their multiplicities.
\end{definition}

\begin{example}
Continuing Example~\ref{exa:fu} with $u=01011$, we get 
$$f^2(u)=\{
00_{4}, 01_{2}, 10_{2}, 000_{20}, 001_{16}, 010_{28}, 011_{24}, 100_{12}, 
101_{8}, 110_{20}, 111_{16} \}$$
and
$$f^3(u)=\{
0_2, 1_2, 00_{76}, 01_{100}, 10_{44}, 11_{68}\}\ .$$
Observe that if we apply $f$ an extra time, $f^4(u)=\{0_{100},1_{44}\}$ and for all $n\ge 5$, $f^n(u)=\emptyset$.
\end{example}

\begin{remark}\label{rem:length}
    The observation made in the previous example is general. If $u$ does not belong to $0^*\cup 1^*$, then $f^{|u|-2}(u)$ contains only elements in $\{0,1,00,01,10,11 \}$ and $f^{|u|-1}(u)$ contains only elements in $\{0,1\}$. For $n\ge |u|$, $f^n(u)$ is empty.
\end{remark}

Recall that $f(u)$ is a multiset. Hence $m_{f(u)}(v)$ denotes the multiplicity of $v$ as element of $f(u)$.

\begin{theorem}\label{the:phifac} With the above notation, for all words $u,w$, we have
    $$\binom{\varphi(w)}{u}= \binom{|w|}{|u|} + \sum_{\substack{\kappa\in \phifac(u) \\v\in \mathcal{L}(u,\kappa)}} \binom{w}{v}=\binom{|w|}{|u|} + \sum_{v\in f(u)}m_{f(u)}(v)\, \binom{w}{v}
\ .$$
\end{theorem}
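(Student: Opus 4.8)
The plan is to prove the formula by counting occurrences of $u$ as a subword in $\varphi(w)$ according to how each chosen letter-position of $u$ interacts with the block structure of $\varphi(w)$. Recall $\varphi(w)$ consists of $|w|$ consecutive $2$-blocks $\varphi(w_1)\varphi(w_2)\cdots\varphi(w_{|w|})$, each block being $01$ or $10$. An occurrence of $u$ as a subword selects an increasing sequence of positions in $\varphi(w)$. I would classify each occurrence by the set of blocks in which two consecutive selected letters land: a block either contributes no selected letter, exactly one, or both (i.e.\ a full $\varphi(a_i)=01$ or $10$ appears as a block and is matched by a corresponding factor $\varphi(a_i)$ of $u$).

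\emph{Key steps, in order.} First I would set up the block decomposition and define, for a given occurrence, the maximal collection of blocks that are used \emph{entirely} (both letters selected, and selected consecutively so they form $\varphi(a_i)$ inside $u$); reading $u$ left to right, the positions of these full blocks induce exactly a $\varphi$-factorization $u=w_0\varphi(a_1)w_1\cdots\varphi(a_k)w_k$, where the letters of the $w_j$ are the ones matched by blocks contributing only a single letter. Second, I would count, for a \emph{fixed} $\varphi$-factorization $\kappa$ with $k$ full blocks, how many subword-occurrences of $u$ in $\varphi(w)$ realize exactly this factorization. Each of the $k$ full blocks must equal the corresponding $\varphi(a_i)$, so it comes from a letter $a_i$ in $w$; each of the remaining $|u|-2k$ single letters is pulled from some block, and since every block of $\varphi(w)$ contains both $0$ and $1$ exactly once, a single letter $b$ can be taken from any block whose preimage letter $c$ satisfies nothing more than $c$ being arbitrary — so the \emph{block} is what we must choose, and its preimage letter is free. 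This is precisely why the count over $\kappa$ equals $\sum_{v\in\mathcal L(u,\kappa)}\binom{w}{v}$: the word $v\in A^{|w_0|}a_1A^{|w_1|}\cdots a_kA^{|w_k|}$ records the forced letters $a_i$ at the full-block positions and ranges over all choices $A$ at the single-letter positions, and $\binom{w}{v}$ counts the increasing selections of blocks from $w$ matching this pattern. Third, I would isolate the degenerate case $k=0$: an occurrence using \emph{no} full block selects $|u|$ letters from $|u|$ distinct blocks, one letter per block, with each letter freely available in every block; the number of ways to pick the $|u|$ ordered distinct blocks is $\binom{|w|}{|u|}$, which is the leading term. (When $u\in 0^*\cup1^*$ there are no $\varphi$-factorizations, $f(u)=\emptyset$, and only this term survives — consistent with $\binom{\varphi(w)}{u}=\binom{|w|}{|u|}$, which one can check directly since $\varphi(w)$ has exactly $|w|$ copies of each letter.) Finally, summing over all $\kappa\in\phifac(u)$ and collecting multiplicities gives the middle expression, and rewriting $\sum_{\kappa}\sum_{v\in\mathcal L(u,\kappa)}\binom{w}{v}=\sum_{v\in f(u)}m_{f(u)}(v)\binom{w}{v}$ is immediate from the definition of $f$ as the multiset sum $\biguplus_\kappa\mathcal L(u,\kappa)$.

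\emph{The main obstacle} is establishing a clean bijection in the second step: I must argue that every subword-occurrence of $u$ in $\varphi(w)$ determines a \emph{unique} $\varphi$-factorization $\kappa$ and a unique $v\in\mathcal L(u,\kappa)$, and conversely that each pair $(\kappa,v)$ together with an occurrence of $v$ in $w$ yields a distinct occurrence of $u$. The subtlety is the canonical choice of which full blocks to declare: two adjacent selected letters that happen to span a single block $01$ might be interpreted either as a full block or as two single letters from two different blocks, so I would fix the convention (matching the definition of $\phifac$) that a full block is used precisely when the block boundary falls between the two selected positions and they are consecutive selections — and then verify this convention makes the map well-defined and reversible. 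I expect the bookkeeping to hinge on the observation that, because each $2$-block contains each letter exactly once, the \emph{identity} of the letter drawn from a single-use block is automatically determined by the block's preimage letter, which is exactly what lets the free ranging over $A$ in $\mathcal L(u,\kappa)$ correspond bijectively to the choice of block in $w$ via $\binom{w}{v}$.
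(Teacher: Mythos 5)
Your proposal is correct and is essentially the paper's own argument: the paper proves this theorem by exactly the block-decomposition count you describe (presented there only as a brief sketch referring back to the introductory example), classifying each subword occurrence of $u$ in $\varphi(w)$ by which $2$-blocks contribute both letters — these induce the $\varphi$-factorization $\kappa$ — while the singly-used blocks may have either preimage letter, which is why $v$ ranges over $\mathcal{L}(u,\kappa)$ and the count becomes $\binom{w}{v}$, with the term $\binom{|w|}{|u|}$ covering the case of no fully-used block. One small quibble with your final paragraph: no ``convention'' actually needs to be fixed, since a given occurrence determines unambiguously which blocks have both positions selected (and note a fully-used block is one with \emph{no} block boundary between the two consecutive selected positions, rather than one spanning a boundary as your wording suggests); the two readings of an adjacent $01$ or $10$ correspond to genuinely different position choices and are correctly counted as distinct occurrences.
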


\begin{proof}
   The reader should probably reconsider the introductory Example~\ref{exa:intro}. The result directly follows from the definitions of $\varphi$-factorization and $\mathcal{L}(u,\kappa)$. If $u$ belongs to $0^*\cup 1^*$, then there is no $\varphi$-factorization of $u$. The stated formula is reduced to the first term: we have to pick the symbols of $u$ in pairwise distinct $2$-blocks of $\varphi(w)$. Otherwise, we also have to consider all the cases corresponding to the $\varphi$-factorizations of $u$ where some factors of $u$ are realized by $\varphi(0)$ or $\varphi(1)$.
\end{proof}

\begin{corollary}\label{cor:phik} 
Let $k\ge 1$. For all words $u,v$, we have
    $$u\sim_k v\Rightarrow \varphi(u)\sim_{k+1}\varphi(v)\ .$$
\end{corollary}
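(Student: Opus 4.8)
The plan is to fix an arbitrary word $x$ with $|x|\le k+1$ and to compare $\binom{\varphi(u)}{x}$ with $\binom{\varphi(v)}{x}$ by applying Theorem~\ref{the:phifac}, reading $x$ as the subword to be counted and $u$ (respectively $v$) as the preimage. This yields
$$\binom{\varphi(u)}{x}= \binom{|u|}{|x|} + \sum_{y\in f(x)}m_{f(x)}(y)\,\binom{u}{y}
\quad\text{and}\quad
\binom{\varphi(v)}{x}= \binom{|v|}{|x|} + \sum_{y\in f(x)}m_{f(x)}(y)\,\binom{v}{y}\ .$$
Since the multiset $f(x)$ and the multiplicities $m_{f(x)}(y)$ depend only on $x$, the two right-hand sides share the same combinatorial data; only the leading binomial of integers and the values $\binom{u}{y}$ versus $\binom{v}{y}$ can differ.

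Next I would dispatch the leading term. From the definition of $k$-binomial equivalence, $u\sim_k v$ implies $u\sim_1 v$, hence $\binom{u}{0}=\binom{v}{0}$ and $\binom{u}{1}=\binom{v}{1}$, so $|u|=|v|$; thus $\binom{|u|}{|x|}=\binom{|v|}{|x|}$ and the first terms agree. For the sums, the crucial point is the length-dropping behaviour of $f$ recorded in Remark~\ref{rem:length}: every word $y\in f(x)$ satisfies $|y|\le |x|-1\le k$ (and $f(x)=\emptyset$ when $x\in 0^*\cup 1^*$, in which case the sums are empty). Because all such $y$ have length at most $k$ and $u\sim_k v$, we get $\binom{u}{y}=\binom{v}{y}$ for every $y$ appearing in the sum, so the two sums coincide term by term.

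Combining these two observations gives $\binom{\varphi(u)}{x}=\binom{\varphi(v)}{x}$ for every word $x$ of length at most $k+1$, which is exactly $\varphi(u)\sim_{k+1}\varphi(v)$.

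I do not expect a genuine obstacle here: the argument is essentially a bookkeeping consequence of Theorem~\ref{the:phifac}. The one step that must be invoked with care is the length bound on $f(x)$, since this is precisely the mechanism that converts $k$-equivalence of the preimages into $(k+1)$-equivalence of the images; the gain of one in the index comes entirely from the fact that applying $f$ strictly decreases the lengths of the subwords that need to be counted.
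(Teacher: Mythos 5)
Your proposal is correct and follows exactly the paper's own argument: apply Theorem~\ref{the:phifac} to $\binom{\varphi(u)}{x}$ and $\binom{\varphi(v)}{x}$, note that $|u|=|v|$ makes the leading terms agree, and use the fact that every element of $f(x)$ has length at most $|x|-1\le k$ so that $u\sim_k v$ forces the sums to coincide. The only difference is that you spell out the justification of $|u|=|v|$ via $u\sim_1 v$, which the paper leaves implicit.
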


\begin{proof}
    Let $t$ be a word of length at most $k+1$. From Theorem~\ref{the:phifac}, we have
$$\binom{\varphi(u)}{t}=\binom{|u|}{|t|}+\sum_{s\in f(t)} m_{f(t)}(s)\, \binom{u}{s}.$$
For all $s\in f(t)$, we have $|s|\le k$ and thus $\binom{u}{s}=\binom{v}{s}$. The conclusion follows since $|u|=|v|$.
\end{proof}

Theorem~\ref{the:phifac} can be extended to iterates of $\varphi$. If we apply it twice, we get
\begin{eqnarray*}
\binom{\varphi^2(w)}{u}&=& 
\binom{|\varphi(w)|}{|u|} + \sum_{v\in f(u)}m_{f(u)}(v)\binom{\varphi(w)}{v}\\
&=&\binom{|\varphi(w)|}{|u|} + \sum_{v\in f(u)}m_{f(u)}(v)
\left[ \binom{|w|}{|v|} + \sum_{z\in f(v)}m_{f(v)}(z)\binom{w}{z}
\right]\\
&=&\binom{|\varphi(w)|}{|u|} + \sum_{v\in f(u)}m_{f(u)}(v)
 \binom{|w|}{|v|} + \sum_{x\in f^2(u)}m_{f^2(u)}(x)\binom{w}{x}.
\end{eqnarray*}
The last equality comes from the fact that
$$\sum_{v\in f(u)} \sum_{z\in f(v)} m_{f(u)}(v)\, m_{f(v)}(z)
=\sum_{x\in f^2(u)}m_{f^2(u)}(x)\ .$$
Indeed, $z$ appears $m_{f(v)}(z)$ times in the multiset $f(v)$ and $v$ itself appears $m_{f(u)}(v)$ times in $f(u)$. Thus $z$ appears $m_{f(u)}(v)\, m_{f(v)}(z)$ in $f^2(u)$.\\
We set $f^0(u)=\{u\}$ (where $u$ has multiplicity one).

\begin{corollary}\label{cor:phifac}
With the above notation, for $\ell\ge 1$ and all words $u,w$, we have
$$\binom{\varphi^\ell(w)}{u}=
\sum_{i=0}^{\ell-1} \sum_{v\in f^i(u)}m_{f^i(u)}(v)
 \binom{|\varphi^{\ell-i-1}(w)|}{|v|} +
 \sum_{x\in f^\ell(u)}m_{f^\ell(u)}(x)\binom{w}{x}\ .$$
\end{corollary}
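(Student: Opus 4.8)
The plan is to prove Corollary~\ref{cor:phifac} by induction on $\ell$, using Theorem~\ref{the:phifac} as the base case and the displayed computation for $\ell=2$ (which appears just above the statement) as the model for the inductive step. The base case $\ell=1$ is precisely Theorem~\ref{the:phifac}, since $f^0(u)=\{u\}$ with multiplicity one forces the $i=0$ term of the sum to read $\binom{|\varphi^{0}(w)|}{|u|}=\binom{|w|}{|u|}$, matching the first term there, while the final sum over $f^1(u)=f(u)$ reproduces the second term.

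For the inductive step, I would assume the formula holds for some $\ell\ge 1$ and apply Theorem~\ref{the:phifac} to the last (deepest) binomial coefficients. Concretely, the term $\sum_{x\in f^\ell(u)}m_{f^\ell(u)}(x)\binom{\varphi(w')}{x}$ with $w'=w$ is the only place where $\varphi$ still needs to be peeled off once more when we pass from $\varphi^\ell$ to $\varphi^{\ell+1}$; here I would replace $w$ by $\varphi(w)$ throughout, so that $\varphi^\ell(\varphi(w))=\varphi^{\ell+1}(w)$, and then expand each $\binom{\varphi(w)}{x}$ via Theorem~\ref{the:phifac} as $\binom{|w|}{|x|}+\sum_{v\in f(x)}m_{f(x)}(v)\binom{w}{v}$. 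This splits the deepest term into a "length-only" piece that becomes the new $i=\ell$ summand $\sum_{x\in f^\ell(u)}m_{f^\ell(u)}(x)\binom{|w|}{|x|}$, and a residual piece involving $f(x)$ for $x\in f^\ell(u)$.

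The one genuinely substantive step is the collapse of the double sum in that residual piece, namely verifying
$$\sum_{x\in f^\ell(u)}\ \sum_{v\in f(x)} m_{f^\ell(u)}(x)\,m_{f(x)}(v)\,\binom{w}{v}=\sum_{y\in f^{\ell+1}(u)}m_{f^{\ell+1}(u)}(y)\,\binom{w}{y}.$$
This is exactly the multiset-composition identity already justified in the excerpt for $\ell=1$: by the definition of $f$ on multisets, an element $v$ contributes to $f^{\ell+1}(u)=f(f^\ell(u))$ with multiplicity $\sum_{x\in f^\ell(u)}m_{f^\ell(u)}(x)\,m_{f(x)}(v)$, since each occurrence of $x$ in $f^\ell(u)$ spawns its own copy of $f(x)$. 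I would state this as the key lemma and note that it is the natural generalization of the identity $\sum_{v\in f(u)}\sum_{z\in f(v)}m_{f(u)}(v)\,m_{f(v)}(z)=\sum_{x\in f^2(u)}m_{f^2(u)}(x)$ proved in the text.

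The main obstacle, such as it is, is purely bookkeeping rather than conceptual: one must be careful that the shift $w\mapsto\varphi(w)$ lines up the length arguments correctly, so that the term produced at level $i$ in the inductive hypothesis, namely $\binom{|\varphi^{\ell-i-1}(\varphi(w))|}{|v|}=\binom{|\varphi^{(\ell+1)-i-1}(w)|}{|v|}$, exactly matches the claimed $i$th summand for $\ell+1$, and that the freshly created term sits at index $i=\ell$ with argument $\binom{|\varphi^{0}(w)|}{|x|}=\binom{|w|}{|x|}$. Once the index arithmetic and the multiset identity are in place, reassembling the three groups of terms gives the formula for $\ell+1$, completing the induction.
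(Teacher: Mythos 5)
Your proposal is correct and matches the paper's approach: the paper's proof is simply ``Proceed by induction on $\ell$,'' with the displayed $\ell=2$ computation and the multiset-composition identity $\sum_{v\in f(u)}\sum_{z\in f(v)}m_{f(u)}(v)\,m_{f(v)}(z)=\sum_{x\in f^2(u)}m_{f^2(u)}(x)$ serving as the template, exactly as you use them. Your bookkeeping of the index shift and the identification of the general composition identity $m_{f^{\ell+1}(u)}(v)=\sum_{x\in f^\ell(u)}m_{f^\ell(u)}(x)\,m_{f(x)}(v)$ as the key lemma are both accurate.
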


\begin{proof}
    Proceed by induction on $\ell$.
\end{proof}

When proving that two words $x,y$ are not $k$-binomially equivalent, it is convenient to find a word $u$ of length $k$ such that the difference $\binom{x}{u}-\binom{y}{u}$ is non-zero. It is therefore interesting to make the following observation.

\begin{remark}\label{rem:difference}
Since $|\varphi^i(w)|=2^i|w|$, note that the first of the two terms in the above Corollary only depends on $|w|$ and $u$. Otherwise stated, if $w,w'$ are two words of the same length, then
$$\binom{\varphi^\ell(w)}{u}-\binom{\varphi^\ell(w')}{u}=
\sum_{x\in f^\ell(u)}m_{f^\ell(u)}(x)\left[\binom{w}{x}-\binom{w'}{x}\right]\ .$$
\end{remark} 

The reader should be convinced that the following general statement holds.
\begin{theorem}\label{the:general_statement}
    Let $\Psi:A^*\to B^*$ be a non-erasing morphism and $u\in B^+,w\in A^+$ be two words.
$$\binom{\Psi(w)}{u}=\sum_{k=1}^{|u|}
\sum_{\substack{u_1,\ldots,u_k\in A^+\\ u=u_1\cdots u_k}}
\sum_{a_1,\ldots,a_k\in A} 
\binom{\Psi(a_1)}{u_1}\cdots \binom{\Psi(a_k)}{u_k}\binom{w}{a_1\cdots a_k}.$$
\end{theorem}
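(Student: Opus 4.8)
The plan is to derive the formula as the $n$-fold counterpart of the multiplicativity of binomial coefficients encoded in Lemma~\ref{lem:binomial2}, with $\Psi$ applied letter by letter; indeed Theorem~\ref{the:general_statement} is nothing but Theorem~\ref{the:phifac} freed from the special shape of the blocks $\varphi(a)$. Write $w=b_1\cdots b_n$ with $b_i\in A$, so that $\Psi(w)=\Psi(b_1)\cdots\Psi(b_n)$. The first step is to iterate Lemma~\ref{lem:binomial2}: by a straightforward induction on $n$ one obtains, for any words $s_1,\ldots,s_n$ and any word $t$,
$$\binom{s_1\cdots s_n}{t}=\sum_{\substack{t=v_1\cdots v_n\\ v_i\in B^*}}\prod_{i=1}^n\binom{s_i}{v_i},$$
the sum ranging over all factorizations of $t$ into $n$ (possibly empty) factors. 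Applying this with $s_i=\Psi(b_i)$ and $t=u$ gives
$$\binom{\Psi(w)}{u}=\sum_{\substack{u=v_1\cdots v_n\\ v_i\in B^*}}\prod_{i=1}^n\binom{\Psi(b_i)}{v_i}.$$

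The second step is to regroup this sum according to which factors $v_i$ are nonempty. In each term, let $i_1<\cdots<i_k$ be the indices for which $v_i\neq\varepsilon$; the empty factors contribute a trivial $\binom{\Psi(b_i)}{\varepsilon}=1$ and may be discarded. Setting $u_\ell:=v_{i_\ell}$ and $a_\ell:=b_{i_\ell}$, the product collapses to $\prod_{\ell=1}^k\binom{\Psi(a_\ell)}{u_\ell}$, where $u=u_1\cdots u_k$ is now a factorization of $u$ into nonempty pieces. This furnishes a map sending each factorization $u=v_1\cdots v_n$ to the data consisting of a nonempty factorization $u=u_1\cdots u_k$, a letter tuple $(a_1,\ldots,a_k)$, and a choice of increasing indices $i_1<\cdots<i_k$ with $b_{i_\ell}=a_\ell$.

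The key point is that, for a fixed nonempty factorization $u=u_1\cdots u_k$ and fixed letters $a_1,\ldots,a_k\in A$, the number of admissible index tuples $i_1<\cdots<i_k$ with $b_{i_\ell}=a_\ell$ for all $\ell$ is precisely the number of occurrences of $a_1\cdots a_k$ as a subword of $w=b_1\cdots b_n$, that is $\binom{w}{a_1\cdots a_k}$. Choosing the positions of the nonempty factors inside $\Psi(w)$ amounts exactly to choosing an increasing sequence of letters of $w$ spelling $a_1\cdots a_k$, while the placements of the symbols of each $u_\ell$ inside the block $\Psi(a_\ell)$ are independent of this choice and of one another, contributing the factor $\binom{\Psi(a_\ell)}{u_\ell}$. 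Summing the product $\prod_\ell\binom{\Psi(a_\ell)}{u_\ell}$ over these $\binom{w}{a_1\cdots a_k}$ index tuples, and then over all $k$, all nonempty factorizations of $u$, and all letter tuples, yields exactly the right-hand side of the theorem.

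The only delicate bookkeeping, and the step I expect to be the main obstacle, is this regrouping: one must check that the map above is a bijection between the factorizations $u=v_1\cdots v_n$ occurring in the iterated form of Lemma~\ref{lem:binomial2} and the pairs made of a nonempty factorization of $u$ together with an admissible index tuple, so that every term is counted once and none is omitted (the inverse simply fills the unused positions with $\varepsilon$). The non-erasing hypothesis on $\Psi$ is not essential for the identity — an erased letter $a_\ell$ merely forces $\binom{\Psi(a_\ell)}{u_\ell}=0$ for the nonempty $u_\ell$ — but it guarantees that all block lengths are positive, matching the setting of Theorem~\ref{the:phifac}.
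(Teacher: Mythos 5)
Your proof is correct and follows the same combinatorial idea the paper relies on: the paper states Theorem~\ref{the:general_statement} without a formal proof (merely asserting that the reader "should be convinced" by the observation that an occurrence of $u$ in $\Psi(w)$ distributes its letters over blocks $\Psi(a_i)$ corresponding to a subword occurrence $a_1\cdots a_k$ of $w$), and your argument is exactly the rigorous version of that intuition, obtained by iterating Lemma~\ref{lem:binomial2} and regrouping the factorizations $u=v_1\cdots v_n$ according to their nonempty parts. The bijection between such factorizations and pairs (nonempty factorization of $u$, increasing index tuple spelling $a_1\cdots a_k$ in $w$) is handled correctly, and your side remark that the non-erasing hypothesis is not actually needed for the identity is also accurate.
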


The word $u$ occurs as a subword of $\Psi(w)$ if and only if there exists $k\ge 1$ such that $u$ can be factorized into $u_1\cdots u_k$ where, for all $i$, $u_i$ is a non-empty subword occurring in $\Psi(a_i)$ for some letter $a_i$ and such that $a_1\cdots a_k$ is a subword of $w$.

\begin{corollary}
With the above notation, if $w$ and $w'$ are two words of the same length, we have
$$\binom{\Psi(w)}{u}-\binom{\Psi(w')}{u}
=\sum_{k=1}^{|u|}
\sum_{\substack{u_1,\ldots,u_k\in A^+\\ u=u_1\cdots u_k}}
\sum_{a_1,\ldots,a_k\in A} 
\binom{\Psi(a_1)}{u_1}\cdots \binom{\Psi(a_k)}{u_k}
\left[
\binom{w}{a_1\cdots a_k}-\binom{w'}{a_1\cdots a_k}
\right]
.$$
\end{corollary}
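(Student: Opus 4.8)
The plan is to derive this identity directly from Theorem~\ref{the:general_statement} by applying it to $w$ and to $w'$ and subtracting. First I would write down the two instances of the theorem, namely
\[
\binom{\Psi(w)}{u}=\sum_{k=1}^{|u|}
\sum_{\substack{u_1,\ldots,u_k\in A^+\\ u=u_1\cdots u_k}}
\sum_{a_1,\ldots,a_k\in A}
\binom{\Psi(a_1)}{u_1}\cdots \binom{\Psi(a_k)}{u_k}\binom{w}{a_1\cdots a_k}
\]
together with the analogous expression for $\binom{\Psi(w')}{u}$, obtained by replacing $w$ with $w'$ in the last factor only.

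The key observation I would stress is that the whole index set of the triple summation---the range of $k$, the factorizations $u=u_1\cdots u_k$ into nonempty pieces, and the letter tuples $(a_1,\ldots,a_k)$---depends solely on $u$ and on the morphism $\Psi$, and not at all on the argument $w$. Likewise, each multiplicative weight $\binom{\Psi(a_1)}{u_1}\cdots \binom{\Psi(a_k)}{u_k}$ is a fixed integer determined by the factorization and the chosen letters, hence independent of $w$ and $w'$. Consequently the two sums are indexed identically and share identical coefficients, the sole difference being the trailing factor $\binom{w}{a_1\cdots a_k}$ versus $\binom{w'}{a_1\cdots a_k}$. Subtracting term by term and invoking linearity therefore pulls the difference inside the sum, yielding exactly
\[
\binom{\Psi(w)}{u}-\binom{\Psi(w')}{u}
=\sum_{k=1}^{|u|}
\sum_{\substack{u_1,\ldots,u_k\in A^+\\ u=u_1\cdots u_k}}
\sum_{a_1,\ldots,a_k\in A}
\binom{\Psi(a_1)}{u_1}\cdots \binom{\Psi(a_k)}{u_k}
\left[\binom{w}{a_1\cdots a_k}-\binom{w'}{a_1\cdots a_k}\right].
\]

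Honestly there is no genuine obstacle here: once Theorem~\ref{the:general_statement} is available, the statement is a one-line consequence, and the only thing worth verifying is the (evident) $w$-independence of both the index set and the weights. I would also remark that the hypothesis $|w|=|w'|$ is in fact not needed for this argument. Unlike Corollary~\ref{cor:phifac} or Remark~\ref{rem:difference}, where a length-only contribution of the form $\binom{|\varphi^{\ell-i-1}(w)|}{|v|}$ had to be cancelled against its counterpart, the formula of Theorem~\ref{the:general_statement} is already expressed purely through binomial coefficients of the shape $\binom{w}{\,\cdot\,}$, with no separate term depending on $|w|$ alone; the equal-length hypothesis is retained only to match the intended applications.
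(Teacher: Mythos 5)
Your proof is correct and matches the paper's (implicit) argument: the paper states this corollary without proof as an immediate consequence of Theorem~\ref{the:general_statement}, obtained precisely by subtracting the two instances of the formula, since the index set and the weights $\binom{\Psi(a_1)}{u_1}\cdots\binom{\Psi(a_k)}{u_k}$ depend only on $u$ and $\Psi$. Your side remark that the equal-length hypothesis is not actually needed here (in contrast with Remark~\ref{rem:difference}, where a purely length-dependent term must cancel) is also accurate.
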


\section{About multiplicities}\label{sec:mult}

In this section we give more insight about multiplicities of the form $m_{f^\ell(u)}(x)$ appearing in Corollary~\ref{cor:phifac}. This will permit us to prove results about $k$-binomially (non)-equivalent factors of the Thue--Morse word of the form $\varphi^k(a)$.

\begin{lemma}\label{lem:same_mult} Let $w$ be a word. 
    Let $M$ be a (finite) multiset of words such that $u$ belongs to $M$ if and only if its complement $\overline{u}$ belongs to $M$ with the same multiplicity. For all $i\ge 0$, the multiplicity of $w$ in $f^i(M)$ is equal to the one of $\overline{w}$.
\end{lemma}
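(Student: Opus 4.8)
The plan is to prove the statement by induction on $i$, reducing everything to the base case $i=1$ where we must understand how the map $f$ interacts with complementation. The crucial observation is that $f$ itself commutes with the complement operation in the following sense: for any single word $u$, the multiset $f(\overline{u})$ is obtained from $f(u)$ by replacing every word $v$ by its complement $\overline{v}$, i.e. $m_{f(\overline{u})}(\overline{v}) = m_{f(u)}(v)$ for all $v$. Let me call this the \emph{complement-equivariance} of $f$. Once this is established for single words, it extends immediately to multisets, and the lemma follows by a clean induction.

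First I would verify the base case $i=0$, which is immediate: $f^0(M)=M$, and the hypothesis on $M$ is exactly the statement that $w$ and $\overline{w}$ have equal multiplicity in $M$. Next I would establish complement-equivariance of $f$ on a single word $u$. If $u\in 0^*\cup 1^*$ then $\overline{u}\in 1^*\cup 0^*$ and both $f(u)$ and $f(\overline{u})$ are empty, so there is nothing to check. Otherwise, the key point is that $\varphi$ commutes with complementation, namely $\varphi(\overline{a})=\overline{\varphi(a)}$ for $a\in A$ (indeed $\varphi(\overline 0)=\varphi(1)=10=\overline{01}=\overline{\varphi(0)}$, and symmetrically). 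Consequently, a factorization $u=w_0\,\varphi(a_1)\,w_1\cdots\varphi(a_k)\,w_k$ is a $\varphi$-factorization of $u$ if and only if $\overline{u}=\overline{w_0}\,\varphi(\overline{a_1})\,\overline{w_1}\cdots\varphi(\overline{a_k})\,\overline{w_k}$ is a $\varphi$-factorization of $\overline{u}$, with the same coding tuple $\kappa$. This gives a multiplicity-preserving bijection between $\phifac(u)$ and $\phifac(\overline{u})$.

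The heart of the argument is then matching the languages $\mathcal L(u,\kappa)$ and $\mathcal L(\overline u,\kappa)$ under complementation. By definition $\mathcal L(u,\kappa)=A^{|w_0|}a_1A^{|w_1|}\cdots A^{|w_{k-1}|}a_kA^{|w_k|}$, and complementing every word in this language replaces each free block $A^{|w_j|}$ by $A^{|w_j|}$ (since complement permutes $A$ and hence fixes $A^m$ as a set) and replaces each constrained symbol $a_j$ by $\overline{a_j}$. This yields exactly $A^{|w_0|}\,\overline{a_1}\,A^{|w_1|}\cdots\overline{a_k}\,A^{|w_k|}=\mathcal L(\overline u,\kappa)$. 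Since $f(u)$ is the multiset sum of the $\mathcal L(u,\kappa)$ over $\phifac(u)$, taking the multiset sum of these complemented languages over the bijection from the previous paragraph establishes $m_{f(u)}(v)=m_{f(\overline u)}(\overline v)$ for all $v$, which is complement-equivariance on single words. Extending to a multiset $M$ is then formal: $f(M)=\biguplus_{v\in M}f(v)$ (with multiplicities), and summing the single-word identity over the pairs $(v,\overline v)$ occurring in $M$ gives the same symmetry property for $f(M)$ — that is, $f(M)$ again satisfies the hypothesis of the lemma.

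Finally I would close the induction. Complement-equivariance shows that if a multiset $N$ satisfies the symmetry hypothesis (equal multiplicities for each word and its complement), then so does $f(N)$. Since $M$ satisfies it by assumption, every $f^i(M)$ satisfies it, which is precisely the claim that $w$ and $\overline w$ have equal multiplicity in $f^i(M)$ for all $i\ge 0$. The main obstacle, such as it is, is purely bookkeeping: one must be careful that the free blocks $A^{m}$ are genuinely fixed setwise by complementation (they are, since complement is a bijection of $A$) and that only the forced symbols $a_j$ get flipped, so that the bijection on $\varphi$-factorizations lines up the languages correctly multiplicity-by-multiplicity. There is no analytic or combinatorial difficulty beyond this structural matching.
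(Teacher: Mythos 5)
Your proof is correct and follows essentially the same route as the paper's: both arguments observe that the coding tuples of the $\varphi$-factorizations of $u$ and $\overline{u}$ coincide, that $\mathcal{L}(\overline{u},\kappa)$ is the elementwise complement of $\mathcal{L}(u,\kappa)$, and then propagate the symmetry of $M$ through $f$ by iteration. Your write-up is somewhat more explicit about the single-word equivariance identity $m_{f(u)}(v)=m_{f(\overline{u})}(\overline{v})$ and the degenerate case $u\in 0^*\cup 1^*$, but the underlying argument is the same.
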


\begin{proof}
Let $u$ be a word in $M$. Because of the special form of the morphism $\varphi$, we deduce that the set of tuples coding the $\varphi$-factorizations of $u$ is equal to the set of tuples coding the $\varphi$-factorizations of $\overline{u}$. Moreover, a word $v$ belongs to $\mathcal{L}(u,\kappa)$ if and only if $\overline{v}$ belongs to $\mathcal{L}(\overline{u},\kappa)$. Indeed, these two languages are respectively of the form 
$$ A^{|w_0|}\, a_1\, A^{|w_1|} \cdots A^{|w_{k-1}|} a_k A^{|w_k|}
\quad\text{ and }\quad
 A^{|w_0|}\, \overline{a_1}\, A^{|w_1|} \cdots A^{|w_{k-1}|} \overline{a_k} A^{|w_k|}\ .$$
Think about Example~\ref{exa:fu} and consider the word $\overline{u}=10100$, 
$$u=(01)011,\ 
0(10)11,\ 
01(01)1,
(01)(01)1$$
and
$$\overline{u}=(10)100,\ 
1(01)00,\ 
10(10)0,
(10)(10)0\ .$$
For instance, the third $\varphi$-factorization gives, respectively, the languages
$$A^2\, 0\, A\quad\text{ and }\quad A^2\, 1\, A\ .$$
Let $w$ be a word over $A$. 
Since $u$ and $\overline{u}$ have the same multiplicity, the total number of times $w$ occurs in the $m(u)$ copies of $\mathcal{L}(u, \kappa)$ is equal to the number of times $\overline{w}$ occurs in the copies of $\mathcal{L}(\overline{u}, \kappa)$. This observation holds true for every $\varphi$-factorization. Consequently $f(M)$ has the same property as $M$: words and their complement appear with the same multiplicity in $f(M)$. We can thus iterate the construction and the argument. 
\end{proof}

\begin{example}
    Consider the multiset $M=\{
011_2, 100_2, 0110_1, 1001_1
\}$. In $f(M)$ the words $00$ and $11$ have multiplicity $2$, $01$ and $10$ have multiplicity $3$ and all words of length $3$ appear twice. Then
$$f^2(M)=\{0_3,1_3,00_{8},11_{8},01_8,10_8\}\ .$$
\end{example}

\begin{proposition}\label{pro:diffpower}
   For all $n\ge 1$, the multiplicity of $01$ (resp., of $00$) in the multiset $f^n(01^{n+1})=f^{n-1}(0A^n)$ is larger than the one of $10$ (resp., of $11$). More precisely, these multiplicities in the multiset $f^n(01^{n+1})$ satisfy
\[
	m(01) - m(10) = m(00) - m(11) = 1 \cdot 2 \cdot 4 \cdot 8 \cdots 2^{n-1}=2^{n(n-1)/2}.
\]
\end{proposition}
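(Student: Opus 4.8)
The plan is to work inside the multiset $f^n(01^{n+1})$, which by Remark~\ref{rem:length} contains only words of length at most $2$. First I would observe that the only occurrence of a factor $01$ or $10$ in $01^{n+1}$ is its prefix $01$, so $01^{n+1}$ has the single $\varphi$-factorization $\varphi(0)1^n$; hence $f(01^{n+1})=0A^n$ and $f^n(01^{n+1})=f^{n-1}(0A^n)$. Writing $D_n=m(01)-m(10)$ and $E_n=m(00)-m(11)$ for the multiplicities in this multiset, the inequalities in the statement follow once the common value $2^{n(n-1)/2}$ is established, so everything reduces to computing $D_n$ and $E_n$.

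For $D_n$ I would exploit that $01$ and $10$ are abelian equivalent. Summing Remark~\ref{rem:difference} over the multiset $0A^n$ ($f$ being additive and $\binom{\cdot}{\cdot}$ linear) with the test pair $(01,10)$ kills every contribution of length $\le 2$ except $x\in\{01,10\}$, giving
$$D_n=\sum_{u\in 0A^n}g_{n-1}(u),\qquad g_m(u):=\binom{\varphi^m(01)}{u}-\binom{\varphi^m(10)}{u}.$$
The function $g_m$ has two crucial features: it changes sign under $u\mapsto\overline u$, since $\varphi^m(10)=\overline{\varphi^m(01)}$ and $\binom{\overline z}{u}=\binom{z}{\overline u}$; and it vanishes on all words of length $\le m+1$, because $01\sim_1 10$ gives $\varphi^m(01)\sim_{m+1}\varphi^m(10)$ by Corollary~\ref{cor:phik}. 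I would then peel off one $\varphi$ with Theorem~\ref{the:phifac}: the constant terms $\binom{|\cdot|}{|u|}$ cancel and $g_{n-1}(u)=\sum_{v\in f(u)}m_{f(u)}(v)\,g_{n-2}(v)$. As $g_{n-2}$ vanishes below length $n$, only the length-$n$ words of $f(u)$ survive, that is the single-block factorizations, whose languages are $A^p\,u_p\,A^{n-1-p}$ at the descent positions $p$ (where $u_p\neq u_{p+1}$). Complementation antisymmetry yields $\sum_{v\in A^p0A^{n-1-p}}g_{n-2}=-\sum_{v\in A^p1A^{n-1-p}}g_{n-2}$. Summing over $u\in 0A^n$ for fixed $p$, every internal descent ($p\ge 1$) sees both letters equally and cancels, while at $p=0$ the forced prefix $0$ only permits the block $\varphi(0)=01$; this leaves $D_n=2^{n-1}\sum_{v\in 0A^{n-1}}g_{n-2}(v)=2^{n-1}D_{n-1}$. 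With $D_1=1$ (from $f(011)=\{00,01\}$) this gives $D_n=2^{n(n-1)/2}$.

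For $E_n$ the pair $(00,11)$ is not abelian equivalent, so I would compensate with the length-$1$ pair: using the four test words $00,11,0,1$ with coefficients $1,-1,-2,2$, the corrections in Corollary~\ref{cor:phifac} depend on a test word only through its length and so cancel within each equal-length pair, whence $E_n=\sum_{u\in 0A^n}h_{n-1}(u)$ with $h_m(u)=\binom{\varphi^m(00)}{u}-\binom{\varphi^m(11)}{u}-2\binom{\varphi^m(0)}{u}+2\binom{\varphi^m(1)}{u}$. Again $h_m$ is complement-antisymmetric and $h_m(u)=\sum_{v\in f(u)}m(v)h_{m-1}(v)$; the difference is that $h_m$ only vanishes on words of length $\le 2$. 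Running the same peeling, the non-leading factorizations cancel by the identical flip-all-blocks argument (valid for any block count), the leading single block reproduces $2^{n-1}E_{n-1}$, but leading factorizations with $k\ge 2$ now contribute. Factoring out $\varphi(0)$ and recombining the remaining blocks into $f$, this correction is $X_n=\sum_{v\in f(A^{n-1})}m(v)\,h_{n-2}(0v)$, so that $E_n=2^{n-1}E_{n-1}+X_n$.

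The main obstacle is therefore to prove $X_n=0$. Here $f(A^{n-1})$ is complement-symmetric by Lemma~\ref{lem:same_mult}, and I would combine this with the evaluation $\sum_{v\in f(A^{n-1})}m(v)\binom{Z}{0v}=\sum_{i:\,Z_i=0}c(|Z_{>i}|)$, where $Z_{>i}$ is the suffix of $Z$ after position $i$ and $c(\ell)=\binom{2\ell}{n-1}-2^{n-1}\binom{\ell}{n-1}$; this comes from Theorem~\ref{the:phifac} together with $\sum_{u\in A^{n-1}}\binom{W}{u}=\binom{|W|}{n-1}$. Substituting $Z=\varphi^{n-2}(00),\varphi^{n-2}(11),\varphi^{n-2}(0),\varphi^{n-2}(1)$ collapses $X_n$ to a single alternating sum $\sum_{i<2^{n-2}}(-1)^{s_2(i)}\bigl[c(2^{n-1}-1-i)-c(2^{n-2}-1-i)\bigr]$, and the delicate step, which I expect to be the hardest, is to show that this Prouhet--Tarry--Escott type sum vanishes. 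Equivalently and more combinatorially, $X_n=0$ is exactly the identity $m(00)=m(01)$ in $f^{n-1}(0A^n)$, i.e. that among the length-$3$ words of $f^{n-2}(0A^n)$ those ending in $01$ and those ending in $10$ carry the same total multiplicity; I would prove this balance by an induction tracking the last-two-letter distribution of the top-length layer of $f^{i}(0A^n)$, the relevant suffix statistic being antisymmetric under complementation so that the claim rests on propagating the complement-symmetry of the suffixes already present in $0A^n$.
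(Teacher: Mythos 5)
Your treatment of $m(01)-m(10)$ is correct. The identity $D_n=\sum_{u\in 0A^n}g_{n-1}(u)$, the complement-antisymmetry of $g_{n-1}$, and the fact that only single-block $\varphi$-factorizations survive the peeling yield exactly the recursion $D_n=2^{n-1}D_{n-1}$ that the paper also establishes; the paper cancels by pairing $00u$ with $01\overline{u}$ and invoking Lemma~\ref{lem:same_mult}, whereas you cancel by complementing the free letters for each fixed block position, but both are the same complementation trick and the base case agrees.

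The gap is in the $m(00)-m(11)$ half, and it is self-inflicted. You claim that $h_m$ ``only vanishes on words of length $\le 2$'', which forces you to retain the leading factorizations with $k\ge 2$ blocks and produces the correction term $X_n$ that you then do not prove to vanish --- you explicitly leave the Prouhet--Tarry--Escott-type identity as the unresolved hard step, so the proposal is incomplete as written. But your own reduction already shows that $h_m(v)=m_{f^m(v)}(00)-m_{f^m(v)}(11)$: this is Remark~\ref{rem:difference} applied to the pairs $(00,11)$ and $(0,1)$, with the length-one contributions cancelling. Remark~\ref{rem:length} then gives $h_m(v)=0$ for every $v$ with $|v|\le m+1$, since $f^m(v)$ contains no word of length $2$ in that range (it is empty for $m\ge |v|$ and contained in $\{0,1\}$ for $m=|v|-1$). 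Hence $h_{n-2}$ vanishes on all words of $f(u)$ of length at most $n-1$, every summand $h_{n-2}(0z)$ of $X_n$ is zero because $|0z|\le n-1$, and $E_n=2^{n-1}E_{n-1}$ follows by exactly the same computation as for $D_n$. This is precisely how the paper treats the two differences uniformly: after one application of $f$, the leading multi-block factorizations yield words of length at most $n-1$, and $f^{n-2}$ of such words contains no word of length $2$, so they contribute to neither $m(01)-m(10)$ nor $m(00)-m(11)$. Once you replace your vanishing claim for $h_m$ by the correct bound, your argument closes and becomes a valid, mildly rephrased version of the paper's induction.
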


\begin{proof}
    We proceed by induction on $n$. For $n=1$, $f^0(0A)=\{00,01\}$ and the result is obvious. Let $n\ge 2$. Assume that the result holds for all $j<n$. We consider $f^{n-1}(0A^{n})$. 

 Note that $0A^{n}$ is the disjoint union of $\{00u\mid u\in A^{n-1}\}$ and $\{01\overline{u}\mid u\in A^{n-1}\}$. These two sets are in one-to-one correspondence with the map $0w\mapsto 0\overline{w}$. Since we proceed by induction, let us start by applying $f$ once. We will apply $f^{n-2}$ later on.

Let $u\in A^{n-1}$. First observe that there is a one-to-one correspondence between the set of $\varphi$-factorizations of $00u$ and the set of $\varphi$-factorizations of $01\overline{u}$ coded by tuples whose first element is at least equal to $1$. In this case, we have exactly the same tuples. For instance, consider the word $u=1011$.
$$\begin{array}{c|c|c}
00u=001011 & \kappa & 01\overline{u}=010100 \\
\hline
0(01)011 & (1) & 0(10)100 \\
00(10)11 & (2) & 01(01)00 \\
001(01)1 & (3) & 010(10)0 \\
0(01)(01)1 & (1,3)& 0(10)(10)0 \\
\end{array}$$
For each such $\varphi$-factorization $\kappa$ of $00u$, we have a language of the form 
$$\mathcal{L}(00u,\kappa)=A^{i_0} a_1 A^{i_1} \cdots A^{i_{k-1}} a_k A^{i_k}$$
with $i_0\ge 1$, $k\ge 1$, $a_1,\ldots,a_k\in A$, $i_1,\ldots,i_k\ge 0$ and the corresponding $\varphi$-factorization of $01\overline{u}$ gives the language

$$\mathcal{L}(01\overline{u},\kappa)=A^{i_0} \overline{a_1} A^{i_1} \cdots A^{i_{k-1}} \overline{a_k} A^{i_k}\ .$$

Observe that the union of these two languages satisfies the assumption of the previous lemma. 
Thus, applying iteratively $f$ to the words belonging to these languages will eventually provide words $01$ and $10$ (resp., $00$ and $11$) with the same multiplicity. We stress the fact that in the above $\varphi$-factorizations, $i_0$ is non-zero. 

We still have to consider the $\varphi$-factorizations of $01\overline{u}$ coded with tuples starting with $0$ (these are the only remaining ones). With the running example $u=1011$, we have the extra three $\varphi$-factorizations:
$$(01)0100,\ (01)(01)00,\ (01)0(10)0\ .$$ 

Let us consider $\varphi$-factorizations of $01\overline{u}$  coded by a $k$-tuple starting with $0$ and for $k\ge 2$. Thus the resulting languages are made of words of length at most $|u|+2-k=n+1-k\le n-1$. By Remark~\ref{rem:length}, applying $f^{n-2}$ to words of such lengths will only provide words in $\{0,1\}$. Hence, they do not provide any copy of $00, 01, 10$ or $11$. 

We finally have the $\varphi$-factorization of $01\overline{u}$ coded by the $1$-tuple $(0)$. The corresponding language $\mathcal{L}(01\overline{u},(0))$ is $0A^{|u|}$. Recall that $u$ (and thus $\overline{u}$) is ranging over $A^{n-1}$. Thus there are $2^{n-1}$ copies of this language. By induction hypothesis, the difference of multiplicities for $01$ and $10$ (resp., $00$ and $11$) for $f^{n-2}(0A^{n-1})$ is $1 \cdot 2\cdots 2^{n-2}$. Multiplying the latter number by the number of copies provides us with the result.
\end{proof}

Table~\ref{tab:mult0110} provides the computed multiplicities of $01$ and $10$ in  $f^{n-1}(0A^n)$ for the first few values of $n$. We also indicate the corresponding differences given in the previous proposition. 
\begin{table}
$$\begin{array}{c|rrr}
& m(01) & m(10) & m(01) - m(10)\\
\hline
1&1 & 0 & 1=2^0\\
2&3 & 1 & 2=2^1\\ 
3&28 & 20 & 8=2^3\\ 
4&800 & 736 & 64=2^6\\ 
5&61952 & 60928 & 1024=2^{10}\\ 
6&11812864 & 11780096 & 32768=2^{15} \\
7&5285871616 & 5283774464 & 2097152=2^{21}\\
\end{array}$$    
    \caption{Multiplicities in $f^{n-1}(0A^n)$.}
    \label{tab:mult0110}
\end{table}

\begin{proposition}\label{pro:diffpower2}
   For all $n\ge 1$, the multiplicity of $0$ in the multiset $f^n(01^{n+1})=f^{n-1}(0A^{n})$ is larger than or equal to the multiplicity of $1$. 
\end{proposition}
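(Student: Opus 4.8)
The plan is to reduce the statement to a difference of ordinary binomial coefficients of words, and then to prove a strengthened inequality by induction. First I would record that, since $|01^{n+1}|=n+2$ and $01^{n+1}\notin 0^*\cup 1^*$, Remark~\ref{rem:length} guarantees that $f^n(01^{n+1})$ contains only words in $\{0,1,00,01,10,11\}$; in particular $0$ and $1$ are the only length-one words occurring in it. Applying Remark~\ref{rem:difference} with $\ell=n$, $u=01^{n+1}$, and the two length-one words $w=0$, $w'=1$, and noting that $\binom{0}{x}-\binom{1}{x}$ equals $1$ for $x=0$, equals $-1$ for $x=1$, and vanishes for every word $x$ of length at least $2$, I obtain
$$\binom{\varphi^n(0)}{01^{n+1}}-\binom{\varphi^n(1)}{01^{n+1}}=m_{f^n(01^{n+1})}(0)-m_{f^n(01^{n+1})}(1).$$
Thus it suffices to prove $\binom{\varphi^n(0)}{01^{n+1}}\ge\binom{\varphi^n(1)}{01^{n+1}}$.

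Next I would set $\Delta_{m,j}:=\binom{\varphi^m(0)}{01^j}-\binom{\varphi^m(1)}{01^j}$, the quantity of interest being the single diagonal value $\Delta_{n,n+1}$, and prove the stronger claim that $\Delta_{m,j}\ge 0$ for \emph{all} $m,j\ge 0$. Writing $\varphi^m(0)=\varphi^{m-1}(0)\varphi^{m-1}(1)$ and $\varphi^m(1)=\varphi^{m-1}(1)\varphi^{m-1}(0)$, I expand both $\binom{\cdot}{01^j}$ via Lemma~\ref{lem:binomial2} along the factorizations $01^j=(01^i)(1^{j-i})$ together with the empty-prefix factorization. For $m\ge 2$ the words $\varphi^{m-1}(0)$ and $\varphi^{m-1}(1)$ are balanced, so they contain the same number $2^{m-2}$ of $1$'s and $\binom{\varphi^{m-1}(0)}{1^t}=\binom{\varphi^{m-1}(1)}{1^t}=\binom{2^{m-2}}{t}=:\gamma_t$. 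Subtracting the two expansions, the empty-prefix contribution and the extremal term $i=j$ cancel (both equal $\pm\Delta_{m-1,j}$ since $\gamma_0=1$), leaving the triangular recursion
$$\Delta_{m,j}=\sum_{i=0}^{j-1}\Delta_{m-1,i}\,\gamma_{j-i},\qquad \gamma_{j-i}=\binom{2^{m-2}}{j-i}\ge 0.$$

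Because every coefficient $\gamma_{j-i}$ is a nonnegative integer, this recursion propagates nonnegativity: if $\Delta_{m-1,i}\ge 0$ for all $i$, then $\Delta_{m,j}\ge 0$ for all $j$. The base case $m=1$ (with $\varphi(0)=01$, $\varphi(1)=10$) I would check by hand, finding $\Delta_{1,0}=0$, $\Delta_{1,1}=1$, and $\Delta_{1,j}=0$ for $j\ge 2$ for length reasons. Induction on $m$ then yields $\Delta_{m,j}\ge 0$ for all $m\ge 1$ and $j\ge 0$, and specializing to $(m,j)=(n,n+1)$ gives the inequality established in the first paragraph, completing the proof.

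The main obstacle is that the diagonal value $\Delta_{n,n+1}$ does \emph{not} satisfy a self-contained recursion: expanding $\binom{\varphi^n(0)}{01^{n+1}}$ produces the off-diagonal quantities $\Delta_{n-1,j}$ for an entire range of $j$, not just $j=n$. The decisive move is therefore to strengthen the induction hypothesis to all pairs $(m,j)$ simultaneously, after which the balancedness of $\varphi^{m-1}(0)$ is exactly what forces the boundary terms to cancel and leaves only nonnegative summands. One must also be careful that balancedness fails at $m=1$, where $\varphi^0(0)=0$ and $\varphi^0(1)=1$ have different numbers of $1$'s; this is precisely why $m=1$, rather than $m=0$, is the correct base of the induction.
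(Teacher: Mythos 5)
Your proof is correct, and it takes a genuinely different route from the paper's. The paper stays inside the multiset formalism: it observes that the multiplicity of $0$ (resp.\ $1$) in $f^n(01^{n+1})$ equals that of $01$ (resp.\ $10$) in $f^{n-1}(01^{n+1})$, then reruns the $\varphi$-factorization analysis of Proposition~\ref{pro:diffpower} with a more delicate treatment of the factorizations coded by tuples starting with $0$ (the $k=2$ case contributes $2^{n-3}$ copies of $0A^{n-2}$), and finally invokes Proposition~\ref{pro:diffpower} on that residual multiset. You instead run Remark~\ref{rem:difference} ``backwards'': since $\binom{0}{x}-\binom{1}{x}$ vanishes except for $x\in\{0,1\}$, the quantity to bound is exactly $\binom{\varphi^n(0)}{01^{n+1}}-\binom{\varphi^n(1)}{01^{n+1}}$, and you prove the two-parameter inequality $\Delta_{m,j}\ge 0$ by induction on $m$ via Lemma~\ref{lem:binomial2}; the recursion $\Delta_{m,j}=\sum_{i=0}^{j-1}\Delta_{m-1,i}\binom{2^{m-2}}{j-i}$ checks out (the empty-prefix term and the $i=j$ term do cancel because $\varphi^{m-1}(0)$ and $\varphi^{m-1}(1)$ are abelian equivalent for $m\ge 2$, which is also why the induction must start at $m=1$). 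Your argument is more self-contained — it bypasses Proposition~\ref{pro:diffpower} and the $\varphi$-factorization bookkeeping entirely, and yields the stronger statement $\binom{\varphi^m(0)}{01^j}\ge\binom{\varphi^m(1)}{01^j}$ for all $m\ge1$ and $j\ge0$ — whereas the paper's route keeps the exact value of the difference in view (as it needs to elsewhere). One cosmetic remark: you announce the claim for all $m,j\ge 0$ but your induction only establishes it for $m\ge 1$; since the application only needs $m=n\ge 1$, this costs nothing, but the statement of the strengthened claim should match what is proved.
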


\begin{proof}
For $n=1,2$, there is no $0$ and no $1$ in $f^n(01^{n+1})$. Assume $n\ge 3$.
    The multiplicity of $0$ (resp., $1$) in the multiset $f^n(01^{n+1})$ is equal to the multiplicity of $01$ (resp., $10$) in $f^{n-1}(01^{n+1})=f^{n-2}(0A^n)$. One can thus follow the lines of the proof of Proposition~\ref{pro:diffpower} except for $\varphi$-factorizations starting with $0$ and for $k \geq 2$ where a more careful discussion is needed. 

Consider the $\varphi$-factorizations of the words $01\overline{u}$ of length $n+1$ coded by a $k$-tuple starting with $0$ and for $k\ge 2$. If $k>2$, when applying $f$ once, the resulting languages are made of words of length less than $n-1$ and applying $f^{n-3}$ to these words will provide no $01$ nor $10$. But for $k=2$, applying $f$ once to all such words, we  
get the multiset $0A^{n-2}$ where each word has multiplicity $2^{n-3}$. 
Indeed, the word $\overline{u}$ is ranging over $A^{n-1}$ and we consider $\varphi$-factorizations where one replacement of $01$ or $10$ is made inside $\overline{u}$. For all $i,j\le n-3$, we have
\begin{eqnarray*}
&\#\{x01y\in A^{n-1}\mid |x|=i\}= 
\#\{x01y\in A^{n-1}\mid |x|=j\}\\
=&\#\{x10y\in A^{n-1}\mid |x|=i\}=
\#\{x10y\in A^{n-1}\mid |x|=j\}=2^{n-3}.  
\end{eqnarray*}
To conclude the proof, observe that the difference of multiplicity of $01$ and $10$ in $f^{n-3}(0A^{n-2})$ is obtained from the previous proposition.
\end{proof}


\subsection{Some consequences for the factors of Thue--Morse}
We collect some important properties of iterates of $\varphi$ with respect to the $k$-binomial equivalence $\sim_k$. A trace of the first result below can be found in \cite{Och}.

\begin{lemma}[Ochsenschl\"ager]\label{lem:equiv-not} Let $k\ge 1$. We have 
$$\varphi^k(0)\sim_k\varphi^k(1)
\quad\text{ and }\quad 
\varphi^k(0)\nsim_{k+1}\varphi^k(1)\ .$$ 
In particular, if $|u|=|v|$, then $\varphi^k(u)\sim_k\varphi^k(v)$.
\end{lemma}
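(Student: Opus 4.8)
The plan is to prove the three assertions in turn, using the machinery already developed. For the final ``in particular'' claim, I would reduce to the two-letter case: since $\sim_k$ is a congruence (the remark after Lemma~\ref{lem:binomial2}), if $|u|=|v|$ then writing $u=a_1\cdots a_n$ and $v=b_1\cdots b_n$ with $a_i,b_i\in A$, one has $\varphi^k(u)=\varphi^k(a_1)\cdots\varphi^k(a_n)$ and similarly for $v$, so it suffices to know $\varphi^k(a)\sim_k\varphi^k(b)$ for all single letters $a,b$. Since $\varphi^k(0)$ and $\varphi^k(1)$ are complements, and $\varphi^k(a)\sim_k\varphi^k(a)$ trivially, the first equivalence $\varphi^k(0)\sim_k\varphi^k(1)$ gives the general statement by repeated application of the congruence property, concatenating block by block.

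For the positive statement $\varphi^k(0)\sim_k\varphi^k(1)$, I would argue by induction on $k$ using Corollary~\ref{cor:phik}. The base case $k=1$ is immediate: $\varphi(0)=01$ and $\varphi(1)=10$ are abelian equivalent, i.e.\ $\sim_1$. For the inductive step, assume $\varphi^{k-1}(0)\sim_{k-1}\varphi^{k-1}(1)$; then Corollary~\ref{cor:phik} (with $u=\varphi^{k-1}(0)$, $v=\varphi^{k-1}(1)$) yields immediately $\varphi^k(0)=\varphi(\varphi^{k-1}(0))\sim_k\varphi(\varphi^{k-1}(1))=\varphi^k(1)$. This is clean and uses exactly the gain-of-one-level lemma that the corollary provides.

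For the negative statement $\varphi^k(0)\nsim_{k+1}\varphi^k(1)$, I would exhibit a single word $x$ of length $k+1$ with $\binom{\varphi^k(0)}{x}\neq\binom{\varphi^k(1)}{x}$. The natural tool is Remark~\ref{rem:difference}: taking $w=0$ and $w'=1$ (both of length $1$) and $\ell=k$, we get
\[
\binom{\varphi^k(0)}{x}-\binom{\varphi^k(1)}{x}=\sum_{z\in f^k(x)}m_{f^k(x)}(z)\left[\binom{0}{z}-\binom{1}{z}\right].
\]
Since $w=0$ and $w'=1$ are single letters, $\binom{0}{z}-\binom{1}{z}$ is nonzero only for $z\in\{0,1\}$, contributing $m_{f^k(x)}(0)-m_{f^k(x)}(1)$. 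So the difference equals $m_{f^k(x)}(0)-m_{f^k(x)}(1)$, and I must choose $x$ of length $k+1$ making this nonzero. The obvious candidate is $x=01^{k}$ or, to match the earlier propositions, $x=01^{k+1}$ adjusted for length; here the target length is $k+1$, so I would take $x$ with $|x|=k+1$ of the form $0 1^{k}$ and relate $m_{f^k(x)}(0)-m_{f^k(x)}(1)$ to the quantity computed in Proposition~\ref{pro:diffpower2}.

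The main obstacle is the bookkeeping in this last step: aligning the index conventions of Proposition~\ref{pro:diffpower} and Proposition~\ref{pro:diffpower2} (which are stated for $f^{n}(01^{n+1})=f^{n-1}(0A^n)$) with the word $x$ of length exactly $k+1$ needed here. I expect to apply Remark~\ref{rem:difference} once more internally, or to invoke Proposition~\ref{pro:diffpower2} directly, to conclude that $m_{f^k(x)}(0)\ge m_{f^k(x)}(1)$ with the inequality strict for the appropriate choice; Proposition~\ref{pro:diffpower} guarantees the corresponding difference $m(01)-m(10)=2^{n(n-1)/2}>0$ at one level up, which after applying $f$ one more time descends to a strict inequality between $m(0)$ and $m(1)$. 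Verifying that the chosen $x$ is not in $0^*\cup 1^*$ (so that $f$ acts nontrivially) and that the relevant multiplicity difference is genuinely positive, rather than merely nonnegative, is where the care lies, and it is exactly the content packaged by the two preceding propositions.
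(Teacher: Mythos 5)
Your proposal is correct and follows essentially the same route as the paper: the positive part by induction from $\varphi(0)\sim_1\varphi(1)$ via Corollary~\ref{cor:phik}, and the negative part by evaluating $\binom{\varphi^k(0)}{01^k}-\binom{\varphi^k(1)}{01^k}$ through Remark~\ref{rem:difference} and Proposition~\ref{pro:diffpower} (the paper stops the iteration of $f$ at level $k-1$ and tests $v\in\{01,10\}$ against $\varphi(0),\varphi(1)$, whereas you iterate once more down to single letters --- the same computation, since $f(01)=\{0\}$ and $f(10)=\{1\}$). The only point to settle cleanly is the one you flag yourself: Proposition~\ref{pro:diffpower2} alone gives only a non-strict inequality, so the strict positivity must come from $m_{f^{k-1}(01^k)}(01)-m_{f^{k-1}(01^k)}(10)=2^{(k-1)(k-2)/2}>0$ of Proposition~\ref{pro:diffpower}, exactly as in the paper.
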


\begin{proof}
    We have $\varphi(0)\sim_1\varphi(1)$. Thus the first part follows from Corollary~\ref{cor:phik}. 
\smallskip

Let us show that $\varphi^k(0)\nsim_{k+1}\varphi^k(1)$. The case $k=1$ is obvious: $01\nsim_2 10$. Observe that $f(01^k)=0A^{k-1}$ thus $f^{k-1}(01^k)=f^{k-2}(0A^{k-1})$. 
Using Remark~\ref{rem:difference}, we compute 
$$\binom{\varphi^{k}(0)}{01^k}-\binom{\varphi^k(1)}{01^k}$$
and get 
$$\binom{\varphi^{k-1}(\varphi(0))}{01^k}-\binom{\varphi^{k-1}(\varphi(1))}{01^k}
=\sum_{v\in f^{k-1}(01^k)} m_{f^{k-1}(01^k)}(v) \left[\binom{\varphi(0)}{v}-\binom{\varphi(1)}{v}\right].$$
The elements of the multiset $f^{k-1}(01^k)$ belong to $\{0,1,00,01,10,11\}$. The last factor in brackets in the previous sum is non-zero only if $v=01$ or $v=10$. Hence, we get  
$$\binom{\varphi^{k}(0)}{01^k}-\binom{\varphi^k(1)}{01^k}=
m_{f^{k-1}(01^k)}(01) -m_{f^{k-1}(01^k)}(10)=2^{(k-1)(k-2)/2}\ge 1\ .$$
The last equality comes from Proposition~\ref{pro:diffpower}. 
\end{proof}

\begin{lemma}[Transfer lemma]
\label{lem:equiv2} Let $k\ge 1$. Let $u,v,v'$ be three non-empty words such that $|v|=|v'|$. We have 
    $$\varphi^{k-1}(u)\, \varphi^{k}(v)\sim_k \varphi^{k}(v')\, \varphi^{k-1}(u)\ .$$
\end{lemma}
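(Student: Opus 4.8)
The plan is to collapse the two different iterates of $\varphi$ into a single one and then reduce everything to abelian equivalence. The crucial observation is that, since $\varphi^{k-1}$ is a morphism and $\varphi^{k}=\varphi^{k-1}\circ\varphi$, one has
\[
\varphi^{k-1}(u)\,\varphi^{k}(v)=\varphi^{k-1}\bigl(u\,\varphi(v)\bigr)
\quad\text{and}\quad
\varphi^{k}(v')\,\varphi^{k-1}(u)=\varphi^{k-1}\bigl(\varphi(v')\,u\bigr).
\]
Thus the whole statement reduces to showing $\varphi^{k-1}(u\,\varphi(v))\sim_k\varphi^{k-1}(\varphi(v')\,u)$, i.e. to comparing the images under $\varphi^{k-1}$ of two words that differ only by the order of the blocks $u$ and $\varphi(v)$ (with $v$ possibly replaced by $v'$).

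Next I would check that these two arguments are abelian equivalent. Since $\varphi$ sends each letter to a word containing exactly one $0$ and one $1$, the word $\varphi(v)$ has $|v|$ zeros and $|v|$ ones, and likewise $\varphi(v')$ has $|v'|$ zeros and $|v'|$ ones. Because $|v|=|v'|$, the Parikh vectors of $\varphi(v)$ and $\varphi(v')$ coincide, so $u\,\varphi(v)$ and $\varphi(v')\,u$ have the same number of $0$'s and of $1$'s. Hence $u\,\varphi(v)\sim_{1}\varphi(v')\,u$. Note that the hypothesis $|v|=|v'|$ is used exactly here and nowhere else: it is precisely what is needed to absorb the change from $v$ to $v'$ at the abelian level, so no separate appeal to Lemma~\ref{lem:equiv-not} is required (though one could alternatively first replace $v$ by $v'$ via Lemma~\ref{lem:equiv-not} and the congruence of Lemma~\ref{lem:cancel}, reducing to a pure commutation statement).

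Finally I would propagate this abelian equivalence through the powers of $\varphi$ using Corollary~\ref{cor:phik}, which raises the binomial index by one at each application. Starting from $u\,\varphi(v)\sim_1\varphi(v')\,u$ and applying $\varphi$ exactly $k-1$ times yields $\varphi^{k-1}(u\,\varphi(v))\sim_{1+(k-1)}\varphi^{k-1}(\varphi(v')\,u)$, that is $\sim_k$, which combined with the identities above is the claim. The only points requiring care are the index bookkeeping, namely verifying that $k-1$ applications of Corollary~\ref{cor:phik} land precisely on $\sim_k$ and that the case $k=1$ (zero applications, $\varphi^0=\mathrm{id}$) is covered, together with the initial recognition of the collapsing identity $\varphi^{k-1}(u)\,\varphi^{k}(v)=\varphi^{k-1}(u\,\varphi(v))$, which is really the whole idea; once it is spotted the remaining argument is routine.
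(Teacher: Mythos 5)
Your proof is correct and follows essentially the same route as the paper: rewrite both sides as $\varphi^{k-1}(u\,\varphi(v))$ and $\varphi^{k-1}(\varphi(v')\,u)$, observe that $u\,\varphi(v)\sim_1\varphi(v')\,u$ since $|v|=|v'|$, and lift this through $k-1$ applications of Corollary~\ref{cor:phik}. The paper's proof is just a terser version of exactly this argument.
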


\begin{proof}
    Observe that $u\varphi(v)\sim_1 \varphi(v')u$ because $v$ and $v'$ have the same length. The conclusion follows from Corollary~\ref{cor:phik}: $\varphi^{k-1}(u\, \varphi(v))\sim_k\varphi^{k-1}(\varphi(v')\, u)$.  
\end{proof}

\begin{corollary}\label{cor:permut}
    Let $k\ge 1$ and $n\ge 2$. Let $u_1,\ldots,u_n$ be non-empty words. Let  $i_1,\ldots,i_n$ be integers greater than or equal to $k$, except for one of these being equal to $k-1$ and denoted by $i_r$. For all permutations $\nu$ of $\{1,\ldots,n\}$, we have  
$$\varphi^{i_1}(u_1)\varphi^{i_2}(u_2)\cdots \varphi^{i_n}(u_n)\sim_k
\varphi^{i_{\nu(1)}}(u_{\nu(1)}')\varphi^{i_{\nu(2)}}(u_{\nu(2)}')\cdots \varphi^{i_{\nu(n)}}(u_{\nu(n)}')$$
for all words $u_1',\ldots,u_n'$ where $|u_i|=|u_i'|$, for all $i$, and $u_{i_r}=u_{i_r}'$.
\end{corollary}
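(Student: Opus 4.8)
The plan is to prove Corollary~\ref{cor:permut} by reducing an arbitrary permutation to a sequence of adjacent transpositions, each of which is handled by the Transfer Lemma (Lemma~\ref{lem:equiv2}) together with the cancellation property (Lemma~\ref{lem:cancel}). The key structural observation is that the hypotheses single out one factor, $\varphi^{i_r}(u_r)$ with exponent $k-1$, that is ``mobile'' in the sense of the Transfer Lemma, whereas every other factor carries an exponent $\ge k$. Since $\varphi^j(w)\sim_k \varphi^j(w')$ whenever $j\ge k$ and $|w|=|w'|$ (this is exactly Lemma~\ref{lem:equiv-not}, whose final sentence gives $\varphi^k(w)\sim_k\varphi^k(w')$, and hence $\varphi^j(w)\sim_k\varphi^j(w')$ for all $j\ge k$ by applying $\varphi$ repeatedly via Corollary~\ref{cor:phik}), each high-exponent factor may be freely replaced by any factor of the same length and is itself $\sim_k$-invisible to its actual content. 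The only factor whose content is pinned down on both sides is $\varphi^{i_r}(u_r)=\varphi^{i_r}(u_r')$, consistent with the requirement $u_{i_r}=u_{i_r}'$.

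First I would reduce to the case where no genuine permutation of content is needed: because every $\varphi^{i_j}(u_j)$ with $i_j\ge k$ is $\sim_k$-equivalent to $\varphi^{i_j}(u_j')$ for the prescribed $u_j'$ of the same length (and the one exponent-$(k-1)$ factor already has matching content), I may, using that $\sim_k$ is a congruence, replace each $u_j$ by $u_j'$ one factor at a time without changing the $\sim_k$-class of the whole product. Thus it suffices to prove that the product can be \emph{reordered} according to $\nu$, i.e. that
$$\varphi^{i_1}(u_1')\cdots\varphi^{i_n}(u_n')\sim_k
\varphi^{i_{\nu(1)}}(u_{\nu(1)}')\cdots\varphi^{i_{\nu(n)}}(u_{\nu(n)}'),$$
where now the \emph{same} blocks appear on both sides, only permuted. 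Since any permutation is a product of adjacent transpositions, it is enough to show that two neighbouring blocks may be swapped.

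The heart of the argument is the swap of two adjacent blocks $\varphi^{i}(x)\,\varphi^{j}(y)$. Here the Transfer Lemma applies directly only when one of the two exponents equals $k-1$ and the other is $\ge k$: Lemma~\ref{lem:equiv2} gives $\varphi^{k-1}(u)\,\varphi^{k}(v)\sim_k\varphi^{k}(v')\,\varphi^{k-1}(u)$, and iterating $\varphi$ raises the larger exponent to any value $\ge k$. When both adjacent exponents are $\ge k$, the swap is even easier: both blocks are $\sim_k$-interchangeable with blocks of the same length in either order, so their commutation up to $\sim_k$ follows from the high-exponent invisibility noted above combined with the congruence property, and I would isolate the pair with Lemma~\ref{lem:cancel} so that only the two relevant factors need to be compared. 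The one subtlety is that the distinguished exponent-$(k-1)$ factor must be moved past high-exponent factors to reach its target position $\nu(r)$; each such elementary move is exactly an instance of the Transfer Lemma, so the mobile factor can migrate freely while the high-exponent factors rearrange among themselves by the easy swaps.

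The main obstacle I anticipate is \emph{bookkeeping rather than mathematics}: one must verify that throughout the sequence of adjacent transpositions the invariant ``exactly one factor has exponent $k-1$, all others $\ge k$'' is preserved, so that every intermediate swap is licensed by either the Transfer Lemma or the high-exponent argument, and that the content-matching hypothesis on $u_{i_r}$ is never violated (the exponent-$(k-1)$ block never gets its content altered, only relocated). A clean way to organize this is to first slide the mobile block to its final position using Transfer-Lemma moves, and only then reorder the remaining high-exponent blocks, for which an abelian-style argument suffices. I would also make explicit the small lemma that $\varphi^{i}(x)\sim_k\varphi^{i}(x')$ for $i\ge k$ and $|x|=|x'|$, since it is used repeatedly; it follows from Lemma~\ref{lem:equiv-not} and Corollary~\ref{cor:phik}.
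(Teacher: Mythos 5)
Your proposal is correct and follows essentially the same route as the paper: reduce the permutation to adjacent transpositions, handle any swap involving the exponent-$(k-1)$ block via the Transfer Lemma (Lemma~\ref{lem:equiv2}), and handle everything else via the last assertion of Lemma~\ref{lem:equiv-not}. The one point worth making explicit is that to swap two adjacent blocks with exponents $i,j\ge k$ you should apply that assertion to the whole concatenation, writing $\varphi^i(x)\varphi^j(y)=\varphi^k\bigl(\varphi^{i-k}(x)\varphi^{j-k}(y)\bigr)$ and $\varphi^j(y')\varphi^i(x')=\varphi^k\bigl(\varphi^{j-k}(y')\varphi^{i-k}(x')\bigr)$ with equal-length preimages, rather than invoking content-invisibility of each block separately, which by itself changes content but does not reorder the blocks.
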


\begin{proof}
    It is enough to see that one can permute any two consecutive factors: any permutation is a product of such type of transpositions. This is a direct consequence of the two previous lemmas. 
\end{proof}


\section{$2$-binomial complexity}\label{sec:k2}
In this section we compute the value of $b_{\mathbf{t},2}(n)$.
First of all, the next proposition ensures us that all the words we will consider in the proof of Theorem~\ref{thm:k2} really appear as factors of $\bt$. The reader familiar with B\"uchi's theorem and the characterization of $k$-automatic sequence in terms of first-order logic can obtain an alternative proof of this result. Basically, one has to check that the four closed formulas (for $a,b\in\{0,1\}$)
$$(\forall m)(\exists i)(\mathbf{t}_i=a\wedge \mathbf{t}_{i+m+1}=b)$$
hold. This can be done automatically using the Walnut package \cite{walnut}. Here, we proceed with a classical proof relying on the definition of the Thue--Morse word in terms of the base-$2$ sum-of-digits function. We let $\rep_2(n)$ denote the base-$2$ expansion of $n$. If $n>0$, we assume that $\rep_2(n)$ starts with a $1$ (i.e., has no leading zeroes).

\begin{proposition}
\label{pro:factorsappear}
Let $k,m \in \mathbb{N}$ and $a,b \in\{0,1\}$. Let $p_u$ be a suffix of $\varphi^{k}(a)$ and $s_u$ be a prefix of $\varphi^{k}(b)$. There exists $z \in \{0,1\}^m$ such that $p_u \varphi^{k}(z) s_u$ is a factor of $\mathbf{t}$.
\end{proposition}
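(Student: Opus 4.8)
The plan is to reduce the statement to a purely combinatorial fact about the positions of the letters of $\mathbf{t}$, and then settle that fact with the sum-of-digits description. First I would observe that it suffices to produce, for every $a,b\in\{0,1\}$ and every $m\ge 0$, a factor of $\mathbf{t}$ of the form $a\,z\,b$ with $|z|=m$. Indeed, $\mathbf{t}$ is a fixed point of $\varphi$, so $\mathbf{t}=\varphi^k(\mathbf{t})$ and the image under $\varphi^k$ of any factor of $\mathbf{t}$ is again a factor of $\mathbf{t}$ (it occurs in $\varphi^k(\mathbf{t})$ at the position $2^k$ times that of the preimage). Hence if $a\,z\,b\in\Fac(\mathbf{t})$ then $\varphi^k(a)\,\varphi^k(z)\,\varphi^k(b)=\varphi^k(azb)\in\Fac(\mathbf{t})$. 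Writing $\varphi^k(a)=q\,p_u$ (since $p_u$ is a suffix) and $\varphi^k(b)=s_u\,r$ (since $s_u$ is a prefix), the word $p_u\,\varphi^k(z)\,s_u$ appears as a factor of $\varphi^k(azb)$, and therefore of $\mathbf{t}$, which is exactly the desired conclusion.

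Next I would reformulate the reduced statement using $\mathbf{t}_n\equiv s_2(n)\pmod 2$, where $s_2(n)$ denotes the number of $1$'s in $\rep_2(n)$. Setting $d=m+1\ge 1$, a factor $a\,z\,b$ of length $m+2$ corresponds to a position $i$ with $\mathbf{t}_i=a$ and $\mathbf{t}_{i+d}=b$. So the goal becomes: for every $d\ge 1$, the pair $(\mathbf{t}_i,\mathbf{t}_{i+d})$ attains all four values in $\{0,1\}^2$ as $i$ ranges over $\mathbb{N}$. I would exploit the additivity $s_2(x+y)=s_2(x)+s_2(y)-c(x,y)$, where $c(x,y)$ is the number of carries occurring in the binary addition of $x$ and $y$; in particular $s_2(x+y)=s_2(x)+s_2(y)$ whenever the two binary expansions do not overlap.

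The construction I would use is the following. Let $r$ be the position of the highest set bit of $d$ (so $d<2^{r+1}$) and put $t=r+2$, so that $d<2^t$. For a parameter $c\ge 0$ and an offset $j\in\{0,2^r\}$, set $i=c\cdot 2^t+j$. Since $j\le 2^r<2^t$ and $j+d<2^t$, the high part $c$ never interferes with the low part, giving $\mathbf{t}_i\equiv s_2(c)+s_2(j)$ and $\mathbf{t}_{i+d}\equiv s_2(c)+s_2(j+d)\pmod 2$. For $j=0$ this yields $\mathbf{t}_i+\mathbf{t}_{i+d}\equiv s_2(d)$; for $j=2^r$, a direct check shows that adding $2^r$ to $d$ creates exactly one carry (bit $r$ overflows into bit $r+1$, which is $0$ in $d$), so $s_2(2^r+d)=s_2(d)$ and hence $\mathbf{t}_i+\mathbf{t}_{i+d}\equiv s_2(d)+1$. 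Thus the two offsets realize the two opposite parities of $\mathbf{t}_i+\mathbf{t}_{i+d}$, and within each, letting $c\in\{0,1\}$ flips the value of $\mathbf{t}_i$; together this produces all four pairs. The main obstacle is precisely this parity control: one must exhibit an offset producing an \emph{odd} number of carries while keeping $j+d$ below $2^t$, and the choice $j=2^r$ with $r$ the top bit of $d$ is what forces the carry count to be exactly $1$.
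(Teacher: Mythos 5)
Your proof is correct and follows essentially the same route as the paper: reduce to finding a factor $azb$ of $\mathbf{t}$, then use $\mathbf{t}_n\equiv |\rep_2(n)|_1\pmod 2$ together with positions built from non-overlapping high bits (to flip $\mathbf{t}_i$) plus a low offset at the leading bit position of $m+1$ (to force exactly one carry and flip the correlation). The paper carries this out as an explicit case analysis over $a$, $b$ and the parity of $|\rep_2(m+1)|_1$ — its summands $2^{\ell-1}$ and $2^{\ell+1},2^{\ell+2}$ play precisely the roles of your $j=2^r$ and $c\cdot 2^t$ — whereas your two-parameter family $i=c\cdot 2^t+j$ packages all eight cases uniformly.
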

\begin{proof}
Let $a,b \in \{0,1\}, m \in \mathbb{N}$. We will prove that  there exists $z \in \{0,1\}^m$ such that $azb \in \Fac(\mathbf{t})$. Therefore, $\varphi^{k}(a) \varphi^k(z) \varphi^k (b)$ (and thus, $p_u \varphi^k(z) s_u$) is a factor of $\mathbf{t}$. We  set $\ell = |\rep_2(m+1)|$.

For all $n \in \mathbb{N}$, we let $\bt_n$ denote the $(n+1)^{\text{th}}$ letter of $\bt$. From \cite{AS}, we know that 
\begin{align*}
\bt_n = |\rep_2(n)|_1 \bmod 2.
\end{align*}

The idea of the proof is to find $x \in \mathbb{N}$ such that $\bt_x = a$ and $\bt_{x+m+1} =b$. Eight cases have to be considered depending on the parity of $a,b,\ell$. Let us first assume $a=b=0$. 

If $|\rep_2(m+1)|_1 \equiv 0 \pmod 2$, we can take $x = 2^{\ell+1} + 2^{\ell +2}$, so $\bt_x= 0$, 
$\rep_2(x+m+1)=110\cdot \rep_2(m+1)$ where $\cdot$ is just the concatenation product and thus, evaluating the parity of $1$'s in the expansion, we get $\bt_{x+m+1} = 0$. 

Otherwise, $|\rep_2(m+1)|_1 \equiv 1 \pmod 2$ and we can set $x = 2^{\ell -1} + 2^{\ell +1}$. Thus, we have to take care of a carry for $2^{\ell-1}+2^{\ell-1}=2^{\ell}$ and 
$$\rep_2(x+m+1) = 110\cdot 0^j\rep_2(m+1- 2^{\ell-1})$$ 
where $j=\ell-1-|\rep_2(m+1- 2^{\ell-1})|$ and thus, $\bt_{x+m+1}=1$.

The other cases are similarly treated. If $a=0,b=1$, one can verify that
\begin{align*}
x = \left\{\begin{array}{ll}
2^{\ell+1} + 2^{\ell +2} ,& \text{if } |\rep_2(m+1)|_1 \equiv 1 \pmod 2; \\
2^{\ell+1} + 2^{\ell -1}, & \text{otherwise.}
\end{array}
\right.
\end{align*}
is convenient.
Similarly, if $a=1, b=0$, we can set
\begin{align*}
x = \left\{\begin{array}{ll}
2^{\ell+1},  & \text{if } |\rep_2(m+1)|_1 \equiv 1 \pmod 2; \\
2^{\ell -1}, & \text{otherwise.}
\end{array}
\right.
\end{align*}
and finally, if $a=b=1$, then take
\begin{align*}
x = \left\{\begin{array}{ll}
2^{\ell+1},  & \text{if } |\rep_2(m+1)|_1 \equiv 0 \pmod 2; \\
 2^{\ell -1}, & \text{otherwise.}
\end{array}
\right.
\end{align*}
\end{proof}

Using this result, we can compute the values of $b_{\bt,2}$.

\begin{theorem}\cite[Thm. 3.3.6]{Lejeune}\label{thm:k2}
We have $b_{\bt,2}(0)=1$, $b_{\bt,2}(1)=2$, $b_{\bt,2}(2)=4$, $b_{\bt,2}(3)=6$ and for all $n \geq 4$, 
\begin{align*}
b_{\mathbf{t},2}(n) = \left\{
\begin{array}{ll}
9, & \text{if } n \equiv 0 \pmod 4; \\
8, & \text{otherwise.}
\end{array}
\right.
\end{align*}
\end{theorem}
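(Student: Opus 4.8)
The plan is to reduce the computation of $b_{\bt,2}(n)$ for large $n$ to a finite bookkeeping problem about the $\sim_2$-classes of factors, exploiting the desubstitution structure of $\bt$. Since $\bt$ is the fixed point of $\varphi$, every factor $u$ of length $n\ge 2$ sits inside some $\varphi(v)$, and by shifting one can write $u = s\,\varphi(v)\,p$ where $s$ is a (possibly empty) proper suffix of an image $\varphi(a)$ and $p$ a proper prefix of an image $\varphi(b)$, with $v$ a factor of $\bt$. Because $\varphi$ only has images of length $2$, the ``border'' letters $s,p$ live in a finite set, and $|s|,|p|\in\{0,1\}$. First I would set up this canonical factorization carefully and split according to the length $n\bmod 2$ (and ultimately $n\bmod 4$, since the target values depend on $n$ mod $4$).

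The key tool is Corollary~\ref{cor:phik}, which upgrades $\sim_1$ to $\sim_2$ under $\varphi$: if $v\sim_1 v'$ then $\varphi(v)\sim_2\varphi(v')$, and in particular $\varphi(a)\sim_2\varphi(b)$ for any letters by Lemma~\ref{lem:equiv-not}. The strategy is therefore to show that the $\sim_2$-class of $u=s\,\varphi(v)\,p$ is almost entirely determined by the abelian class of $v$ together with the short border data $(s,p)$, using the Transfer Lemma~\ref{lem:equiv2} and Corollary~\ref{cor:permut} to move the border letters around and to replace $\varphi(v)$ by $\varphi(v')$ whenever $v\sim_1 v'$. Next I would enumerate, for each residue of $n$ mod $4$, exactly which combinations of abelian class of $v$ and border type $(s,p)$ actually occur as factors of $\bt$ — this is precisely where Proposition~\ref{pro:factorsappear} is needed, as it guarantees that the candidate words one writes down genuinely appear in $\bt$ and are not spurious. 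The abelian complexity of $\bt$ taking only the values $2$ and $3$ controls how many abelian classes of inner word $v$ arise, and this is the arithmetic source of the $9$ versus $8$ dichotomy.

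After the reduction, the remaining work is to prove that the finitely many candidate representatives produced are pairwise $\sim_2$-inequivalent, so that the count is exactly the number of classes rather than an upper bound. For this I would compute the six relevant binomial coefficients $\binom{u}{x}$, $x\in\{0,1,00,01,10,11\}$, for each representative, using Lemma~\ref{lem:binomial2} to split $\binom{u}{x}$ along the factorization $u=s\,\varphi(v)\,p$ and Theorem~\ref{the:phifac} to express $\binom{\varphi(v)}{x}$ in terms of the binomial data of $v$. The point is that the $\sim_2$-class turns out to depend only on $|u|_0$, $|u|_1$ and $\binom{u}{01}$ (the remaining three coefficients being determined by these via the identities $\binom{u}{01}+\binom{u}{10}=|u|_0|u|_1$ and $\binom{u}{00}+\binom{u}{11}=\binom{|u|_0}{2}+\binom{|u|_1}{2}$), so distinguishing classes amounts to separating finitely many integer triples. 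Finally I would handle the small cases $n\le 3$ by direct inspection, matching $b_{\bt,2}(n)=p_\bt(n)$ there.

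I expect the main obstacle to be the second half: establishing that the natural representatives are genuinely pairwise inequivalent, i.e.\ that the upper bound obtained by the desubstitution reduction is tight. The congruence and cancellation lemmas readily give an \emph{upper} bound on the number of classes, but showing no two of the surviving candidates collapse requires an explicit and somewhat delicate computation of $\binom{u}{01}$ across the border/inner decomposition, and one must verify that the fluctuations coming from the border letters $s,p$ cannot be absorbed by a change of inner abelian class. Keeping track of exactly which border/inner combinations occur for each $n\bmod 4$ — and in particular isolating the extra class that appears only when $n\equiv 0\pmod 4$ — is the delicate combinatorial heart of the argument.
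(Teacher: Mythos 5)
Your overall architecture is the same as the paper's: desubstitute each factor, use Proposition~\ref{pro:factorsappear} to certify occurrence, reduce $\sim_2$ to the pair $\bigl(\binom{u}{0},\binom{u}{01}\bigr)$, and count distinct pairs for each residue of $n$ modulo $4$. The difference that creates a genuine gap is that you stop at an order-$1$ desubstitution $u=s\,\varphi(v)\,p$. With that choice the invariants of $u$ depend on the abelian class of the inner word $v$ (indeed $\binom{\varphi(v)}{01}=\binom{|v|}{2}+|v|_0$), so your enumeration must determine exactly which pairs (border data, $|v|_0$) are realized by factors of $\bt$. Proposition~\ref{pro:factorsappear} does not give this: it only asserts that for each border pair \emph{some} inner word $z$ of the right length works, and says nothing about which abelian classes of $z$ co-occur with which borders. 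The joint distribution is genuinely constrained: writing $u=s\varphi(v)p$ as a factor of $\varphi(avb)$, the word $avb$ must itself be a factor of $\bt$, so for instance no factor $0v0$ with $|v|=2m$ and $|v|_0=m+1$ exists (it would force a factor of length $2m+2$ with $m+3$ zeros), and conversely one would have to \emph{prove} that every admissible triple (first letter, last letter, abelian class) is achieved. The paper avoids all of this by desubstituting by $\varphi^2$: the inner block then contributes $\binom{\varphi^2(z)}{0}=2|z|$ and $\binom{\varphi^2(z)}{01}=\binom{2|z|}{2}+|z|$, both independent of $z$, so the bare existence statement of Proposition~\ref{pro:factorsappear} is exactly enough. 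Your plan is repairable by pushing the desubstitution one level further (which absorbs $|v|_0$ into the longer borders $p_u,s_u$), but as written the occurrence step is unsupported.

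Two smaller points. First, the candidate representatives are not pairwise inequivalent: several border configurations collapse (e.g.\ $\varphi^2(az)$, $01\varphi^2(z)10$ and $10\varphi^2(z)01$ give the same pair in the paper's table for $n\equiv 0\pmod 4$), so the last step is to count distinct values of $\bigl(\binom{u}{0},\binom{u}{01}\bigr)$ rather than to verify that nothing collides. Second, the heuristic that the abelian complexity taking the value $3$ is the arithmetic source of the $9$-versus-$8$ dichotomy is misleading: the abelian complexity of $\bt$ also equals $3$ for $n\equiv 2\pmod 4$, where $b_{\bt,2}(n)=8$; the extra class at $n\equiv 0\pmod{4}$ only emerges from the explicit computation.
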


\begin{proof}
Assume $n \geq 4$.
First observe that, for all words $u,v$ of the same length, 
\[
u \sim_2 v \Leftrightarrow \binom{u}{0} = \binom{v}{0} \, \text{and} \, \binom{u}{01} = \binom{v}{01}.
\]
Indeed, this is due to the fact that $\binom{u}{1} = |u| - \binom{u}{0}$, $\binom{u}{aa} = \binom{|u|_a}{2}$ for every $a \in \{0,1\}$ and $\binom{u}{10} = \binom{|u|}{2} - \binom{u}{00} - \binom{u}{01} - \binom{u}{11}$.

We will consider four cases depending on the value of $\lambda \in \{0, 1,2,3 \}$ such that $n \equiv \lambda \pmod 4$. For every one of them, we will compute 
\[
b_{\bt,2}(n) = \# \left\{ \left(\binom{u}{0}, \binom{u}{01} \right) \in \mathbb{N} \times \mathbb{N} : u \in \Fac_n(\textbf{t}) \right\}.
\]

Since $\textbf{t}$ is the fixed point of the morphism $\varphi$, we know that every factor $u$ of length $n$ of $\textbf{t}$ can be written $p_u \varphi^2(z) s_u$ for some $z \in A^*$ and $p_u$ (resp., $s_u$) suffix (resp., prefix) of a word in $\{ \varphi^2(0), \varphi^2(1) \}$. From the previous proposition, we also know that every word of that form occurs at least once in $\bt$.
Moreover, we have $|p_u| + |s_u| \in \{\lambda,\lambda+4\}$ and, as a consequence, $|z| = \left\lfloor \frac{n}{4} \right\rfloor$ = $\frac{n-\lambda}{4}$ or $|z| = \left\lfloor \frac{n}{4} \right\rfloor - 1$.
Set $\ell = \frac{n-\lambda}{4}$.

Let us first consider the case $\lambda=0$. We have
\begin{align*}
\Fac_n(\bt) = \{ 
&\varphi^2(az), 0 \varphi^2(z) 011, 0 \varphi^2(z) 100, 1 \varphi^2(z) 011, 1 \varphi^2(z) 100,\\ &01 \varphi^2(z) 01, 01 \varphi^2(z) 10, 10 \varphi^2(z) 01,10 \varphi^2(z) 10, \\ &110 \varphi^2(z) 0,110 \varphi^2(z) 1, 001 \varphi^2(z) 0,001 \varphi^2(z) 1 : z \in A^{\ell -1}, a \in A, az \in \Fac(\bt) 
\}
\end{align*}

Let us illustrate the computation of $\left(\binom{u}{0}, \binom{u}{01} \right)$ on $u=0\varphi^2(z)011 \in \Fac_n(\bt)$. Firstly,
$$\binom{u}{0} = \binom{0}{0} + \binom{\varphi^2(z)}{0} + \binom{011}{0}  = 2+2|z| = 2 \ell$$
since $|z| = \ell-1$. 
Similarly, we have
\begin{align*}
\binom{u}{01} &= \binom{0}{01} + \binom{\varphi^2(z)}{01} + \binom{011}{01} + \binom{0}{0} \binom{\varphi^2(z)}{1} + \binom{0}{0} \binom{011}{1} + \binom{\varphi^2(z)}{0} \binom{011}{1} \\
&= \binom{|\varphi(z)|}{2} + \binom{\varphi(z)}{0} + 2 + |\varphi(z)| + 2 + 2|\varphi(z)| \\
&= |z| (2|z|-1) + |z| + 6|z| + 4 = 2 \ell^2 + 2 \ell.
\end{align*}

All the computations are summarized in the table below. We give the form of the factors and respective values for the pairs $\left(\binom{u}{0}, \binom{u}{01} \right)$.
\begin{align*}
\begin{array}{|c|c|c|c|c|c|}
\hline
\text{Case}& \varphi^2(az) &  0 \varphi^2(z) 011 & 1\varphi^2(z) 100 & 0 \varphi^2(z) 100 & 001\varphi^2(z) 0  \\
& 01 \varphi^2(z) 10 & 001 \varphi^2(z) 1& 110 \varphi^2(z) 0 & & \\
& 10 \varphi^2(z) 01 & & & & \\
\hline
\binom{u}{0} & 2\ell & 2\ell & 2\ell & 2\ell+1 & 2\ell+1 \\
\binom{u}{01} & 2\ell^2 & 2\ell^2+2\ell & 2\ell^2-2\ell & 2\ell^2-1 & 2\ell^2 \\
\hline
\hline
\text{Case} &  1 \varphi^2(z) 011  & 110\varphi^2(z) 1 & 01 \varphi^2(z) 01 & 10 \varphi^2(z) 10 &   \\
\hline
\binom{u}{0} & 2\ell-1 & 2\ell-1 & 2\ell & 2\ell & \\
\binom{u}{01} & 2\ell^2 & 2\ell^2 +1 & 2\ell^2+1 & 2\ell^2-1 & \\
\hline
\end{array}
\end{align*}

This is thus clear that if $n \equiv 0 \pmod 4$, we have $b_{\bt,2}(n)  =9$.

Let us now present the results in the case $\lambda=1$. Let $a$ be a letter and $z$ be any word of $A^{\ell-1}$. We obtain the results below, showing that $b_{\bt,2}(n)=8$.
\begin{align*}
\begin{array}{|c|c|c|c|c|}
\hline
\text{Case}& \varphi^2(az) 0  & 0 \varphi^2(az)  & \varphi^2(az) 1 & 1 \varphi^2(az)  \\
& 01 \varphi^2(z) 100  & 110 \varphi^2(z) 01 & 10 \varphi^2(z) 011 & 001 \varphi^2(z) 10 \\
\hline
\binom{u}{0} & 2\ell+1 & 2\ell+1 & 2\ell  & 2\ell \\
\binom{u}{01} & 2\ell^2 & 2\ell^2+2\ell &2\ell^2+2\ell & 2\ell^2 \\
\hline
\hline
\text{Case} & 10\varphi^2(z) 100 & 001 \varphi^2(z) 01 &   01\varphi^2(z) 011 & 110 \varphi^2(z)10  \\
\hline
\binom{u}{0} & 2\ell+1  & 2\ell+1 & 2\ell & 2\ell \\
\binom{u}{01} & 2\ell^2-1  & 2\ell^2+2\ell+1 & 2\ell^2+2\ell +1 & 2\ell^2-1 \\
\hline
\end{array}
\end{align*}

In the case of $\lambda=2$, if $z$ is a word of $A^{\ell-1}$ and if $a$ is a letter, we obtain $b_{\bt,2}(n)=8$ due to the following results:

\begin{align*}
\begin{array}{|c|c|c|c|c|}
\hline
\text{Case}& \varphi^2(az) 01 & \varphi^2(az) 10 & 1 \varphi^2(az) 1 & 0 \varphi^2(az) 0  \\
& 01\varphi^2(az) & 10 \varphi^2(az) & 110 \varphi^2(z) 011& 001 \varphi^2(z) 100 \\
\hline
\binom{u}{0} & 2\ell+1 & 2\ell+1 & 2\ell & 2\ell+2 \\
\binom{u}{01} & 2\ell^2+2\ell+1 &2\ell^2+2\ell &2\ell^2+2\ell & 2\ell^2+2\ell \\
\hline
\hline
\text{Case} & 1\varphi^2(az) 0 & 0\varphi^2(az) 1 & 001 \varphi^2(z) 011 & 110 \varphi^2(z) 100  \\
\hline
\binom{u}{0} & 2\ell+1 & 2\ell+1 & 2\ell+1 & 2\ell+1 \\
\binom{u}{01} & 2\ell^2 & 2\ell^2+4\ell +1 & 2\ell^2+4\ell+2 & 2\ell^2-1 \\
\hline
\end{array}
\end{align*} 

Finally, if $\lambda=3$ and using the same notation, we obtain that $b_{\bt,2}(n)=8$ due to the following computations,

\begin{align*}
\begin{array}{|c|c|c|c|c|}
\hline
\text{Case}& \varphi^2(az) 011 & 110 \varphi^2(az) & \varphi^2(az) 100 & 001 \varphi^2(az)  \\
& 01 \varphi^2(az) 1 & 1 \varphi^2(az) 10 &  10 \varphi^2(az) 0 & 0 \varphi^2(az) 01 \\
\hline
\binom{u}{0} & 2\ell+1& 2\ell+1  & 2\ell+2 & 2\ell+2 \\
\binom{u}{01} & 2\ell^2+4\ell+2 &2\ell^2+2\ell &2\ell^2+2\ell  & 2\ell^2+4\ell+2 \\
\hline
\hline
\text{Case} & 1\varphi^2(az) 01& 10 \varphi^2(az)1  & 0\varphi^2(az) 10 & 01 \varphi^2(az) 0  \\
\hline
\binom{u}{0} & 2\ell+1  & 2\ell+1 & 2\ell+2 & 2\ell+2 \\
\binom{u}{01} & 2\ell^2+2\ell+1 & 2\ell^2+4\ell+1 & 2\ell^2+4\ell +1  & 2\ell^2+2\ell+1 \\
\hline
\end{array}
\end{align*}
which concludes the proof.
\end{proof}

\section{How to cut factors of the Thue--Morse word}

Computing $b_{\mathbf{t},k}(n)$, for all $k\ge 3$, will require much more knowledge about the factors of $\mathbf{t}$. This section is concerned about particular factorizations of factors occurring in $\mathbf{t}$.

Since $\mathbf{t}$ is a fixed point of $\varphi$, it is very often convenient to view $\mathbf{t}$ as a concatenation of blocks belonging to $\{\varphi^k(0),\varphi^k(1)\}$. Hence, we first define a function $\bark_k$ that roughly plays the role of a ruler marking the positions where a new block of length $2^k$ occurs (these positions are called {\em cutting bars of order~$k$}). For all $k \geq 1$, let us consider the function $\bark_k: \mathbb{N} \to \mathbb{N}$ defined by 
\[
	\bark_k(n) = |\varphi^k(\mathbf{t}_{[0,n)})| = n \cdot 2^k,
\]
where $\mathbf{t}_{[0,n)}$ is the prefix of length $n$ of $\mathbf{t}$.

Given a factor $u$ of $\mathbf{t}$, we are interested in the relative positions of $\bark_k(\mathbb{N})$ in $u$: we look at all the occurrences of $u$ in $\mathbf{t}$ and see what configurations can be achieved, that is how an interval $I$ such that $\bt_I = u$ can intersect $\bark_k(\mathbb{N})$.

For instance, for $k=1$, the word $u = 010$ occurs in $\mathbf{t}$ with two different factorizations:
\begin{equation}
	\begin{array}{rcccccccccc}
	\mathbf{t}  &=& \varphi(0) & \varphi(1) & \varphi(1) &\varphi(0)&\varphi(1) &\varphi(0) &\varphi(0) &\varphi(1)& \cdots \\
				&=& 01 &  1 0 & 10  & 01  & 10 & 01 & 01 & 10 & \cdots \\
				&=& \multicolumn{9}{l}{\;\, 01 \; \cdot \;\; 1 \boxed{0\; \cdot\;\;\; 10}{} \; \cdot \;\; 01 \;\; \cdot \;\; 10 \; \cdot \;\; \boxed{01 \; \cdot \;\; 0}1 \; \cdot \;\; \;10 \;\;\;\;\, \cdots}
	\end{array}
\end{equation}
				
The first occurrence of $010$ is obtained as a suffix of $\varphi(11)$ and the second one as a prefix of $\varphi(00)$. The dots represented in the above figure are representing the {\em cutting bars} (of order~$1$) of the substitution. So, we see that for the factor $010$, two kinds of configurations of the cutting bars can be achieved.

\begin{definition}[Cutting set]
\label{def:cuttingsets}
For all $k \geq 1$, we define the set $\Cut_k(u)$ of non-empty sets of relative positions of cutting bars
\[
\Cut_k(u):=\biggl\{ \bigl( [i,i+|u|] \, \cap \, \bark_k(\mathbb{N})\bigr)-i 
	\mid i \in \mathbb{N}, u = \mathbf{t}_{[i,i+|u|)} \biggr\}\, .
\] 
A \textit{cutting set of order} $k$ is an element of $\Cut_k(u)$. Observe that we consider the closed interval $[i,i+|u|]$ because we are also interested in knowing if the end of $u$ coincide with a cutting bar.
\end{definition}
To continue with our example, we have $\Cut_1(010) = \{\{1,3\},\{0,2\}\}$, meaning that $u$ contains two cutting bars and the first one is situated before or after the first letter.
We also represent this by
\[
	{\rm Cut_1}(010) = \{0 \cdot 10 \ \cdot, \cdot \ 01 \cdot 0\}.
\]

\begin{remark}
Let $u$ be a factor of $\bt$. Observe that, for all $\ell \geq 1$, $\Cut_\ell(u) \neq \emptyset$. 
It results from the following three observations. 
Obviously, $\bark_{k}(\mathbb{N})\subset \bark_{k-1}(\mathbb{N})$ and thus if $\Cut_k(u)$ is non-empty, then the same holds for $\Cut_{k-1}(u)$.

Next notice that if $\Cut_k(u)$ contains a singleton, then $\Cut_{k+1} (u)$ contains a singleton. Indeed, we can write $u=u_1 u_2$ with $u_1$ a suffix of $\varphi^k(a)$, $u_2$ a prefix of $\varphi^k(b)$. Thus $u_1$ is a suffix of $\varphi^{k+1}(\overline{a})$ and $u_2$ is a prefix of $\varphi^{k+1}(b)$.

Finally, there exists a unique $k$ such that $2^{k-1} \leq |u| \leq 2^k-1$. There also exists $i$ such that $u= \bt_{[i,i+|u|)}$. 
Simply notice that either $[i,i+|u|] \cap \bark_k(\mathbb{N})$ is a singleton or,  
$[i,i+|u|] \cap \bark_{k-1}(\mathbb{N})$ is a singleton. The conclusion follows.
\end{remark}

Observe that for any word $u$ and any set $C \in \Cut_k(u)$, there is a unique integer $r \in \{0,1,\dots,2^{k}-1\}$ such that $C \subset 2^k \mathbb{N}+r$.

\begin{lemma}
\label{lemma:unique_desubstitution}
Let $k$ be a positive integer, $u$ be a factor of $\bt$ and $C = \{i_1 < i_2 < \cdots < i_n\}$ be a set in $\Cut_k(u)$.
There is a unique factor $v$ of $\bt$ of length $n-1$ such that
$u = p \varphi^k(v) s$, with $|p|=i_1$.
Furthermore, if $i_1 > 0$ (resp., $i_n < |u|$), there is a unique letter $a$ such that $p$ (resp., $s$) is a proper suffix (resp., prefix) of $\varphi^k(a)$.
\end{lemma}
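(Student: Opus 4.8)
The plan is to read the desired factorization directly off a single occurrence of $u$ realizing $C$, and then to upgrade the resulting existence statement to uniqueness by invoking the injectivity of $\varphi^k$ together with the complementarity of the blocks $\varphi^k(0)$ and $\varphi^k(1)$. First I would fix an index $i$ with $u = \mathbf{t}_{[i,i+|u|)}$ and $C = ([i,i+|u|]\cap\bark_k(\mathbb{N})) - i$; such an $i$ exists by the very definition of $\Cut_k(u)$, and recall $\bark_k(\mathbb{N}) = 2^k\mathbb{N}$. Since $[i,i+|u|]\cap 2^k\mathbb{N}$ consists of all multiples of $2^k$ lying in the interval, its elements are consecutive, so after subtracting $i$ we get $i_{j+1}-i_j = 2^k$ for every $j$; in particular $i_n - i_1 = (n-1)2^k$. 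Writing $m_1 = (i+i_1)/2^k$ and $m_n = (i+i_n)/2^k$, we have $m_n - m_1 = n-1$. Because $\mathbf{t}$ is the fixed point of $\varphi$, we have $\mathbf{t} = \varphi^k(\mathbf{t})$, so $\mathbf{t}$ decomposes into blocks $\varphi^k(\mathbf{t}_m)$, the $m$-th occupying positions $[m 2^k, (m+1)2^k)$. Setting $v := \mathbf{t}_{[m_1,m_n)}$, a factor of $\mathbf{t}$ of length $n-1$, the portion of $u$ between the first and last cutting bars is exactly $\varphi^k(v)$; with $p := u_{[0,i_1)}$ and $s := u_{[i_n,|u|)}$ this yields a factorization $u = p\,\varphi^k(v)\,s$ with $|p| = i_1$, settling existence.

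For uniqueness of $v$, I would note that the constraints $|p| = i_1$ and $|v| = n-1$ force $\varphi^k(v)$ to occupy positions $[i_1,i_n)$ of $u$, so the middle factor $u_{[i_1,i_n)}$ is determined by $u$ alone. As $\varphi$ is injective on $A^*$ (a word in $\varphi(A^*)$ is recovered by decoding its length-two blocks $01\mapsto 0$, $10\mapsto 1$), so is $\varphi^k$; hence there is at most one $v$ with $\varphi^k(v) = u_{[i_1,i_n)}$, and existence makes it exactly one, necessarily the factor of $\mathbf{t}$ found above. For the boundary letters, if $i_1 > 0$ then $i$ is not a multiple of $2^k$, so $(m_1-1)2^k < i < m_1 2^k$ and $p = \mathbf{t}_{[i,m_1 2^k)}$ is a non-empty proper suffix of $\varphi^k(\mathbf{t}_{m_1-1})$; take $a := \mathbf{t}_{m_1-1}$. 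Symmetrically, if $i_n < |u|$ then $m_n 2^k < i+|u| < (m_n+1)2^k$ (the last inequality because $i_n$ is the largest cutting-bar position), so $s$ is a non-empty proper prefix of $\varphi^k(\mathbf{t}_{m_n})$; take $b := \mathbf{t}_{m_n}$.

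The uniqueness of these letters rests on the observation that $\varphi^k(1) = \overline{\varphi^k(0)}$, which follows by induction from $\varphi(\overline{w}) = \overline{\varphi(w)}$, so that $\varphi^k(0)$ and $\varphi^k(1)$ differ in every position; in particular they have distinct last (resp. first) letters, whence a non-empty word can be a proper suffix (resp. prefix) of at most one of them, pinning down $a$ (resp. $b$). I expect the only genuinely delicate point to be the positional bookkeeping ensuring that the cutting bars are consecutive multiples of $2^k$ and that $p$ and $s$ are \emph{proper} (non-empty, of length $< 2^k$) suffixes and prefixes; once that is in place, existence is a direct reading of the block decomposition of $\mathbf{t}$, while uniqueness reduces to the injectivity of $\varphi^k$ and the complementarity of its letter images, both of which are word-level facts independent of the chosen occurrence $i$.
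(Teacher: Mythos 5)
Your proposal is correct and follows essentially the same route as the paper: uniqueness of $v$ from the injectivity of $\varphi^k$ applied to the determined middle block $u_{[i_1,i_n)}$, and uniqueness of the letter $a$ from the fact that $\varphi^k(0)$ and $\varphi^k(1)$ share no non-empty common prefix or suffix (which you justify via their letterwise complementarity). The paper's proof is just a two-sentence version of your argument, leaving the existence and positional bookkeeping implicit.
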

\begin{proof}
Since $u_{[i_1,i_n)}$ belongs to $\varphi^k(A^*)$, the uniqueness of $v$ follows from the injectivity of~$\varphi$.
For the uniqueness of $a$, this follows from the fact that $\varphi^k(0)$ and $\varphi^k(1)$ do not have any (non-empty) common prefix or suffix.
\end{proof}

\begin{definition}[Factorization of order~$k$]
Given a factor $u$ of $\mathbf{t}$ of length at least $2^k-1$, there always exists a set $C$ in $\Cut_k(u)$.
By Lemma~\ref{lemma:unique_desubstitution}, we can associate with $C$ a unique pair 
\[
(p,s) \in A^* \times A^*
\]
and a unique triple 
\[
(a,v,b) \in (A \cup \{\varepsilon\}) \times A^* \times (A \cup \{\varepsilon\})
\]
such that $u = p \varphi^k(v) s$, where 
either $a=p=\varepsilon$ 
(resp., $b=s=\varepsilon$), 
or $a \neq \varepsilon$ and $p$ is a proper suffix of $\varphi^k(a)$ 
(resp., $b \neq \varepsilon$ and $s$ is a proper prefix of $\varphi^k(b)$).
In particular, we have $a=p=\varepsilon$ exactly when $\min(C) = 0$ and $b=s=\varepsilon$ exactly when $\max(C)=|u|$.
The triple $(a,v,b)$ is called the {\em desubstitution of $u$ associated with $C$} and the pair $(p,s)$ is called the {\em factorization of $u$ associated with $C$}.
If $C \in \Cut_k(u)$, then $(a,v,b)$ and $(p,s)$ are respectively desubstitutions and factorizations of {\em order} $k$.
\end{definition}

If $u$ is a factor of $\bt$ of length at least $4$, then $\Cut_1(u)$ contains a single set. 
Indeed, the factors of length 4 of $\bt$ are
\[
	\{0010, 0011, 0100, 0101, 0110, 1001, 1010, 1011, 1101, 1100\}.
\]
If the word $00$ or $11$ occurs in $u$, then $u$ necessarily has a cutting bar between the two occurrences of $0$ or of $1$ and this cutting bar determines all the others, forcing the set $\Cut_1(u)$ to be a singleton.
If $00$ and $11$ do not occur in $v$, then $v = 0101$ or $v=1010$, those  cases being symmetric.
If $v = 0101$, then the potential cutting bars are 
\[
	\cdot \ 01 \cdot 01 \ \cdot 
	\qquad \text{or} \qquad
	0 \cdot 10 \cdot 1. 
\]
However, the second case implies that the factor $v$ occurs in $\bt$ as a factor of $\varphi(111)$. 
As $111$ is not a factor of $\bt$, this shows that $\Cut_1(v) = \{\cdot \ 01 \cdot 01 \ \cdot\}$.

In the following statement, taking $k \geq 3$ ensures that we consider long enough words to have a unique set in $\Cut_1(u)$. 

\begin{lemma}
\label{lemma: bijection cut k cut k-1}
Let $k \geq 3$ be an integer, $u$ be a factor of $\bt$ of length at least $2^k-1$.
Let $(a,v,b)$ be the desubstitution of $u$ associated with the unique set in $\Cut_1(u)$ and let us write $u = p\varphi(v)s$, with $p$ suffix of $\varphi(a)$ and $s$ prefix of $\varphi(b)$.
Let finally $C$ be a set in $\Cut_k(u)$. 
\begin{enumerate}
\item
If $\min C < 2^k-1$ and $|u| - \max C < 2^k-1$,
then the set $C' = (C+|p|)/2$ belongs to $\Cut_{k-1}(avb)$;

\item
If $\min C = 2^k-1$ and $|u| - \max C < 2^k-1$,
then the set $C' = \{0\} \cup (C+|p|)/2$ belongs to $\Cut_{k-1}(avb)$;

\item
If $\min C < 2^k-1$ and $|u| - \max C = 2^k-1$,
then the set $C' = \{|avb|\} \cup (C+|p|)/2$ belongs to $\Cut_{k-1}(avb)$;

\item
If $\min C = 2^k-1$ and $|u| - \max C = 2^k-1$,
then the set $C' = \{0,|avb|\} \cup (C+|p|)/2$ belongs to $\Cut_{k-1}(avb)$.
\end{enumerate}
Moreover, the application from $\Cut_k(u)$ to $\Cut_{k-1}(avb)$ that maps $C$ to $C'$ is a bijection.
\end{lemma}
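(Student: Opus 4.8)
\emph{Plan.} The plan is to fix one occurrence of $u$ in $\bt$ witnessing the cutting set $C$ and to transport it, through the fixed-point identity $\bt=\varphi(\bt)$, to an occurrence of $avb$; the set $C'$ simply records the cutting bars of this transported occurrence. Concretely, suppose $u=\bt_{[i,i+|u|)}$ realizes $C$, so that $C=\bigl([i,i+|u|]\cap\bark_k(\mathbb{N})\bigr)-i$. Since $|u|\ge 2^k-1\ge 4$, the order-$1$ cutting bars of this occurrence are forced to sit at the relative positions of the unique set of $\Cut_1(u)$, whence $u=p\,\varphi(v)\,s$ with desubstitution $(a,v,b)$; in particular $|p|=|a|\in\{0,1\}$, $|s|=|b|\in\{0,1\}$, $|u|=|p|+2|v|+|s|$ and $|avb|=|p|+|v|+|s|$. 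The first order-$1$ cutting bar lies at relative position $|p|$, so $i\equiv|p|\pmod 2$, and writing $i'=(i-|p|)/2$ the identity $\bt=\varphi(\bt)$ gives $\bt_{[i',i'+|avb|)}=avb$ and $u=\varphi(avb)_{[|p|,|p|+|u|)}$.

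Next I would set up the position dictionary. Every order-$k$ cutting bar of $u$ is in particular an order-$1$ cutting bar, hence sits at a relative position $c\in\{|p|,|p|+2,\dots,|p|+2|v|\}$, the block boundaries delimiting $\varphi(v)$. The affine map $c\mapsto(c+|p|)/2$ sends this set bijectively onto the letter boundaries $\{|a|,\dots,|a|+|v|\}$ of $avb$ delimiting $v$. The crucial point is the clean compatibility
\[
i'+\frac{c+|p|}{2}=\frac{i+c}{2},
\]
which shows $i+c\equiv 0\pmod{2^k}$ (i.e.\ $c\in C$) if and only if $i'+\tfrac{c+|p|}{2}\equiv 0\pmod{2^{k-1}}$ (i.e.\ $\tfrac{c+|p|}{2}$ is an order-$(k-1)$ cutting bar of $avb$). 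Thus $C\mapsto(C+|p|)/2$ is a bijection onto the order-$(k-1)$ cutting bars of $avb$ lying in $\{|a|,\dots,|a|+|v|\}$. The only letter boundaries of $avb$ escaping this range are $d=0$ (a genuine boundary exactly when $a\neq\varepsilon$, i.e.\ $|p|=1$) and $d=|avb|$ (a boundary exactly when $b\neq\varepsilon$, i.e.\ $|s|=1$). Writing $\min C=r$ with $r\equiv -i\pmod{2^k}$ and $0\le r<2^k$, the identity applied to $d=0$ (at position $i'$) shows $0$ is an order-$(k-1)$ cutting bar iff $i\equiv|p|\pmod{2^k}$, which for $|p|=1$ means $r=2^k-1$, i.e.\ $\min C=2^k-1$; and for $|p|=0$ one has $r$ even, so this never occurs and $d=0$ is already supplied by the interior map. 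The symmetric computation at $d=|avb|$ (at position $i'+|avb|=(i+|u|+|s|)/2$) gives that $|avb|$ is an order-$(k-1)$ cutting bar iff $|u|-\max C=2^k-1$. Adjoining these two boundary points exactly when the stated inequalities become equalities reproduces the four cases and proves $C'\in\Cut_{k-1}(avb)$.

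For the bijection I would exhibit the reverse map, obtained by substituting: an occurrence $avb=\bt_{[i',i'+|avb|)}$ witnessing $C'$ yields $\varphi(avb)=\bt_{[2i',2i'+2|avb|)}$, whose order-$k$ cutting bars are exactly $2C'$, and intersecting with the window $[|p|,|p|+|u|]$ occupied by $u$ gives $C'\mapsto(2C'-|p|)\cap[0,|u|]$. In the forward-then-reverse direction the two added boundary points $0$ and $|avb|$ are sent to $-|p|<0$ and $2|avb|-|p|=|u|+|s|>|u|$, hence discarded by the intersection, returning $C$. In the reverse-then-forward direction the elements of $C'$ killed by the window are precisely $d=0$ (present only if $|p|=1$ and $0\in C'$) and $d=|avb|$ (present only if $|s|=1$ and $|avb|\in C'$), and the residue computations above show these are re-added by the boundary rule exactly when they were removed. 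I expect this boundary bookkeeping — pinning down when the two endpoints of $avb$ are order-$(k-1)$ cutting bars, matching those conditions to $\min C=2^k-1$ and $|u|-\max C=2^k-1$, and then verifying both round-trips — to be the main obstacle; the interior correspondence is essentially forced by $\bt=\varphi(\bt)$ and is routine.
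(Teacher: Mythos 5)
Your proof is correct, and the underlying correspondence is the one the paper uses — an order-$k$ cutting bar of $u$ at relative position $c$ matches an order-$(k-1)$ cutting bar of $avb$ at $(c+|p|)/2$, with the endpoints $0$ and $|avb|$ adjoined exactly when $\min C=2^k-1$ or $|u|-\max C=2^k-1$ — but your route to it is genuinely different. The paper argues purely on factorizations: it writes $u=\alpha\,\varphi^k(a_1\cdots a_t)\,\beta$ for the desubstitution associated with $C$, refactors $\alpha$ and $\beta$ into blocks of $\varphi(A)$, and reads off $C'$ as $\{\lceil r/2\rceil,\lceil r/2\rceil+2^{k-1},\dots\}$; no occurrence positions in $\bt$ appear. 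You instead anchor $C$ to an actual occurrence $\bt_{[i,i+|u|)}$ and transfer it through $\bt=\varphi(\bt)$ via the identity $i'+(c+|p|)/2=(i+c)/2$, turning everything into residue computations modulo $2^k$ and $2^{k-1}$. This makes the boundary cases and the exhaustiveness of the four alternatives mechanical (e.g.\ $\min C=(-i)\bmod 2^k$, so $\min C=2^k-1$ can only occur when $|p|=1$, which is exactly when the boundary $0$ of $avb$ escapes the interior map), at the small cost of having to justify that $i\equiv|p|\pmod 2$ and that the chosen occurrence realizes $C$ — which you do via the uniqueness of $\Cut_1(u)$. For the bijectivity the two arguments diverge most: the paper declares surjectivity immediate and gets injectivity from $|\min C-\min D|\ge 2$ (both cutting sets sit inside the unique order-$1$ cutting set), whereas you exhibit the explicit inverse $C'\mapsto(2C'-|p|)\cap[0,|u|]$ and verify both round trips; your check is longer but delivers surjectivity by construction rather than by assertion.
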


\begin{proof}
We consider the desubstitution $(a_0,a_1a_2\cdots a_t,a_{t+1})$ associated with $C$.
By definition, there is a unique $r \in \{0,1,\dots, 2^{k}-1\}$ such that 
$C = \{r,r+2^k,r+2 \cdot 2^k, \dots, r+t\cdot 2^k\}$ and we have 
\[
	u = \alpha \varphi^k(a_1 \cdots a_t) \beta,
\]
with $\alpha$ the suffix of length $r$ of $\varphi^k(a_0)$ and $\beta$ the prefix of length $|u|-r-t\cdot 2^k$ of $\varphi^k(a_{t+1})$.
There exist words $u_1,u_2,\dots,u_m, u_1',u_2',\dots,u_n'$ in $\varphi(A)$ such that 
\[
	\alpha = p u_1 u_2 \cdots u_m
	\quad \text{and} \quad
	\beta = u_1' u_2' \cdots u_n' s.	
\]
Let $v_1,\dots,v_m,v_1',\dots,v_n' \in A$ such that $u_i = \varphi(v_i)$ and $u_i' = \varphi(v_i')$ for all $i$.
We have 
\begin{equation}
\label{eq:facto k-1 avb}
	avb =  a v_1 \cdots v_m \varphi^{k-1}(a_1 \cdots a_t) v_1' \cdots v_n' b
\end{equation} 
and $a v_1 \cdots v_m$ is a suffix of $\varphi^{k-1}(a_0)$ and $v_1' \cdots v_n' b$ is a prefix of $\varphi^{k-1}(a_{t+1})$.

If $|\alpha|,|\beta|< 2^k-1$, then 
\[
	|a v_1 \cdots v_m| = \lceil r/2 \rceil< 2^{k-1}
	\quad \text{and} \quad
	|v_1' \cdots v_nb|< 2^{k-1}.	
\]
Therefore, the set $C' \in \Cut_{k-1}(avb)$ associated with the factorization~\eqref{eq:facto k-1 avb} is 
\[
	C' = \{\lceil r/2 \rceil, \lceil r/2 \rceil + 2^{k-1} , \dots, \lceil r/2 \rceil + t \cdot 2^{k-1}\}.
\]
For the other cases, if for instance $|\alpha| = 2^k-1$, then $|a v_1 \cdots v_m|= 2^{k-1}$, which explains why we add $0$ in $C'$.

Let us show that the correspondence between $C$ and $C'$ is bijective.
It is trivially surjective.
If there is some other cutting set $D = \{r',r'+2^k,\dots \}$ in $\Cut_k(u)$, then $|r-r'| \geq 2$ since both $C$ and $D$ must be included in the unique set of $\Cut_1(u)$.
This shows that the associated sets $C',D' \in \Cut_{k-1}(avb)$ are different.
\end{proof}

The substitution $\varphi$ being primitive and $\mathbf{t}$ being aperiodic, Moss\'e's recognizability theorem ensures that the substitution $\varphi^k$ is {\em bilaterally recognizable}~\cite{mos1,mos2} for all $k \geq 1$, i.e., any sufficiently long factor $u$ of $\mathbf{t}$ can be uniquely desubstituted by $\varphi^k$ (up to a prefix and a suffix of bounded length).
In the case of the Thue--Morse substitution, we can make this result more precise. Similar results are considered in \cite{Grei} where the term (maximal extensible) reading frames is used.

\begin{lemma}
\label{lemma:cut singleton}
Let $k$ be a positive integer.
If $u$ is a factor of $\bt$ of length $|u| > 3 \cdot 2^{k-1}$, then $\Cut_k(u)$ is a singleton.
\end{lemma}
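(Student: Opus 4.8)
The plan is to prove the statement by induction on $k$, using Lemma~\ref{lemma: bijection cut k cut k-1} to descend from order $k$ to order $k-1$ on a word of (roughly) half the length. For the base case $k=1$, the hypothesis $|u|>3\cdot 2^{0}=3$ means $|u|\ge 4$, and it was already observed above that every factor of $\bt$ of length at least $4$ has a singleton set $\Cut_1(u)$; so the base case is exactly that established fact about $\Cut_1$.

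For the inductive step I would fix $k\ge 2$ and assume the result at order $k-1$. Given a factor $u$ of $\bt$ with $|u|>3\cdot 2^{k-1}$, I note that $3\cdot 2^{k-1}\ge 6$, so $|u|\ge 7\ge 4$ and $\Cut_1(u)$ is a singleton; this lets me form the desubstitution $(a,v,b)$ of $u$ associated with it and write $u=p\,\varphi(v)\,s$ with $p$ a suffix of $\varphi(a)$ and $s$ a prefix of $\varphi(b)$. By Lemma~\ref{lemma: bijection cut k cut k-1} there is a bijection between $\Cut_k(u)$ and $\Cut_{k-1}(avb)$ (its length hypothesis $|u|\ge 2^k-1$ holds since $|u|>3\cdot 2^{k-1}>2^k-1$). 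Since a bijection preserves cardinality, it then suffices to show that $\Cut_{k-1}(avb)$ is a singleton, for which I invoke the induction hypothesis; this requires the length bound $|avb|>3\cdot 2^{k-2}$.

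The length bookkeeping is short. Recall that $a=\varepsilon$ forces $p=\varepsilon$, while $a\neq\varepsilon$ forces $|p|=1$ (a proper nonempty suffix of the two-letter word $\varphi(a)$), so $|a|=|p|$; symmetrically $|b|=|s|$. From $u=p\,\varphi(v)\,s$ one has $|u|=|p|+2|v|+|s|$, whereas $|avb|=|a|+|v|+|b|=|p|+|v|+|s|$; eliminating $|v|$ gives
\[
	|avb|=\frac{|u|+|p|+|s|}{2}\ge \frac{|u|}{2}>\frac{3\cdot 2^{k-1}}{2}=3\cdot 2^{k-2}.
\]
Thus the induction hypothesis at order $k-1$ applies to $avb$, yielding that $\Cut_{k-1}(avb)$ is a singleton, and transporting this back through the bijection shows $\Cut_k(u)$ is a singleton.

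The one delicate point, which I expect to be the main obstacle to a fully uniform write-up, is that Lemma~\ref{lemma: bijection cut k cut k-1} is stated only for $k\ge 3$, so the transition from $k=2$ down to $k=1$ is not literally covered. However, the restriction $k\ge 3$ there serves only to guarantee, through $|u|\ge 2^k-1\ge 7$, that $\Cut_1(u)$ is a singleton; in the present induction we already know $|u|\ge 7\ge 4$ when $k=2$, so $\Cut_1(u)$ is a singleton and the proof of that lemma carries over verbatim. (Alternatively, the case $k=2$, i.e.\ $|u|\ge 7$, can be checked directly by the same reduction argument.) Everything else is routine length arithmetic.
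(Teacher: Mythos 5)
Your proof is correct, and it takes a somewhat different route from the paper's. The paper first reduces to words of length exactly $3\cdot 2^{k-1}+1$ (via unique extension of cutting sets from a prefix), then runs a strong induction on $k$ in which the singleton property is reformulated as uniqueness of a minimal preimage word $w$ with $u$ a factor of $\varphi^k(w)$: the inductive step applies the hypothesis at order $1$ to $u$ and at order $k$ to the order-$1$ desubstitution $v$, and composes the two desubstitutions. You instead keep arbitrary lengths, do an ordinary induction, and outsource the descent entirely to Lemma~\ref{lemma: bijection cut k cut k-1}, which turns $\#\Cut_k(u)=\#\Cut_{k-1}(avb)$ into a one-line step once the bound $|avb|=(|u|+|p|+|s|)/2>3\cdot 2^{k-2}$ is checked (your length bookkeeping, including $|a|=|p|$ and $|b|=|s|$ for order-$1$ factorizations, is right). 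The underlying descent is the same $\varphi$-desubstitution in both cases; what your version buys is brevity and reuse of an already-proved lemma, at the cost of the one genuine wrinkle you flag yourself: Lemma~\ref{lemma: bijection cut k cut k-1} is stated only for $k\ge 3$, so the step from $k=2$ to $k=1$ is not literally covered. Your diagnosis is accurate --- the hypothesis $k\ge 3$ is used in that lemma's proof only to force $|u|\ge 2^k-1\ge 7\ge 4$ so that $\Cut_1(u)$ is a singleton (needed both to align the order-$k$ cutting set with the order-$1$ blocks and for the injectivity argument), and in your application $|u|>3\cdot 2=6$ supplies this directly --- but a complete write-up should either state and prove the $k=2$ instance of the bijection separately or carry out the $k=2$ reduction by hand, rather than assert that the proof ``carries over verbatim.'' The paper's formulation avoids this issue entirely since its inductive step never invokes the bijection lemma.
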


\begin{proof}
First observe that given a word $u$ and a prefix $v$ of $u$, a set of cutting bars for $v$ can be extended in a unique way into a set of cutting bars for $u$.
More precisely, if $v$ is a prefix of $u$ and if $C$ belongs to $\Cut_k(v)$, there is a unique set $C'$ such that $C' \in \Cut_k(u)$ and $C \subset C'$.
It is thus enough to prove the result for words of length exactly $3\cdot 2^{k-1}+1$.

We proceed by induction on $k$.
The case $k = 1$ has already been considered before Lemma~\ref{lemma: bijection cut k cut k-1}.
Let us now assume that the result is true for all $\ell \leq k$ and let us prove it for $k+1$.
If $|u|=3\cdot 2^{k}+1$, then by the induction hypothesis, $\Cut_1(u)$ is a singleton and, using Lemma~\ref{lemma:unique_desubstitution}, there is a unique factor $v$ of $\bt$ such that 
\begin{enumerate}
\item
$u$ is a factor of $\varphi(v)$;
\item
$u$ is not a factor of $\varphi(v')$ for any proper factor $v'$ of $v$.
\end{enumerate}
Since $u$ is a factor of $\varphi(v)$, we have $|u| \leq 2 |v|$ and thus, since $|v|$ is an integer, $|v| > 3\cdot 2^{k-1}$.
Using again the induction hypothesis and Lemma~\ref{lemma:unique_desubstitution}, there is a unique factor $w$ of $\bt$ such that
\begin{enumerate}
\item
$v$ is a factor of $\varphi^k(w)$;
\item
$v$ is not a factor of $\varphi^k(w')$ for any proper factor $w'$ of $w$.
\end{enumerate}
This word $w$ is thus the unique factor of $\bt$ such that
\begin{enumerate}
\item
$u$ is a factor of $\varphi^{k+1}(w)$;
\item
$u$ is not a factor of $\varphi^{k+1}(w')$ for any proper factor $w'$ of $w$.
\end{enumerate}
This shows that $\Cut_{k+1}(u)$ is a singleton.
\end{proof}

\begin{lemma}
\label{lem:relationCuts}
Let $k \geq 3$ be an integer and $u$ be a factor of $\bt$ of length $2^k-1 \leq |u| \leq 3 \cdot 2^{k-1}$.
Then $\Cut_k(u)$ is a not a singleton if and only if $u$ is a factor of $\varphi^{k-1}(010)$ or of $\varphi^{k-1}(101)$, in which case $\Cut_k(u) = \{C_1,C_2\}$ and $|\min C_1 - \min C_2|=2^{k-1}$.
In this case, let  $(p_1,s_1)$, $(p_2,s_2)$ be the two factorizations of order $k$ respectively associated with $C_1, C_2 \in \Cut_k(u)$. Without loss of generality, assume that $|p_1|<|p_2|$. Then, there exists $a \in A$ such that either
 \[|p_1|+|s_1| = |p_2|+|s_2|
 \text{ and }  
 	(p_2, \varphi^{k-1}(a) s_2) = (p_1 \varphi^{k-1}(a), s_1)
 \]

or, 
 \[ ||p_1|+|s_1| - (|p_2|+|s_2|)| = 2^k \text{ and }   
 	(p_2, s_2) = (p_1 \varphi^{k-1}(\bar{a}), \varphi^{k-1}(a) s_1).
 \]

\end{lemma}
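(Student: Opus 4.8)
The statement naturally splits into a \emph{qualitative} part (when $\Cut_k(u)$ fails to be a singleton) and a \emph{quantitative} part (the value $2^{k-1}$ of the gap and the two forms of the factorisations). For the qualitative part I would argue by induction on $k\ge 3$ using the bijection of Lemma~\ref{lemma: bijection cut k cut k-1}. Writing the order-$1$ desubstitution of $u$ as $(a,v,b)$, so that $u=p\varphi(v)s$ with $|p|,|s|\le 1$, we have $|avb|=(|u|+|p|+|s|)/2$, whence $2^{k-1}\le |avb|\le 3\cdot 2^{k-2}+1$, and the lemma yields a bijection $\Cut_k(u)\to\Cut_{k-1}(avb)$. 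Thus $\Cut_k(u)$ is a singleton if and only if $\Cut_{k-1}(avb)$ is. When $|avb|=3\cdot 2^{k-2}+1$, Lemma~\ref{lemma:cut singleton} makes $\Cut_{k-1}(avb)$ a singleton, and one checks this length forces $u=\varphi^{k-1}(010)$ up to starting position and so is incompatible with the hypotheses; otherwise $|avb|\le 3\cdot 2^{k-2}$ and the induction hypothesis applies.

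It then remains to transfer the membership condition across the bijection. On one hand $u$ is a factor of $\varphi(a)\varphi(v)\varphi(b)=\varphi(avb)$, so if $avb$ is a factor of $\varphi^{k-2}(010)$ then $u$ is a factor of $\varphi^{k-1}(010)$ (and similarly for $101$); on the other hand, by uniqueness of desubstitution (Lemma~\ref{lemma:unique_desubstitution}) the order-$1$ desubstitution of a factor of $\varphi^{k-1}(010)=\varphi(\varphi^{k-2}(010))$ is itself a factor of $\varphi^{k-2}(010)$. Hence the two membership conditions correspond exactly, closing the induction for the equivalence. The base case $k=3$, where $|avb|\in\{4,\dots,7\}$ lies below the range of Lemma~\ref{lemma: bijection cut k cut k-1}, I would settle by hand, either by one further explicit descent to order $1$ reusing the computation $\Cut_1(010)=\{\{1,3\},\{0,2\}\}$ and its complement carried out before Lemma~\ref{lemma: bijection cut k cut k-1}, or by finite inspection of the factors of $\mathbf t$ of length at most $12$.

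For the quantitative part I would avoid the endpoint bookkeeping of the bijection and argue structurally. Since $|u|>3\cdot 2^{k-2}$, Lemma~\ref{lemma:cut singleton} gives that $\Cut_{k-1}(u)$ is a singleton $\{D\}$; write $D=\{b_1<\cdots<b_m\}$, where consecutive order-$(k-1)$ bars are $2^{k-1}$ apart and $m\ge 2$ in our length range. Every cutting set of order $k$ consists of the order-$(k-1)$ bars at positions $\equiv 0 \pmod{2^k}$, that is of \emph{every other} element of $D$; there are therefore at most two such sets, the one containing $b_1$ and the one containing $b_2$, and their minima differ by exactly $2^{k-1}$. Combined with the qualitative part (which guarantees at least two when $u$ is a factor of $\varphi^{k-1}(010)$ or $\varphi^{k-1}(101)$), this yields $\Cut_k(u)=\{C_1,C_2\}$ with $|\min C_1-\min C_2|=2^{k-1}$. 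Labelling so that $|p_1|<|p_2|$, the block of $u$ between $b_1$ and $b_2$ is $\varphi^{k-1}(c_1)$, giving $p_2=p_1\varphi^{k-1}(c_1)$, where $c_1$ is the first letter of the order-$(k-1)$ desubstitution word $c_1\cdots c_{m-1}$ of $u$. Reading the suffixes off the last two bars shows that the dichotomy between the two displayed cases is exactly the parity of $m$: for $m$ even one obtains $s_1=\varphi^{k-1}(c_{m-1})s_2$ (case (i)), and for $m$ odd one obtains $s_2=\varphi^{k-1}(c_{m-1})s_1$ (case (ii)). Finally, because $u$ is a factor of $\varphi^{k-1}(010)$ or $\varphi^{k-1}(101)$, the word $c_1\cdots c_{m-1}$ is a factor of the alternating word $010$ or $101$; for such a word the extreme letters satisfy $c_1=c_{m-1}$ when $m$ is even and $c_1=\overline{c_{m-1}}$ when $m$ is odd, which produces precisely the single letter $a$ (with $\overline{a}$ in case (ii)) and the asserted relations, while the length identities $|p_1|+|s_1|=|p_2|+|s_2|$ and $||p_1|+|s_1|-(|p_2|+|s_2|)|=2^k$ follow by adding the two $2^{k-1}$-shifts.

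The main obstacle is the base case $k=3$: since the bijection of Lemma~\ref{lemma: bijection cut k cut k-1} is only valid for $k\ge 3$, the order-$2$ behaviour cannot be bootstrapped from it and must be pinned down directly, and it is there that the correspondence with factors of $\varphi(010)$ and $\varphi(101)$ has to be verified from scratch. A secondary technical point, which the structural treatment of the quantitative part is designed to sidestep, is that tracking $\min C_i$ directly through the bijection is awkward because of the four cases of Lemma~\ref{lemma: bijection cut k cut k-1} in which $0$ or $|avb|$ is adjoined to the descended set; routing the gap computation through the uniqueness of $\Cut_{k-1}(u)$ instead makes the value $2^{k-1}$ transparent.
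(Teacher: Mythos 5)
Your proof is correct, and while the first half follows the paper's strategy, the second half takes a genuinely different and arguably cleaner route. For the equivalence (``$\Cut_k(u)$ non-singleton iff $u$ is a factor of $\varphi^{k-1}(010)$ or $\varphi^{k-1}(101)$''), both you and the paper run the same induction through the bijection of Lemma~\ref{lemma: bijection cut k cut k-1}, transferring the membership condition via the order-$1$ desubstitution and settling $k=3$ by hand; you merely write the induction top-down rather than bottom-up, and you are in fact slightly more careful than the paper in checking that $|avb|$ lands in the admissible range for the induction hypothesis (including the boundary value $3\cdot 2^{k-2}+1$, which you correctly dispose of via Lemma~\ref{lemma:cut singleton}). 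Where you genuinely diverge is the quantitative part. The paper obtains the gap $2^{k-1}$ by tracking $|\min C_1 - \min C_2|$ through the bijection (it doubles at each level), and then derives the two factorization identities by an explicit positional case analysis on whether $u$ sits strictly inside $\varphi^{k-1}(a\bar a a)$, is a prefix of it (with two sub-cases), or is a suffix. You instead exploit that $|u| > 3\cdot 2^{k-2}$ forces $\Cut_{k-1}(u)$ to be a singleton $\{D\}$ with $D=\{b_1<\cdots<b_m\}$, observe that every order-$k$ cutting set must be one of the two alternating sub-selections of $D$, and read off both the gap $2^{k-1}$ and the dichotomy of factorization identities from the parity of $m$ together with the fact that the order-$(k-1)$ desubstitution word $c_1\cdots c_{m-1}$ is an alternating factor of $010$ or $101$ (so $c_1=c_{m-1}$ for $m$ even and $c_1=\overline{c_{m-1}}$ for $m$ odd). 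This buys a uniform argument with no case splitting on the position of $u$ inside $\varphi^{k-1}(a\bar a a)$, gives ``at most two cutting sets'' for free without the induction, and makes the origin of the two displayed situations (even versus odd $m$) transparent; I checked that your parity dichotomy reproduces exactly the assignments the paper obtains case by case. The only price is that you still need the inductive equivalence to know that both alternating sub-selections of $D$ are actually realized, which you correctly acknowledge.
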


\begin{proof}
The case $k=3$ can be checked by hand.
Assume that the result holds for $k \geq 3$ and let us prove for $k+1$.
Let $(a,v,b)$ be the desubstitution of $u$ associated with the unique set in $\Cut_1(u)$. 
By Lemma~\ref{lemma: bijection cut k cut k-1}, we have $\# \Cut_{k+1}(u) = \# \Cut_{k}(avb)$ and if $\Cut_{k}(avb) = \{C_1',C_2'\}$ and $\Cut_{k+1}(u) = \{C_1,C_2\}$, then $|\min C_1 - \min C_2| = 2 |\min C_1' - \min C_2'|$. 
Furthermore, $u$ is a factor of $\varphi^{k}(010)$ (resp., of $\varphi^{k}(101)$) if and only if $avb$ is a factor of $\varphi^{k-1}(010)$ (resp., of $\varphi^{k-1}(101)$).

For the last part of the proof, first assume that $u$ is a factor of $\varphi^{k-1}(a \bar{a} a)$, but not a prefix nor a suffix.
Since $|u| \geq 2^k-1$, we have $u = u' \varphi^{k-1}(\bar{a}) u''$, with $u'$ and $u''$ respectively suffix and prefix of $\varphi^{k-1}(a)$, $|u'|,|u''|<2^{k-1}$.
Therefore, $u$ admits the two cuttings sets
\[
	u' \cdot \varphi^{k-1}(\bar{a}) u''
	\quad \text{and} \quad
	u' \varphi^{k-1}(\bar{a}) \cdot u''.	
\]
The associated factorizations are
\[
	(u' , \varphi^{k-1}(\bar{a}) u'')
	\quad \text{and} \quad
	(u' \varphi^{k-1}(\bar{a}), u'')	
\]
so we are in the first situation.

Assume now that $u$ is a prefix of $\varphi^{k-1}(a \bar{a} a)$; the case where $u$ is a suffix is similar.
Two cases can occur: either $\varphi^{k-1}(a \bar{a})$ is a prefix of $u$, or $u$ is a proper prefix of $\varphi^{k-1}(a \bar{a})$.
If $\varphi^{k-1}(a \bar{a})$ is a prefix of $u$, then $u = \varphi^{k-1}(a \bar{a}) u'$ for some prefix $u'$ of $\varphi^{k-1}(a)$.
If $|u'|<2^{k-1}$, the two cutting sets of order $k$ of $u$ are
\[
	\cdot \, \varphi^k(a) \cdot u'
	\quad \text{and} \quad
	\varphi^{k-1}(a) \cdot \varphi^{k-1}(\bar{a}) u'
\]
and the associated factorizations are respectively
\[
	(\varepsilon,u')
	\quad \text{and} \quad
	(\varphi^{k-1}(a) , \varphi^{k-1}(\bar{a}) u').
\]
We are thus in the second situation. Else, $u' = \varphi^{k-1}(a)$, the two cutting sets of order $k$ of $u$ are
\[
	\cdot \, \varphi^k(a) \cdot u'
	\quad \text{and} \quad
	\varphi^{k-1}(a) \cdot \varphi^{k-1}(\bar{a}) u' \, \cdot
\]
and the associated factorizations are respectively
\[
	(\varepsilon,u')
	\quad \text{and} \quad
	(\varphi^{k-1}(a) , \varepsilon).
\]
We are in the first situation.

If $u$ is a proper prefix of $\varphi^{k-1}(a \bar{a})$, then $u = \varphi^{k-1}(a) u'$ where $u'$ is the prefix of length $2^{k-1}-1$ of $\varphi^{k-1}(\bar{a})$ (because $|u| \geq 2^k-1$).
The two cutting sets of order $k$ of $u$ are
\[
	\cdot \, \varphi^{k-1}(a) u'
	\quad \text{and} \quad
	\varphi^{k-1}(a) \cdot u'
\]
and the associated factorizations are respectively
\[
	(\varepsilon,\varphi^{k-1}(a) u')
	\quad \text{and} \quad
	(\varphi^{k-1}(a) , u').
\]
We are thus in the first situation.
\end{proof}

\section{Types associated with a factor}\label{sec:types}

\begin{remark}
    All the following constructions rely on Lemma~\ref{lem:relationCuts}. Thus, in the remaining of this paper, we will always assume that $k \geq 3$.
\end{remark}

Lemma~\ref{lem:relationCuts} ensures us that whenever a word has two cutting sets,  
then their associated factorizations are strongly related.
We will now show that whenever two factors $u, v$ of the same length of $\bt$ admits factorizations of order $k$ that are similarly related, then these two words are $k$-binomially equivalent.
   
To this aim, we introduce an equivalence relation $\equiv_k$ on the set of pairs $(x,y) \in A^{< 2^k} \times A^{< 2^k}$. 
The core result of this section is given by Theorem~\ref{thm:typediff} stating that two words are $k$-binomially equivalent if and only if their factorizations of order~$k$ are equivalent for this new relation $\equiv_k$. 
So, the computation of $b_{\mathbf{t},k}(n)$ amounts to determining the number of equivalence classes for $\equiv_k$ among the factorizations of order~$k$ for words in $\Fac_n(\mathbf{t})$. 

\begin{definition}
\label{def:type}
Two pairs $(p_1,s_1)$ and $(p_2,s_2)$ of $A^{< 2^k} \times A^{< 2^k}$ are equivalent for $\equiv_k$ whenever there exists $a \in A$ such that one of the following situations occurs:
\begin{enumerate}
\item
$|p_1|+|s_1| = |p_2|+|s_2|$ and
\begin{enumerate}
\item 
$(p_1,s_1) = (p_2,s_2)$;
\item
$(p_1, \varphi^{k-1}(a) s_1) = (p_2 \varphi^{k-1}(a), s_2)$;
\item
$(p_2, \varphi^{k-1}(a) s_2) = (p_1 \varphi^{k-1}(a), s_1)$; 
\item
$(p_1,s_1) = (s_2,p_2) = (\varphi^{k-1}(a),\varphi^{k-1}(\bar{a}))$; 
\end{enumerate}
\item
$\bigl| |p_1|+|s_1| - (|p_2|+|s_2|) \bigr| = 2^k$ and
\begin{enumerate}
\item
$(p_1,s_1) = (p_2\varphi^{k-1}(a),\varphi^{k-1}(\bar{a})s_2)$; 
\item
$(p_2,s_2) = (p_1\varphi^{k-1}(a),\varphi^{k-1}(\bar{a})s_1)$.
\end{enumerate}
\end{enumerate}
\end{definition}
\begin{remark}\label{rem:lgequivk}
    Note that if $(p_1,s_1)\equiv_k(p_2,s_2)$, then either $|p_1|=|p_2|$ or, $\left| |p_1|-|p_2| \right|=2^{k-1}$. So $(p_1,s_1)\equiv_k(p_2,s_2)$ implies that $|p_1|\equiv |p_2| \pmod{2^{k-1}}$. 
\end{remark}
\begin{example}
\label{ex:type}
Let us consider $k=3$ and 
\begin{eqnarray*}
	u =	01 0110 01101001 10010110 100 
	  	&=& 01\varphi^{2}(0) \varphi^{3} (01) 100,	\\
	v = 01 10010110 10010110 0110 100
		&=& 01 \varphi^{3}(11) \varphi^2(0)100.			
\end{eqnarray*}
From Lemma~\ref{lemma:cut singleton}, they admit a unique factorization of order 3 that are respectively 
\[
	(p_u,s_u) = (01 \varphi^2(0), 100) 
	\quad \text{and} \quad 
	(p_v,s_v) = (01, \varphi^2(0) 100).
\]
By definition of $\equiv_3$, we thus have $(p_u,s_u) \equiv_3 (p_v,s_v)$.

Similarly, consider now
\begin{eqnarray*}
	u'= 001 01101001 10010110 10010110 0
	  	&=& 001\varphi^3(011) 0,	\\
	v'= 001 0110 10010110 01101001 1001 0
		&=& 001 \varphi^2 (0) \varphi^{3}(10) \varphi^2(1) 0.			
\end{eqnarray*}
They admit a unique factorization of order 3 that are respectively 
\[
	(p_{u'},s_{u'}) = (001,0)
	\quad \text{and} \quad 
	(p_{v'},s_{v'}) = (001\varphi^2(0),\varphi^2(1)0), 
\]
so that we again have $(p_{u'},s_{u'}) \equiv_3 (p_{v'},s_{v'})$
\end{example}

The next result is a direct consequence of Lemma~\ref{lem:relationCuts}.

\begin{corollary}
If a factor of $\bt$ has two distinct factorizations of order~$k$, then these two are equivalent for $\equiv_k$.
\end{corollary}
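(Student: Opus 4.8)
The plan is to reduce this corollary directly to Lemma~\ref{lem:relationCuts}, since the statement is essentially a reformulation of that lemma in the language of the equivalence relation $\equiv_k$. First I would observe that if a factor $u$ of $\bt$ has two distinct factorizations of order~$k$, then by Lemma~\ref{lemma:cut singleton} we must have $|u| \leq 3 \cdot 2^{k-1}$, because any longer factor has a singleton cutting set and hence a unique factorization. Combined with the standing hypothesis $k \geq 3$ (from the opening remark of Section~\ref{sec:types}), this places us exactly in the regime covered by Lemma~\ref{lem:relationCuts}. I would also need $|u| \geq 2^k-1$, which holds because a factor admitting any factorization of order~$k$ must be long enough to contain a full block $\varphi^k(v)$ flanked by a proper suffix and prefix; more carefully, having two \emph{distinct} cutting sets forces $|u|$ to be at least $2^k-1$ since the two sets differ by $2^{k-1}$ in their minima (as recorded in Lemma~\ref{lem:relationCuts}) and both must fit inside $u$.

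Once we are in the hypotheses of Lemma~\ref{lem:relationCuts}, that lemma tells us $\Cut_k(u) = \{C_1, C_2\}$ with $|\min C_1 - \min C_2| = 2^{k-1}$, and moreover (writing $(p_1,s_1)$, $(p_2,s_2)$ for the associated factorizations with $|p_1| < |p_2|$) that there exists $a \in A$ for which either
\[
|p_1|+|s_1| = |p_2|+|s_2| \text{ and } (p_2, \varphi^{k-1}(a) s_2) = (p_1 \varphi^{k-1}(a), s_1),
\]
or
\[
\bigl||p_1|+|s_1| - (|p_2|+|s_2|)\bigr| = 2^k \text{ and } (p_2, s_2) = (p_1 \varphi^{k-1}(\bar{a}), \varphi^{k-1}(a) s_1).
\]
The first of these is precisely case (1)(c) of Definition~\ref{def:type} (reading the roles of the indices so that $(p_1,s_1)$ and $(p_2,s_2)$ match up), and the second is precisely case (2)(b). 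In either case the defining condition for $(p_1,s_1) \equiv_k (p_2,s_2)$ is met, and since $\equiv_k$ is symmetric this gives the equivalence of the two factorizations regardless of which we label first.

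The only genuine bookkeeping is checking that the two alternatives of Lemma~\ref{lem:relationCuts} line up verbatim with clauses (1)(c) and (2)(b) of the definition of $\equiv_k$, paying attention to the normalization $|p_1| < |p_2|$ and to the placement of $a$ versus $\bar{a}$; this is routine pattern-matching rather than a real obstacle. I would present the proof as a short deduction: invoke Lemma~\ref{lemma:cut singleton} to confirm we are in the correct length regime, then quote Lemma~\ref{lem:relationCuts} and identify its two conclusions with the relevant clauses of Definition~\ref{def:type}.

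\begin{proof}
Suppose a factor $u$ of $\bt$ has two distinct factorizations of order~$k$, so that $\# \Cut_k(u) \geq 2$. By Lemma~\ref{lemma:cut singleton}, we must have $|u| \leq 3 \cdot 2^{k-1}$, and the existence of a factorization of order~$k$ forces $|u| \geq 2^k-1$. Since $k \geq 3$, the hypotheses of Lemma~\ref{lem:relationCuts} are satisfied. That lemma gives $\Cut_k(u) = \{C_1,C_2\}$ and, writing $(p_1,s_1)$, $(p_2,s_2)$ for the associated factorizations with $|p_1|<|p_2|$, provides $a \in A$ such that either $|p_1|+|s_1|=|p_2|+|s_2|$ with $(p_2,\varphi^{k-1}(a)s_2)=(p_1\varphi^{k-1}(a),s_1)$, or $\bigl||p_1|+|s_1|-(|p_2|+|s_2|)\bigr|=2^k$ with $(p_2,s_2)=(p_1\varphi^{k-1}(\bar a),\varphi^{k-1}(a)s_1)$. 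The former is exactly clause (1)(c) of Definition~\ref{def:type}, and the latter is exactly clause (2)(b). In both cases $(p_1,s_1)\equiv_k(p_2,s_2)$, so the two factorizations of order~$k$ are equivalent for $\equiv_k$.
\end{proof}
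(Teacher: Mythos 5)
Your proof is correct and follows the same route as the paper, which simply declares the corollary a direct consequence of Lemma~\ref{lem:relationCuts}; you usefully make explicit the two length checks (the upper bound $|u|\le 3\cdot 2^{k-1}$ via Lemma~\ref{lemma:cut singleton}, without which the statement is vacuous, and the lower bound $|u|\ge 2^k-1$, which is built into the definition of a factorization of order~$k$) and the matching of the lemma's two alternatives with clauses (1)(c) and (2)(b) of Definition~\ref{def:type}. No issues.
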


\begin{definition}[Type of order~$k$]
Given a factor $u$ of $\bt$ of length at least $2^k-1$, the {\em type of order~$k$} of $u$ is the equivalence class of a factorization of order~$k$ of $u$.
We also let $(p_u,s_u)$ denote the factorization of order~$k$ of $u$ for which $|p_u|$ is minimal (we assume that $k$ is understood from the context).
Therefore, two words $u$ and $v$ have the same type of order $k$ if and only if 
\[
	(p_u,s_u) \equiv_k (p_v,s_v).
\]
\end{definition}

\begin{theorem}
\label{thm:typediff}
Let $u,v$ be factors of $\mathbf{t}$ of length $n \geq 2^k-1$. We have
\[
u \sim_k v \Leftrightarrow (p_u,s_u) \equiv_k (p_v,s_v).
\]
\end{theorem}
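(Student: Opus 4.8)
The plan is to prove both implications from the single factorization $u = p_u\,\varphi^k(w_u)\,s_u$, the engine being the decomposition of a binomial coefficient along it. For any test word $x$, two applications of Lemma~\ref{lem:binomial2} give
\[
\binom{u}{x} = \sum_{x = x_1 x_2 x_3}\binom{p_u}{x_1}\binom{\varphi^k(w_u)}{x_2}\binom{s_u}{x_3}.
\]
The crucial point is that for $|x_2|\le k$ the middle factor $\binom{\varphi^k(w_u)}{x_2}$ depends only on $|w_u|$ and $x_2$: by Corollary~\ref{cor:phifac} with $\ell=k$ it is a fixed function of $|w_u|$ plus a term involving $\binom{w_u}{z}$ with $z\in f^k(x_2)$, and $f^k(x_2)=\emptyset$ by Remark~\ref{rem:length} since $k\ge|x_2|$. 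Hence, for $|x|\le k$, the whole vector $\bigl(\binom{u}{x}\bigr)_{|x|\le k}$ is an explicit function of the triple $(p_u,s_u,|w_u|)$ alone; the bulk $\varphi^k(w_u)$ enters only through its length. This is exactly the phenomenon exploited implicitly in the proof of Theorem~\ref{thm:k2}.

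For $(\Leftarrow)$ I would run through the cases of Definition~\ref{def:type}, noting first that $|u|=|v|$ forces $|w_u|=|w_v|$ in case~(1) and $|w_v|=|w_u|+1$ in case~(2). Case~(1)(a) is immediate from Lemma~\ref{lem:equiv-not} and the cancellation property (Lemma~\ref{lem:cancel}). In cases~(1)(b) and~(1)(c) one cancels the common outer parts and is reduced to $\varphi^{k-1}(a)\varphi^k(w_u)\sim_k\varphi^k(w_v)\varphi^{k-1}(a)$, which is precisely the transfer Lemma~\ref{lem:equiv2}. Cases~(1)(d),~(2)(a) and~(2)(b) all hinge on the identity $\varphi^{k-1}(a)\varphi^{k-1}(\bar{a})=\varphi^k(a)$: using Lemma~\ref{lem:equiv2} to slide a trailing $\varphi^{k-1}(\bar{a})$ leftwards across the bulk, it merges with a leading $\varphi^{k-1}(a)$ into a single block $\varphi^k(a)$, so that after cancellation both $u$ and $v$ become $\varphi^k$-images of words of equal length and Lemma~\ref{lem:equiv-not} concludes.

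For the harder implication $(\Rightarrow)$ I would argue contrapositively: assuming $(p_u,s_u)\not\equiv_k(p_v,s_v)$, produce an $x$ with $|x|\le k$ and $\binom{u}{x}\ne\binom{v}{x}$. By the engine, $\binom{u}{x}-\binom{v}{x}$ is an explicit expression in the two triples, so the task becomes a purely combinatorial statement: the map $(p,s,|w|)\mapsto\bigl(\binom{u}{x}\bigr)_{|x|\le k}$, restricted to data realizable by length-$n$ factors of $\bt$, is injective modulo $\equiv_k$. I would recover the type invariants one at a time. By Remark~\ref{rem:lgequivk} the residues $|p_u|,|s_u|\bmod 2^{k-1}$ and the length $|w_u|$ are read off from short test words; the remaining, finer datum is the pair of boundary letters, i.e.\ which $\varphi^{k}$-block each of $p_u,s_u$ is a suffix/prefix of. To detect these I would take a shortest subword on which the prefix (resp.\ suffix) coefficients of $u$ and $v$ disagree and use the cancellation property (Lemma~\ref{lem:cancel}) together with Lemma~\ref{lem:noneq} to prevent this leading discrepancy from being annihilated by the common bulk; the quantitative input guaranteeing non-vanishing is the Ochsenschl\"ager-type difference of Proposition~\ref{pro:diffpower}, which is exactly what makes complementary $\varphi^{k-1}$-boundaries separable at level $k$ while remaining inseparable below it.

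The main obstacle is this last case analysis. The difficulty is structural: the boundary words $p_u,s_u$ may have length up to $2^k-1$, vastly larger than $k$, so no single test word of length $\le k$ can ever span a whole boundary block. One must therefore extract the boundary data incrementally and keep tight control of the interaction between the three factors $\binom{p_u}{x_1},\binom{\varphi^k(w_u)}{x_2},\binom{s_u}{x_3}$ of the decomposition, matching every way in which two admissible pairs can fail to be $\equiv_k$-equivalent with a concrete separating subword. This bookkeeping, underpinned by Proposition~\ref{pro:diffpower} and the (non-)equivalence lemmas, is where the real work lies.
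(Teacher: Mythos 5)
Your sufficiency direction is sound and coincides with the paper's: the paper disposes of $(\Leftarrow)$ in one line via Lemma~\ref{lem:cancel} and Lemma~\ref{lem:equiv2}, and your case-by-case treatment of Definition~\ref{def:type} (including the merging $\varphi^{k-1}(a)\varphi^{k-1}(\bar a)=\varphi^k(a)$ for cases (1.d), (2.a), (2.b)) is a correct expansion of that. Your ``engine'' observation --- that for $|x|\le k$ the vector $\bigl(\binom{u}{x}\bigr)_{|x|\le k}$ depends only on $(p_u,s_u,|w_u|)$, because $f^k(x_2)=\emptyset$ kills the $w$-dependent term in Corollary~\ref{cor:phifac} --- is also correct and is indeed the reason the theorem can hold at all.

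The gap is in $(\Rightarrow)$, and it is not merely a matter of bookkeeping left to the reader: your reduction turns the statement into ``the map $(p,s,|w|)\mapsto\bigl(\binom{u}{x}\bigr)_{|x|\le k}$ is injective modulo $\equiv_k$ on realizable data,'' which is a restatement of the theorem, and the plan you give for proving it does not survive contact with the details. First, Remark~\ref{rem:lgequivk} asserts that $\equiv_k$-equivalence forces $|p_u|\equiv|p_v|\pmod{2^{k-1}}$; it says nothing about reading these residues off from binomial coefficients. Establishing that $|p_u|\not\equiv|p_v|\pmod{2^{k-1}}$ implies $u\nsim_k v$ is precisely the paper's Lemma~\ref{lem:nonlengthmod}, whose proof hinges on a carefully chosen level $j=\min\{j_u,j_v\}\le k-1$ determined by the $2$-adic valuations of $|p_u|,|s_u|,|p_v|,|s_v|$, a test word $01^j$ (not of length $k$), and explicit lower bounds such as $\binom{u'}{0}-\binom{v'}{0}\in\{-1,0,1,2\}$ coming from the balance of factors of $\bt$; nothing in your sketch produces these choices. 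Second, your appeal to cancellation of common prefixes and suffixes hides two further needed facts: that after deletion the residual words either are short enough for a separate small-length argument (the paper's Proposition~\ref{prop:smallWords}) or still have distinct types (the paper's Lemma~\ref{lem:deletePref}, itself a two-page case analysis tracking how $(p_u,s_u)$ transforms under deletion). Finally, even in the ``same residue'' case, separating e.g.\ $\varphi^{k-1}(0)\varphi^k(z)\varphi^{k-1}(0)$ from $\varphi^{k-1}(1)\varphi^k(z')\varphi^{k-1}(1)$ requires Proposition~\ref{pro:diffpower2} in addition to Proposition~\ref{pro:diffpower}, a tool your plan never invokes. In short, you have correctly identified the strategy and the quantitative engine, but the entire content of the necessity direction --- the case analysis occupying Lemmas~\ref{lem:nonlengthmod}, \ref{lem:lengthMod}, \ref{lem:deletePref} and Proposition~\ref{prop:smallWords} --- remains to be done.
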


The condition is trivially sufficient using Lemma~\ref{lem:cancel} and Lemma~\ref{lem:equiv2}. 
For instance, applying several times these two lemmas, we obtain $\varphi^3(01) \sim_3 \varphi^3(11)$,  thus $\varphi^2(0) \varphi^3(01) \sim_3 \varphi^2(0) \varphi^3(11) \sim_3 \varphi^3(11) \varphi^2(0)  $ and finally $u \sim_3 v$ for the words of Example~\ref{ex:type}.

The proof that the condition is necessary is done in Section~\ref{sec:comlexity}.
Preliminary to this, we consider the case of words $u,v$ that do not have any non-empty common prefix of suffix and split the result into two lemmas: either $|p_u| \not\equiv |p_v| \pmod {2^{k-1}}$ (Lemma~\ref{lem:nonlengthmod}) or, $|p_u| \equiv |p_v| \pmod{2^{k-1}}$ (Lemma~\ref{lem:lengthMod}).
We end the section with Lemma~\ref{lem:deletePref} that permits us to deal with factors having some common prefix or suffix.

\begin{lemma}
\label{lem:nonlengthmod}
Let $u,v$ be factors of $\bt$ of length $n \geq 2^k-1$ with no non-empty common prefix or suffix. 
If $(p_u,s_u),(p_v,s_v)$ satisfy $|p_u|+|s_u| < |u|$, $|p_v|+|s_v| < |v|$ and $|p_u| \not\equiv |p_v| \pmod{2^{k-1}}$, then $u \nsim_k v$.
\end{lemma}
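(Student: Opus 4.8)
The plan is to translate $k$-binomial equivalence into a statement about the low-order moments of the set of positions of $0$'s, and then to read off the residue $|p_u|\bmod 2^{k-1}$ from those moments by a $2$-adic valuation argument. The starting observation is that, for binary words of a common length $n$, the numbers $\binom{u}{x}$ with $x\in A^{\le k}$ determine (and are determined by) the power sums $p_m(u):=\sum_{i:\,u_i=0} i^m$ for $0\le m\le k-1$; for instance a short inclusion--exclusion gives $\binom{u}{01}=p_0(n-1)-p_1-\binom{p_0}{2}$, and in general $p_m$ is recovered from $\sim_{m+1}$. Packaging these moments into $G_u(t):=\sum_{i:\,u_i=0}t^i$, so that $G_u(e^x)=\sum_{m\ge 0}p_m(u)\,x^m/m!$, the implication I need reads: $u\sim_k v$ forces $G_u(e^x)\equiv G_v(e^x)\pmod{x^k}$. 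Hence it suffices to prove $G_u(e^x)\not\equiv G_v(e^x)\pmod{x^k}$.

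Writing $\epsilon_j=(-1)^{\bt_j}$ and $G_u(t)=\tfrac12\frac{t^n-1}{t-1}+\tfrac12E_u(t)$ with $E_u(t)=\sum_{i=0}^{n-1}\epsilon_{\iota+i}t^i$ for an occurrence $u=\bt_{[\iota,\iota+n)}$, the length part $\tfrac12\frac{t^n-1}{t-1}$ is common to $u$ and $v$, so I only need $E_u\not\equiv E_v\pmod{x^k}$. Here the factorization $u=p_u\varphi^k(m_u)s_u$ enters decisively through the classical identity $\sum_{j<2^k}\epsilon_j t^j=\prod_{i=0}^{k-1}(1-t^{2^i})$: the $\epsilon$-polynomial of a complete block $\varphi^k(a)$ equals $(-1)^a t^{\delta}\prod_{i=0}^{k-1}(1-t^{2^i})$, which is divisible by $(1-t)^k$ and thus vanishes modulo $x^k$ at $t=e^x$. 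Consequently the whole central part $\varphi^k(m_u)$ contributes nothing modulo $x^k$, and $E_u$ collapses to the contributions $\Sigma_p^u+\Sigma_s^u$ of the two \emph{partial} blocks $p_u$ and $s_u$ alone.

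The engine for evaluating these partial-block defects is the polynomial $Q_P(t):=\sum_{l<P}\epsilon_l t^l$. From $\epsilon_{2l}=\epsilon_l$ and $\epsilon_{2l+1}=-\epsilon_l$ one gets the recursion $Q_{2P}(t)=(1-t)Q_P(t^2)$, whence $\operatorname{ord}_{t=1}Q_P=\nu_2(P)$, with a leading coefficient whose sign and $2$-adic valuation the recursion also pins down. Since $p_u$ is a proper suffix of $\varphi^k(a_u)$ of length $|p_u|$, the computation $\Sigma_p^u\equiv -(-1)^{a_u}\,t^{-(2^k-|p_u|)}Q_{2^k-|p_u|}(t)\pmod{x^k}$ shows that $\Sigma_p^u$ vanishes at $t=1$ to the exact order $\nu_2(2^k-|p_u|)=\nu_2(|p_u|)$; the reversal symmetry $\binom{\widetilde{x}}{\widetilde{y}}=\binom{x}{y}$ treats $s_u$ symmetrically, with order $\nu_2(|s_u|)$. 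Because $2^k m_u\equiv 0\pmod{2^{k-1}}$ we have $|p_u|+|s_u|\equiv|p_v|+|s_v|\pmod{2^{k-1}}$, so the hypothesis $|p_u|\not\equiv|p_v|\pmod{2^{k-1}}$ also yields $|s_u|\not\equiv|s_v|\pmod{2^{k-1}}$, and the four defects genuinely differ in their vanishing orders or leading coefficients.

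The hard part is the final non-cancellation step: showing $\Sigma_p^u+\Sigma_s^u$ and $\Sigma_p^v+\Sigma_s^v$ cannot agree modulo $x^k$. The danger is that a prefix defect of one word might cancel a suffix defect of the other at the lowest relevant order $m<k$. I would resolve this by comparing the two sides at the smallest order at which any defect is nonzero, using the recursion $Q_{2P}(t)=(1-t)Q_P(t^2)$ to control both the order $\nu_2(\cdot)$ and the leading coefficient exactly, while the hypothesis that $u$ and $v$ share no non-empty common prefix or suffix fixes the boundary letters $a_u,b_u,a_v,b_v$ (hence the global signs $(-1)^{a_u}$, etc.) and excludes the accidental coincidences that would permit cancellation. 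Making this comparison uniform in $k$ — most likely by an induction on $k$ that mirrors the halving recursion for $Q_P$, with a base case $k=3$ checked directly — is where the real combinatorial and $2$-adic bookkeeping lies, and I expect it to be the main obstacle.
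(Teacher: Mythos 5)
Your reduction and the first half of the setup are sound, and they constitute a genuinely different route from the paper's: the implication $u\sim_k v\Rightarrow p_m(u)=p_m(v)$ for $m\le k-1$ is correct (since $\sum_{i:\,u_i=0}\binom{i}{m}=\sum_{x\in A^m}\binom{u}{x0}$), the product formula does annihilate the complete $\varphi^k$-blocks modulo $(t-1)^k$, and the recursion $Q_{2P}(t)=(1-t)Q_P(t^2)$ correctly yields order of vanishing $\nu_2(P)$ at $t=1$ with leading coefficient $\pm 2^{e(e-1)/2}$ for $e=\nu_2(P)$ --- the same powers of two that Proposition~\ref{pro:diffpower} produces. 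But the argument stops exactly where the content of the lemma begins. What must be shown is that
\[
D(t)\;=\;\Sigma_p^u+\Sigma_s^u-\Sigma_p^v-\Sigma_s^v\;\not\equiv\;0 \pmod{(t-1)^k},
\]
a signed sum of four terms $\pm t^{c}Q_M(t)$ whose vanishing orders are $\nu_2(|p_u|),\nu_2(|s_u|),\nu_2(|p_v|),\nu_2(|s_v|)$. Two of these can share the minimal order $m$ with leading coefficients $\pm 2^{m(m-1)/2}$ of opposite sign, so cancellation at the lowest order genuinely occurs for some sign patterns, and one must then track sub-leading Taylor coefficients (which involve the shifts $c$, hence $n$) up to order $k-1$. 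You defer this to an unspecified induction on $k$; yet this non-cancellation \emph{is} the lemma --- it is the analogue of the paper's case analysis 1.1)--2.4) combined with Propositions~\ref{pro:diffpower} and~\ref{pro:diffpower2} --- and the claim that the no-common-prefix/suffix hypothesis ``excludes the accidental coincidences'' is unsubstantiated: that hypothesis imposes only one relation on $(a_u,a_v)$ and one on $(b_u,b_v)$, leaving several sign configurations to be eliminated by explicit computation.

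There is also a structural risk you do not flag. Your opening claim that the coefficients $\binom{u}{x}$, $|x|\le k$, \emph{are determined by} the power sums $p_0,\dots,p_{k-1}$ is false for $k\ge 3$: the two binary words of length $12$ whose $0$'s occupy the positions $\{0,4,7,11\}$ and $\{1,2,9,10\}$ have identical $p_0,p_1,p_2$ (both sets have sum $22$ and sum of squares $186$), yet their coefficients on $001$ are $11$ and $12$ respectively, so they are not $3$-binomially equivalent. Only the direction you actually use is true, but the falsity of the converse means the power sums are a strictly coarser invariant than $\sim_k$ on binary words. Hence it is not automatic that they can separate every pair $u,v$ covered by the lemma: it is conceivable that $D\equiv 0\pmod{(t-1)^k}$ for some admissible configuration $(|p_u|,|s_u|,a_u,b_u;|p_v|,|s_v|,a_v,b_v;n)$ even though $u\nsim_k v$, in which case your method could not conclude at all. (The paper avoids this by testing against the subwords $01^j$, whose counts are not functions of your power sums.) Until you verify $D\not\equiv 0$ for every configuration permitted by the hypotheses --- a finite but delicate $2$-adic computation in which the bound $n\ge 2^k-1$ must also be used to kill certain $n$-dependent coincidences --- the proof is incomplete at its decisive step.
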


\begin{proof}
    The assumptions $|p_u|+|s_u| < |u|$ and $|p_v|+|s_v| < |v|$ imply there exist non-empty words $z,z'$ such that
$$u=p_u\, \varphi^k(z)\, s_u\quad\text{ and }\quad v=p_v\, \varphi^k(z')\, s_v\ .$$
Let $x\in\{u,v\}$. If $p_x=s_x=\varepsilon$, set $j_x:=k$. Otherwise, define $j_x$ as the largest integer such that
$|x|\equiv 0\pmod{2^{j_x-1}}$ and $|p_x|$ or $|s_x|$ is congruent to
$2^{j_x-1}$ modulo $2^{j_x}$. 
In that case, such a $j_x\ge 1$ exists because $p_x$
(resp., $s_x$) is a suffix (resp., prefix) of $\varphi^k(a)$ for some
letter $a$: so it is of the form $p_x=\varphi^{i_r}(a_r)\cdots\varphi^{i_2}(a_2)\varphi^{i_1}(a_1)$ with $i_r<\cdots<i_2<i_1<k$ (resp., $s_x=\varphi^{i_s'}(a_s')\cdots\varphi^{i_2'}(a_2')\varphi^{i_1'}(a_1')$ with $k>i_s'>\cdots >i_2'>i_1'$). More precisely, we have
$$x= \varphi^{i_r}(a_r)\cdots\varphi^{i_2}(a_2)\varphi^{i_1}(a_1) \varphi^k(z) \varphi^{i_s'}(a_s')\cdots\varphi^{i_2'}(a_2')\varphi^{i_1'}(a_1')$$ for some word $z$ 
and $j_x=1+\min\{i_r,i_1'\}$.

Let $j=\min\{j_u,j_v\}$. Observe that $j\le k-1$. 
First, since $|p_u| \not\equiv |p_v| \pmod{2^{k-1}}$, we cannot have $p_u=p_v=s_u=s_v=\varepsilon$.
Moreover, proceed
by contradiction and assume that $j=k$, i.e., $j_u=j_v=k$. In that case, since
$|u|\equiv 0\pmod{2^{k-1}}$, the fact that $|p_u|$ or $|s_u|$ is
congruent to $2^{k-1}$ modulo $2^{k}$ implies that $|u|, |p_u|, |s_u|$
are all congruent to $0$ modulo $2^{k-1}$. The same conclusion holds
for $v$ contradicting the assumption
$|p_u| \not\equiv |p_v| \pmod{2^{k-1}}$.

We will prove that $u\nsim_{j+1} v$. We have two main cases to discuss. Since $u$ and $v$ have the same length and $|u|,|v|\equiv 0\pmod{2^{j-1}}$, we have either $|u|=|v|\equiv 2^{j-1}\pmod{2^j}$ or, $|u|=|v|\equiv 0\pmod{2^j}$.

The first case is split into three sub-cases.
\begin{itemize}
  \item[1.1)] Since $u$ and $v$ have no common prefix, we can first assume that $u=\varphi^{j-1}(0) \varphi^j(u')$ and $v=\varphi^{j-1}(1)\varphi^j(v')$ for some words $u',v'$ (we can exchange the roles of $0$ and $1$). The conclusion $u\nsim_j v$ follows directly from Lemma~\ref{lem:noneq} because $\varphi^{j-1}(0)\nsim_j\varphi^{j-1}(1)$ by Lemma~\ref{lem:equiv-not}.
  \item[1.2)] Consider the case where $u=\varphi^{j-1}(0) \varphi^j(u')$ and $v=\varphi^j(v')\varphi^{j-1}(1)$ for some words $u',v'$. We can make use of Lemma~\ref{lem:equiv2}, $v\sim_j \varphi^{j-1}(1)\varphi^j(v')$ and conclude as in the previous case. 
  \item[1.3)] The last sub-case is when
    $u=\varphi^{j-1}(0) \varphi^j(u')=\varphi^{j-1}(0\varphi(u'))$ and
    $v=\varphi^j(v')\varphi^{j-1}(0)=\varphi^{j-1}(\varphi(v')0)$
    (the situation with $1$ instead of $0$ can be treated similarly). 
If $j=1$, we have $\binom{u}{01} - \binom{v}{01} = |u'|>0$.    
    We will assume $j>1$.    
    Consequently,  $|u|=2^{j-1}+2^j |u'|$ is even, thus $|u|\ge 2^k$ and $|u'|\ge 2^{k-j}\ge 2$.  From
    Remark~\ref{rem:difference} where multiplicities $m(x)$ are here related to $f^{j-1}(01^j)$, we get
    \begin{eqnarray*}
\binom{u}{01^j}-\binom{v}{01^j}&=&
\sum_{x \in f^{j-1}(01^j)} m(x) \left[ \binom{0\varphi(u')}{x} - \binom{\varphi(v')0}{x} \right].
\end{eqnarray*}
Recall that $f^{j-1}(01^j) $ only contains elements in $A^{\leq 2}$. In the above formula, only $x=01$ and $x=10$ will give non-zero terms. 
Compute
\begin{eqnarray*}
	\binom{0\varphi(u')}{01} - \binom{\varphi(v')0}{01}
	&=& |u'|+\binom{\varphi(u')}{01}-\binom{\varphi(v')}{01}	\\
	&=& |u'|+\binom{|u'|}{2}+\binom{u'}{0}-\binom{|v'|}{2}-\binom{v'}{0}.
\end{eqnarray*}
Hence,
\begin{eqnarray*}
\binom{u}{01^j}-\binom{v}{01^j}
&=&
\left(m(01)-m(10)\right)|u'|\\
&&+
m(01)\left( \binom{u'}{0}-\binom{v'}{0}\right) + m(10) \left(
    \binom{u'}{1}-\binom{v'}{1}\right) \\
 &=& 
(m(01)-m(10))\left(|u'|+ \binom{u'}{0}-\binom{v'}{0}\right).
\end{eqnarray*}
The last equality comes from the fact that $\binom{u'}{0} - \binom{v'}{0} = \binom{v'}{1} - \binom{u'}{1}$ because $|u'|=|v'|$.

Since $u'$ and $v'$ are factors of $\bt$ of the same length, it is clear that $\binom{u'}{0} - \binom{v'}{0} \in \{-2,-1,0,1,2\}$. However, in this sub-case the value $-2$ is not realized, since $v'$ starts with $1$ (because $u$ and $v$ have no common prefix). Thus, by Proposition~\ref{pro:diffpower},
\[
\binom{u}{01^j} - \binom{v}{01^j} \geq (m(01)-m(10)) (|u'|-1) > 0
\]
and $u \nsim_{j+1} v$.

\end{itemize}

For the second case, we assume that $|u|=|v|\equiv 0\pmod{2^j}$. We have four sub-cases for which we know that $|u'|\geq 1$.
\begin{itemize}
  \item[2.1)] If 
$u=\varphi^{j-1}(0)\varphi^j(u')\varphi^{j-1}(0)$ and 
$v=\varphi^{j}(v')$, then we know that $v'$ is of the form $1v''$ because $u$ and $v$ have no common prefix.
We have $u\sim_j \varphi^{j-1}(0)\varphi^{j-1}(0)\varphi^j(u')$ and $v=\varphi^{j-1}(1)\varphi^{j-1}(0)\varphi^j(v'')$ so we can directly conclude that $u\nsim_j v$ applying Lemma~\ref{lem:noneq} and Lemma~\ref{lem:equiv-not}.

  \item[2.2)]  
If $u=\varphi^{j-1}(0)\varphi^j(u')\varphi^{j-1}(1)=\varphi^{j-1}(0\varphi(u')1)$ and $v=\varphi^{j}(v')$, we know that $v'$ starts with $1$ and ends with $1$ because $u$ and $v$ have no common prefix or suffix. We have
 \begin{eqnarray*}
\binom{u}{01^j}-\binom{v}{01^j}&=&
m(01)\left( \binom{0\varphi(u')1}{01}-\binom{\varphi(v')}{01}\right) + m(10) \left(
    \binom{0\varphi(u')1}{10}-\binom{\varphi(v')}{10}\right)\\
&=&m(01) \left[ 1+2|u'| + \binom{u'}{0} - \binom{v'}{0} + \binom{|u'|}{2} - \binom{|v'|}{2} \right] 
\\&&+ m(10) \left[ \binom{|u'|}{2} - \binom{|v'|}{2} + \binom{u'}{1} - \binom{v'}{1} \right]
    \end{eqnarray*}
    
   Here, $|u'|=|v'|-1$, so $\binom{u'}{1} - \binom{v'}{1} = \binom{v'}{0} - \binom{u'}{0} - 1$, $\binom{|u'|}{2}-\binom{|v'|}{2} = -|u'|$ and we obtain
   \[
   \binom{u}{01^j} - \binom{v}{01^j} = (m(01)-m(10)) \left(1+ |u'| + \binom{u'}{0} - \binom{v'}{0} \right).
   \]    
    
We need to characterize the values that can be taken by $\binom{u'}{0}-\binom{v'}{0}$. Two cases may happen: if $|u'|$ is even, there exists $\ell > 0$ such that $|u'|=2 \ell$. In this case, $|v'|=2 \ell +1$. Since $v'$ begins and ends with a $1$, $\binom{v'}{0} = \ell$. Therefore,
\[
\binom{u'}{0} - \binom{v'}{0}  \in \{-1,0,1 \}.
\] 
If $|u'|$ is odd, there exists $\ell$ such that $|u'|=2 \ell +1$ and $|v'|=2\ell+2$. For the same reason as above, we cannot have $\binom{v'}{0} = \ell+2$ and $\binom{u'}{0} - \binom{v'}{0}$ takes the same values.
We thus have, in both cases, $\binom{u}{01^j} - \binom{v}{01^j} > 0$.

  \item[2.3)] Now assume that $u=\varphi^{j-1}(0)\varphi^j(u')\varphi^{j-1}(0)$ and 
$v=\varphi^{j-1}(1)\varphi^j(v')\varphi^{j-1}(1)$. Since $|0u'0|_0\neq |1v'1|_0$, when applying Remark~\ref{rem:difference} all words in $A^{\le 2}$ are contributing and we obtain
 \begin{eqnarray*}
\binom{u}{01^j}-\binom{v}{01^j}&=& 
2(m(0)-m(1))+ (m(00)-m(11)) (2|u'|+1) \\
&&+ m(01)\left( \binom{u'}{0}-\binom{v'}{0}\right) + m(10) \left(
    \binom{u'}{1}-\binom{v'}{1}\right) \\
    &=& 2(m(0)-m(1))+ (m(00)-m(11)) \left(2|u'|+1+\binom{u'}{0}-\binom{v'}{0}\right)
\end{eqnarray*}
where the last equality comes from Proposition \ref{pro:diffpower}. One can again conclude in the same way, making use of Proposition~\ref{pro:diffpower2}.

  \item[2.4)] The last case is when
$u=\varphi^{j-1}(0)\varphi^j(u')\varphi^{j-1}(1)$ and 
$v=\varphi^{j-1}(1)\varphi^j(v')\varphi^{j-1}(0)$. We have 
 \begin{eqnarray*}
\binom{u}{01^j}-\binom{v}{01^j}&=&
m(01)\left( \binom{0\varphi(u')1}{01}-\binom{1\varphi(v')0}{01}\right) + m(10) \left(
    \binom{0\varphi(u')1}{10}-\binom{1\varphi(v')0}{10}\right)\\
&=& (m(01)-m(10)) (2|u'|+1) \\
&& +m(01)\left( \binom{u'}{0}-\binom{v'}{0}\right) + m(10) \left(
    \binom{u'}{1}-\binom{v'}{1}\right) \\
&=& (m(01)-m(10)) \left(2|u'|+1+\binom{u'}{0}-\binom{v'}{0}\right) \\ &\geq& (m(01)-m(10)) (2|u'|-1) >0
\end{eqnarray*}
with the same reasoning as above.
\end{itemize}
\end{proof}

\begin{lemma}
\label{lem:lengthMod}
Let $u,v$ be factors of $\bt$ of length $n \geq 2^k-1$ with no non-empty common prefix or suffix. If $(p_u,s_u) \not\equiv_k (p_v,s_v)$ with $|p_u|\equiv|p_v| \pmod{2^{k-1}}$, then $u \nsim_k v$.
\end{lemma}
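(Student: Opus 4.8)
The plan is to follow the same strategy as Lemma~\ref{lem:nonlengthmod}: I will exhibit a subword of length at most $k$ separating $u$ and $v$. As there, write $u = p_u\,\varphi^k(z)\,s_u$ and $v = p_v\,\varphi^k(z')\,s_v$, and introduce the integers $j_u,j_v$ and $j=\min\{j_u,j_v\}$ defined exactly as in that proof. The only role played by the hypothesis $|p_u|\not\equiv|p_v|\pmod{2^{k-1}}$ in Lemma~\ref{lem:nonlengthmod} was to force $j\le k-1$; all the subsequent sub-cases (1.1)--(2.4) used nothing but the absence of a common prefix or suffix and the value of $j$. Hence, whenever $j\le k-1$, those computations apply unchanged and yield $u\nsim_{j+1}v$, and a fortiori $u\nsim_k v$. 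So the only genuinely new situation is $j=k$.

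When $j=k$, the definition of $j_x$ forces $|u|\equiv 0\pmod{2^{k-1}}$ and each of $p_u,s_u,p_v,s_v$ to be either empty or a block $\varphi^{k-1}(\cdot)$; in particular $u,v\in\varphi^{k-1}(A^*)$, so I may write $u=\varphi^{k-1}(w)$ and $v=\varphi^{k-1}(w')$ with $|w|=|w'|$. Concretely, if $p_u=\varphi^{k-1}(a)$ and $s_u=\varphi^{k-1}(b)$ then $w=a\,\varphi(z)\,b$ (dropping $a$ or $b$ when the corresponding block is empty), so that $\binom{w}{0}=[a{=}0]+|z|+[b{=}0]$, and similarly for $w'$. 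I will separate $u$ and $v$ with the single subword $01^{k-1}$, of length $k$. Applying Remark~\ref{rem:difference} at level $\ell=k-1$ gives
\[
\binom{u}{01^{k-1}}-\binom{v}{01^{k-1}}=\sum_{x\in f^{k-1}(01^{k-1})}m(x)\left[\binom{w}{x}-\binom{w'}{x}\right].
\]
Since $f^{k-1}(01^{k-1})$ contains only the letters $0$ and $1$ and $|w|=|w'|$, the right-hand side collapses to $(m(0)-m(1))\bigl(\binom{w}{0}-\binom{w'}{0}\bigr)$.

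To evaluate $m(0)-m(1)$, note that $f(01^{k-1})=0A^{k-2}$, and that applying $f$ to a multiset of length-$2$ words sends $01\mapsto 0$, $10\mapsto 1$ and annihilates $00$ and $11$; therefore $m(0)-m(1)$ in $f^{k-1}(01^{k-1})$ equals $m(01)-m(10)$ in $f^{k-3}(0A^{k-2})=f^{(k-2)-1}(0A^{k-2})$, which by Proposition~\ref{pro:diffpower} is $2^{(k-2)(k-3)/2}\ge 1$. Consequently $u\nsim_k v$ as soon as $\binom{w}{0}\ne\binom{w'}{0}$. It therefore remains to check that $(p_u,s_u)\not\equiv_k(p_v,s_v)$ guarantees $\binom{w}{0}\ne\binom{w'}{0}$. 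Under the no-common-prefix/suffix assumption, nonempty blocks among $p_u,p_v$ (resp. $s_u,s_v$) are forced to be complementary, and the constraint $|u|=|v|$ pins down $|z|-|z'|$; running through the resulting short list of admissible configurations, one finds that $\binom{w}{0}-\binom{w'}{0}=([a{=}0]-[a'{=}0])+(|z|-|z'|)+([b{=}0]-[b'{=}0])$ vanishes exactly on the equivalence patterns of Definition~\ref{def:type} (read in the forms compatible with $j=k$ and with no common prefix or suffix) and is nonzero on every other admissible configuration. This is precisely the required dichotomy.

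The step I expect to be the main obstacle is this final bookkeeping. One must verify carefully that $j=k$ really does confine both words to $\varphi^{k-1}(A^*)$, and then organize the finite case analysis so that every admissible $\not\equiv_k$ alignment is accounted for and matched with a nonzero value of $\binom{w}{0}-\binom{w'}{0}$, using the absence of a common prefix or suffix to discard the alignments that would otherwise collide with an equivalence pattern (such as the degenerate instances of cases (1b), (1c), (2a) and (2b), where $p_v$ or $s_u$ is empty). The computational engine---Remark~\ref{rem:difference} together with Propositions~\ref{pro:diffpower} and~\ref{pro:diffpower2}---is already available; the delicate part is purely combinatorial.
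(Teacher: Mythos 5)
Your proposal is correct and, at bottom, uses the same toolbox as the paper (Remark~\ref{rem:difference}, Propositions~\ref{pro:diffpower} and~\ref{pro:diffpower2}, the transfer lemma), but it organizes the case split differently and this is worth recording. The paper splits on $\ell=v_2(|p_u|)$ and $\ell'=v_2(|s_u|)$ separately ($\ell<\ell'$; $\ell=\ell'=k-1$; $\ell=\ell'<k-1$), whereas you recycle the parameter $j$ from Lemma~\ref{lem:nonlengthmod}. Your first reduction is sound but needs one sentence you only imply: under $|p_u|\equiv|p_v|\pmod{2^{k-1}}$ and $j\le k-1$, the residue of $|p_v|$ (resp.\ $|s_v|$) modulo $2^j$ is forced to equal that of $|p_u|$ (resp.\ $|s_u|$), so only the \emph{symmetric} sub-cases 1.1, 2.3, 2.4 of Lemma~\ref{lem:nonlengthmod} (and mirrors) can occur; the asymmetric ones (1.2, 1.3, 2.1, 2.2) are excluded, and the surviving computations indeed use nothing beyond the absence of a common prefix/suffix, the value of $j$, and $|u|\ge 2^k-1$ (which still guarantees $|u'|\ge 1$). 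You should also note that the hypothesis $|p_u|+|s_u|<|u|$ of Lemma~\ref{lem:nonlengthmod} plays no role in those sub-cases, since here $z$ may be empty. Your treatment of $j=k$ is the genuinely different part: the paper handles $\ell=\ell'=k-1$ by three ad hoc sub-cases (transfer lemma, an ``obvious'' non-equivalence, and an abstract witness $w$ of length $k$ with $\binom{\varphi^{k-1}(0)}{w}\ne\binom{\varphi^{k-1}(1)}{w}$), while you reduce everything to the single identity $\binom{u}{01^{k-1}}-\binom{v}{01^{k-1}}=2^{(k-2)(k-3)/2}\bigl(\binom{w}{0}-\binom{w'}{0}\bigr)$ for $u=\varphi^{k-1}(w)$, $v=\varphi^{k-1}(w')$. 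I checked the ``bookkeeping'' you deferred: with no common prefix or suffix and $|u|=|v|$, the admissible pairs of factorizations are $(\varepsilon,\varepsilon)$ vs $(\varphi^{k-1}(a),\varphi^{k-1}(b))$, $(\varepsilon,\varphi^{k-1}(a))$ vs $(\varepsilon,\varphi^{k-1}(\bar a))$ or vs $(\varphi^{k-1}(c),\varepsilon)$, and $(\varphi^{k-1}(a),\varphi^{k-1}(b))$ vs $(\varphi^{k-1}(\bar a),\varphi^{k-1}(\bar b))$, plus mirrors; in each one $\binom{w}{0}-\binom{w'}{0}\in\{\pm1,\pm2\}$ exactly when $\not\equiv_k$ holds and is $0$ exactly on the patterns (1.c), (1.d), (2.a), (2.b) of Definition~\ref{def:type}. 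So the dichotomy you assert is true and the verification is a five-line check; your unified computation is arguably cleaner than the paper's, at the price of that explicit enumeration.
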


\begin{proof}
     Let $\ell$ (resp., $\ell'$) be the greatest integer less than $k$ such that $|p_u|\equiv 0\pmod{2^\ell}$ (resp., $|s_u|\equiv 0\pmod{2^{\ell'}}$). The assumption $|p_u|\equiv|p_v| \pmod{2^{k-1}}$ implies that $|s_u|\equiv|s_v| \pmod{2^{k-1}}$ and thus, $|p_u|\equiv|p_v| \pmod{2^{\ell}}$ and $|s_u|\equiv|s_v| \pmod{2^{\ell'}}$. We have three cases to take into account.
    \begin{itemize}
      \item[1)] If $\ell<\ell'$ (the case $\ell'<\ell$ is symmetric taking the reversal of the words), then $|s_u|$ and $|s_v|$ are even multiples of $2^\ell$, i.e., there exist $x, x'\in A^*$ such that $s_u=\varphi^{\ell+1}(x)$ and $s_v=\varphi^{\ell+1}(x')$. Moreover, by maximality of $\ell$, $|p_u|$ and $|p_v|$ are odd multiples of $2^\ell$, i.e., there exist $a\in A$ and $y,y'\in A^*$ such that 
$$p_u=\varphi^\ell (a)\varphi^{\ell+1}(y), \;\;\; ~p_v=\varphi^\ell (\overline{a})\varphi^{\ell+1}(y')$$ 
hence
$$u= p_u \varphi^{\ell+1}(z) s_u=\varphi^\ell (a)\varphi^{\ell+1}(y) \varphi^{\ell+1}(z) \varphi^{\ell+1}(x)$$
and 
$$v= p_v \varphi^{\ell+1}(z') s_v = \varphi^\ell (\overline{a})\varphi^{\ell+1}(y') \varphi^{\ell+1}(z') \varphi^{\ell+1}(x')$$
for some $z,z'$. As usual, by Lemma~\ref{lem:equiv2}, we can conclude because $\varphi^\ell(a)\nsim_{\ell+1}  \varphi^\ell (\overline{a})$, $|yzx|=|y'z'x'|$ and $\ell+1\leq\ell'\leq k-1$.
\item[2)] If $\ell=\ell'=k-1$, we have to distinguish the cases where $p_u$ or $s_u$ are empty. 

\begin{itemize}
\item
If $p_u = \varepsilon = s_u$, we have neither $p_v=\varepsilon=s_v$ nor, $p_v = \varphi^{k-1}(a), s_v = \varphi^{k-1}(\overline{a})$ because $u$ and $v$ do not have the same type of order~$k$. This implies that $v$ is of the form $\varphi^{k-1}(a)\varphi^{k}(z)\varphi^{k-1}(a)$ and we can conclude that $u\nsim_k v$. Indeed, since $z=\overline{a}z''$ (recall that $u$ and $v$ have no common prefix), then  
\begin{align*}
u = \varphi^{k}(z) \sim_k \varphi^{k-1}(\overline{a}) \varphi^{k-1}(a)\varphi^{k}(z'') &\sim_k  \varphi^{k-1}(\overline{a}) \varphi^{k}(z'') \varphi^{k-1}(a) \\& \nsim_k  \varphi^{k-1}(a) \varphi^{k}(z')  \varphi^{k-1}(a) = v.
\end{align*}

\item
If $p_u=\varepsilon$ and $s_u = \varphi^{k-1}(a)$ (or the opposite), the fact that $(p_u,s_u) \not\equiv_k (p_v,s_v)$ gives us the possibilities $v=\varphi^{k-1}(\overline{a})\varphi^{k}(z')$ or, $v=\varphi^{k}(z')\varphi^{k-1}(\overline{a})$ for some $z'$. But obviously $u \nsim_k v$.

\item 
If $p_u=\varphi^{k-1}(a)$ and $s_u=\varphi^{k-1}(b)$, then $u=\varphi^{k-1}(a)\varphi^{k}(z') \varphi^{k-1}(b)$ and 
$v=\varphi^{k-1}(\overline{a})\varphi^{k}(z') \varphi^{k-1}(\overline{b})$ for some $z,z'$. Moreover $a=b$ because $u$ and $v$ do not have the same type of order~$k$.
\\ Let us assume that $a=b=0$.
Since $\varphi^{k-1}(0)\nsim_k\varphi^{k-1}(1)$, there exists a word $w$ of length $k$ such that $$\binom{\varphi^{k-1}(0)}{w}\neq \binom{\varphi^{k-1}(1)}{w}.$$
Therefore, we get
\begin{align*}
\binom{u}{w}-\binom{v}{w} 
&= \sum_{\substack{r,s,t \in A^* \\ rst=w}} \left[ \binom{\varphi^{k-1}(0)}{r}\binom{\varphi^{k}(z)}{s} \binom{\varphi^{k-1}(0)}{t} - \binom{\varphi^{k-1}(1)}{r}\binom{\varphi^{k}(z')}{s}\binom{\varphi^{k-1}(1)}{t} \right].
\end{align*}
In the above sum, every term such that $|r|<k$ and $|t|<k$ vanishes because $\varphi^{k-1}(0)\sim_{k-1}\varphi^{k-1}(1)$. Hence, we get
\begin{align*}
\binom{u}{w}-\binom{v}{w} 
&= 2\left[ \binom{\varphi^{k-1}(0)}{w}-\binom{\varphi^{k-1}(1)}{w}\right]\neq 0.
\end{align*}

\end{itemize}

\item[3)] Now assume $\ell=\ell'<k-1$. Hence, there exist letters $a,b$ and words $z,z'$ such that
$$u=\varphi^\ell(a)\varphi^{\ell+1}(z)\varphi^\ell(b) \quad\text{ and }\quad
v=\varphi^\ell(\overline{a})\varphi^{\ell+1}(z')\varphi^\ell(\overline{b}).$$
If $a=b$, then we can conclude that $u \nsim_k v$ as in the last part of case 2). Assume that $a\neq b$ (and $a=0$, $b=1$). Then compute (the reader should be used to this kind of computations)
\begin{eqnarray*}
&&\binom{u}{01^\ell}-\binom{v}{01^\ell}\\
&=&\left(m_{f^{\ell-1}(01^\ell)}(01)-m_{f^{\ell-1}(01^\ell)}(10)\right) \left[ 
1+2|z|+\binom{\varphi(z)}{01}-\binom{\varphi(z')}{01}
\right] \\
&=& \left(m_{f^{\ell-1}(01^\ell)}(01)-m_{f^{\ell-1}(01^\ell)}(10)\right)  \left[ 1+2|z|+ \binom{|z|}{2} + |z|_0  - \binom{|z'|}{2} - |z'|_0 \right]
\end{eqnarray*}
which is positive since $|z| = |z'|$.
    \end{itemize}
\end{proof}

When deleting common prefixes and suffixes of two factors with different types of order~$k$, if the resulting factors are long enough, their types of order~$k$ are different.

\begin{lemma}
\label{lem:deletePref}
Let $u$ and $v$ be factors of $\textbf{t}$ of the same length which do not have the same type of order~$k$. Let $x$ (resp., $y$) be the longest common prefix (resp., suffix) of $u$ and $v$, i.e., $u=xu'y$ and $v=xv'y$. If $|u'y| \geq 2^k-1$ then $u'y$ and $v'y$ do not have the same type of order~$k$. Similarly, if $|xu'| \geq 2^k-1$ then $xu'$ and $xv'$ do not have the same type of order~$k$.
 \end{lemma}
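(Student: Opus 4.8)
The plan is to prove the statement about deleting a common prefix; the statement about deleting a common suffix then follows by symmetry, applying the first part to the reversed words and invoking that reversal preserves $\sim_k$ (via the identity $\binom{\widetilde x}{\widetilde y}=\binom xy$ used in Lemma~\ref{lem:cancel}) and sends order-$k$ types to order-$k$ types, since $\widetilde{\varphi^k(w)}$ equals $\varphi^k(\widetilde w)$ or its complement and the relation $\equiv_k$ of Definition~\ref{def:type} is symmetric under exchanging a letter with its complement. For the prefix part I would argue by contraposition: assuming $u'y$ and $v'y$ have the same type of order~$k$, I would show that $u=x\,u'y$ and $v=x\,v'y$ do too. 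Writing $x=c_1c_2\cdots c_m$ and prepending the letters of $x$ one at a time, from right to left, to both words simultaneously (each intermediate word is a suffix of $u$, hence a factor of $\mathbf{t}$ of length $\ge 2^k-1$, so its type is defined), it suffices to establish the single-letter statement: if $w_1,w_2$ are factors of $\mathbf{t}$ of length $\ge 2^k-1$ with the same type of order~$k$ and $c$ is a letter with $cw_1,cw_2\in\Fac(\mathbf{t})$, then $cw_1$ and $cw_2$ have the same type. The crucial point is that the \emph{same} letter $c$ is prepended to both.

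Next I would record the effect of prepending $c$ on the order-$k$ factorization. Fixing occurrences, the right endpoint is untouched, so $s_{cw_i}=s_{w_i}$, while the first cutting bar of order~$k$ moves by one position: if $|p_{w_i}|<2^k-1$ then $p_{cw_i}=c\,p_{w_i}$, whereas if $|p_{w_i}|=2^k-1$ then $c\,p_{w_i}$ is a full block $\varphi^k(d_i)$ and $p_{cw_i}=\varepsilon$. By Remark~\ref{rem:lgequivk}, equality of types forces $|p_{w_1}|\equiv|p_{w_2}|\pmod{2^{k-1}}$, so either $|p_{w_1}|=|p_{w_2}|$ or $\bigl||p_{w_1}|-|p_{w_2}|\bigr|=2^{k-1}$, and I would treat these two regimes separately against the clauses of Definition~\ref{def:type}.

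In the equal-length regime only clauses 1(a) and 1(d) can relate the minimal representatives, since every other clause changes $|p|$ by $2^{k-1}$ (or $|p|+|s|$ by $2^k$). Clause 1(a) means the factorizations coincide, and prepending $c$ keeps them equal. Clause 1(d) forces $p_{w_1}=\varphi^{k-1}(a)$ and $p_{w_2}=\varphi^{k-1}(\bar a)$; since then $|p_{w_i}|=2^{k-1}<2^k-1$, prepending $c$ and requiring $c\,p_{w_1},\,c\,p_{w_2}$ to be suffixes of order-$k$ blocks of $\mathbf{t}$ would force $c$ to be at once the last letter of $\varphi^{k-1}(\bar a)$ and of $\varphi^{k-1}(a)$, which are complementary; hence 1(d) cannot survive a common nonempty prepend and can occur only when $x=\varepsilon$. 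In the length-gap regime I would run through clauses 1(b), 1(c), 2(a), 2(b): when the prepend leaves both $|p_{w_i}|<2^k-1$, each clause is reproduced verbatim with $p_{w_i}$ replaced by $c\,p_{w_i}$, while when it pushes the larger $|p_{w_i}|$ (the one equal to $2^k-1$) across a cutting bar, a short computation using $\varphi^k(e)=\varphi^{k-1}(e)\varphi^{k-1}(\bar e)$ shows that the gained block $\varphi^k(d_i)$ exactly accounts for the $2^k$ jump in $|p|+|s|$, converting clauses 1(b)/1(c) into 2(b)/2(a) and conversely.

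The main obstacle is precisely this boundary bookkeeping in the length-gap regime: one must verify that prepending a letter across a cutting bar sends each clause of Definition~\ref{def:type} to another \emph{valid} clause, with the newly created boundary letters $d_i$ matching, rather than to an unrelated pair. The recurring tool that makes every such verification succeed, and that simultaneously eliminates the one genuinely dangerous configuration (clause 1(d)), is the elementary fact that a factor of $\mathbf{t}$ and its complement share no nonempty common prefix or suffix: this both pins down the boundary letter forced by an occurrence and forbids a common prepend in the symmetric case. Once the single-letter statement is proved, iterating it along $x$ gives the prefix version of the lemma, and reversal gives the suffix version.
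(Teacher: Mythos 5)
Your proposal is correct, but it runs in the opposite direction from the paper's argument. The paper works ``downward'': starting from the factorizations $(p_u,s_u)$ and $(p_v,s_v)$ of the long words, it peels off the whole common prefix $x$ at once --- writing $x=p_u\varphi^{k-1}(u'''_{[1,\ell]})$ and tracking the parity of $\ell$ --- and explicitly computes the factorizations of $u'y$ and $v'y$ in each of the configurations $|p_u|=|p_v|$ and $\bigl||p_u|-|p_v|\bigr|=2^{k-1}$, checking directly that the resulting pairs are not $\equiv_k$-equivalent. You argue the contrapositive ``upward'', reducing everything to a single-letter prepending lemma iterated along $x$, so that the case split is driven by the clauses of Definition~\ref{def:type} rather than by the position of $x$ relative to the cutting bars. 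The two pivots of your route do hold: prepending a letter either extends $p$ by one or, exactly when $|p|=2^k-1$, completes a block and resets $p$ to $\varepsilon$ while shifting $|p|+|s|$ by $2^k$, which is precisely the exchange between clauses (1b)/(1c) and (2b)/(2a); and clause (1d) is vacuous under a common prepend because $\varphi^{k-1}(a)$ and $\varphi^{k-1}(\overline{a})$ force complementary preceding letters --- and, as your argument implicitly needs, a factor whose factorization has $|p|=2^{k-1}$ as the \emph{minimal} value admits a unique cutting set, so the obstruction cannot be dodged via a second factorization. One point to make explicit in a full write-up: the pairs you extend must be the factorizations induced by the actual occurrences of $cw_1$ and $cw_2$ inside $u$ and $v$, which need not be the minimal representatives; this is harmless because all factorizations of a given factor are mutually $\equiv_k$-equivalent by Lemma~\ref{lem:relationCuts}, but the clause analysis should be run on the induced pairs rather than on $(p_{w_i},s_{w_i})$. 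Your approach buys modularity (a reusable one-letter stability lemma plus induction) and avoids the paper's bookkeeping with $u'''$ and the parity of $\ell$; the paper's version, in exchange, never has to verify clause-by-clause stability and directly exhibits the non-equivalent pairs needed later. Both are honest case analyses of comparable weight.
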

\begin{proof}
We only show the result for $u'y$ and $v'y$. Let us assume that $x \neq \varepsilon$.

Let $D\in \Cut_k(u)$ such that $|p_u|=\min D$. There exists $C\in \Cut_k(u'y)$ such that $C+|x|\subset D$. In particular, $|x|+\min C\equiv \min D\pmod{2^k}$. 
There exists $C'\in \Cut_k(u'y)$ such that $|p_{u'y}|=\min C'$. From Lemma~\ref{lem:relationCuts}, we know that $\min C\equiv \min C'\pmod{2^{k-1}}$. Hence 
$$|xp_{u'y}|=|x|+\min C' \equiv |x|+\min C \pmod{2^{k-1}}$$
and we conclude that $|xp_{u'y}|\equiv |p_u|\pmod{2^{k-1}}$. Otherwise stated, $$|p_u|\equiv |p_v| \pmod{2^{k-1}}\quad\text{ if and only if }|p_{u'y}|\equiv |p_{v'y}| \pmod{2^{k-1}}\, .$$
Using that fact, if $|p_u| \not\equiv |p_v| \pmod{2^{k-1}}$ then $u'y$ and $v'y$ do not have the same type of order~$k$ (see Remark~\ref{rem:lgequivk}). In what follows, we may assume that $|p_u| \equiv |p_v| \pmod{2^{k-1}}$ and thus, $|s_u| \equiv |s_v| \pmod{2^{k-1}}$.

We have two main cases to discuss: either $|p_u | = |p_v|$ or, $| |p_u| - |p_v| | = 2^{k-1}$.

\begin{itemize}
  \item[1)] Assume $|p_u | = |p_v|$. This implies that $p_u = p_v$. Indeed these two words are suffixes of the same length of a word of the form $\varphi^{k}(a)$ (where $a$ is a letter). Since they share a common prefix ($x\neq\varepsilon$), they must be equal. 
Consequently, we have $s_u \neq s_v$, otherwise $u$ and $v$ would have the same type. 
Therefore, $y = \varepsilon$ and we will write $u'$ instead of $u'y$. Let us show that $(p_{u'},s_{u'}) = (\varepsilon, s_u)$ and $(p_{v'},s_{v'})=(\varepsilon,s_v)$ meaning that $u'y$ and $v'y$ do not have the same type. The words $u$ and $v$ are respectively of the form $p_u \varphi^k(u'') s_u$ and $p_v \varphi^k(v'') s_v$ for some words $u'', v''$.   
 
Since $p_u = p_v$, we have $|x| \geq |p_u|$. 
If $|x|=|p_u|$, $u'= \varphi^k(u'') s_u$ and $v'= \varphi^k(v'') s_v$ so $(p_{u'},s_{u'}) = (\varepsilon, s_u)$ and $(p_{v'},s_{v'})=(\varepsilon,s_v)$. Otherwise, $|x|>|p_u|$ and there exists $\ell > 0$ such that\footnote{We assume that a finite word $u''$ has its first symbol indexed by $1$, so $u''_{[1,j]}$ denotes the prefix of $u''$ of length $j$.} $p_u \varphi^k (u''_{[1, \ell-1]})$ is a proper prefix of $x$ and such that $x$ is a prefix of $p_u \varphi^k (u''_{[1, \ell]})$. Then $x = p_u \varphi^k (u''_{[1, \ell]})$. This is due to the fact that if $\varphi^k(a)$ and $\varphi^k(b)$ share a non-empty common prefix, then $a=b$.
Thus, $u'=\varphi^k(u''_{[\ell+1, |u''|]}) s_u$, $v'=\varphi^k(v''_{[\ell+1, |v''|]}) s_v$ and we are done.

\item[2)] Let us consider the second case and assume that $|p_v| = |p_u| + 2^{k-1}$. As usual, $u$ and $v$ are of the form $p_u \varphi^k(u'') s_u$, $p_v \varphi^k(v'') s_v$ and set $u'''=\varphi(u'')$ and $v'''=\varphi(v'')$. Let $a$ be the letter such that $p_v$ is a suffix of $\varphi^k (a)$. 
Two sub-cases have to be considered: either $|s_v|=|s_u|+2^{k-1}$ or, $|s_u|=|s_v|+2^{k-1}$. In each one of them, we let the letter $b$ be such that the longest word in $\{s_u,s_v\}$ is a prefix of $\varphi^k (b)$.

\item[2.a)] Consider the first sub-case, $|s_v|=|s_u|+2^{k-1}$. By definition of $b$, $s_v$ is a prefix of $\varphi^{k}(b)$. We have $p_v= w_1 \varphi^{k-1}(\overline{a})$ and $s_v=\varphi^{k-1}(b) w_2$ where $w_1$ (resp., $w_2$) is a suffix (resp., prefix) of $\varphi^{k-1}(a)$ (resp., $\varphi^{k-1}(\overline{b})$).
Recall that $u$ and $v$ have a non-empty common prefix $x$. Since  $|s_v|=|s_u|+2^{k-1}<2^k$, we get $|p_u| < 2^{k-1}$ and $p_u$ is a suffix of $\varphi^{k-1}(a)$. Hence, $p_u=w_1$. Figure~\ref{fig:cas1} illustrates the situation.

\begin{figure}[h!]
\begin{center}
\begin{tikzpicture}
\draw (0,0) rectangle (8,0.5);
\draw (1,0) -- (1,0.5);
\draw (7.2,0) -- (7.2,0.5);
\draw [color=gray] (1.05,0.05) rectangle (3.95,0.45);
\draw [dashed,color=gray] (2.5,0.05) -- (2.5,0.45);

\node (0) at (0.5,0.25) {\footnotesize{$w_1$}};
\node (1) at (-1,0.25) {$u$ :};
\node (2) at (7.6,0.25) {\footnotesize{$s_u$}};

\draw (0,-1.5) rectangle (8,-1);
\draw (2.5,-1.5) -- (2.5,-1);
\draw (5.7,-1.5) -- (5.7,-1);
\draw [dashed,color=gray] (7.2,-1.5) -- (7.2,-1);
\draw [dashed,color=gray] (1,-1.5) -- (1,-1);

\node (0) at (0.5,-1.25) {\footnotesize{$w_1$}};
\node (0) at (1.8,-1.25) {\footnotesize{$\varphi^{k-1}(\overline{a})$}};
\node (1) at (-1,-1.25) {$v$ :};
\node (0) at (6.5,-1.25) {\footnotesize{$\varphi^{k-1}(b)$}};
\node (0) at (7.6,-1.25) {\footnotesize{$w_2$}};

\draw [decorate,decoration={brace,amplitude=10pt,mirror}] (5.7,-1.6) -- (8.7,-1.6);
\node (3) at (7.2,-2.3) {$\varphi^k (b)$};
\draw [decorate,decoration={brace,amplitude=10pt,mirror}]
(-0.5,-1.6) -- (2.5,-1.6);
\node (4) at (1,-2.3) {$\varphi^k (a)$};

\draw [color=red] (0,-0.1) -- (0,-0.3);
\draw [color=red] (0,-0.2) -- (1,-0.2);
\draw [dashed,color=red] (1,-0.2) -- (1.5,-0.2);

\draw [color=red] (0,-0.7) -- (0,-0.9);
\draw [color=red] (0,-0.8) -- (1,-0.8);
\draw [dashed,color=red] (1,-0.8) -- (1.5,-0.8);
\node [color=red] (5) at (0.7,-0.5) {$x$};

\end{tikzpicture}
\end{center}
\caption{Decomposition of $u$ and $v$ in the first sub-case.}
\label{fig:cas1}
\end{figure}
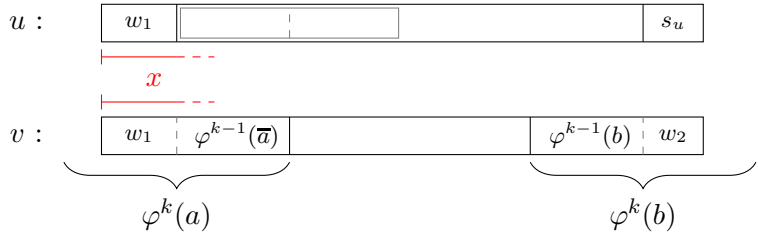

The word $s_u$ is a prefix of some $\varphi^k(c)$. Hence $w_2 = \overline{s_u}$ or $w_2 = s_u$ depending on whether $y$ is empty or not. Since $u$ and $v$ do not have the same type, $w_2=\overline{s_u}$ and these words are non-empty or, $\overline{a} = b$.

Using the same argument as before, $|x| \geq |p_u|$. 
If $|x| = |p_u|$, we have $u'y = \varphi^{k} (u'') s_u$ and $v'y= \varphi^{k-1}(\overline{a}) \varphi^k (v'') \varphi
^{k-1}(b) w_2$ and comparing the pairs $(\varepsilon,s_u)$ and $(\varphi^{k-1}(\overline{a}),\varphi^{k-1}(b) w_2)$, we conclude that $u'y$ and $v'y$ do not have the same type (whenever $w_2=\overline{s_u}\neq\varepsilon$ or, $\overline{a} = b$). 

Otherwise, $|x|>|p_u|$ and there exists some $\ell > 0$ such that 
\begin{align}
\label{equ:x}
 x = p_u \varphi^{k-1}(u'''_{[1, \ell]}) = p_u \varphi^{k-1}(\overline{a}) \varphi^{k-1}(v'''_{[1, \ell-1]}). 
 \end{align}

We thus have
\[ u'y = \varphi^{k-1} (u'''_{[\ell+1, |u'''|]}) s_u \quad\text{ and }\quad  v'y=\varphi^{k-1} (v'''_{[\ell, |v'''|]}) s_v. \]
 
From equalities (\ref{equ:x}), we observe that
\begin{align}
\label{equ:utierce1}
u'''_i=\left\{\begin{array}{ll}
\overline{a}, & \text{if } i=1;\\
v'''_{i-1}, & \text{if } 1<i\le\ell.
\end{array}\right.
\end{align}
Moreover, since $u'''_{2i+1} u'''_{2i+2} = \varphi (u''_{i+1})$, for all $i\in\{1,\ldots,|u''|\}$, we have $u'''_{2i+1} = \overline{u'''_{2i+2}}$. Similarly, we have $v'''_{2i+1} = \overline{v'''_{2i+2}}$. We may thus conclude that
\begin{align}
\label{equ:utierce2}
u'''_i=\left\{\begin{array}{lc}
\overline{a}, & \text{if $i$ is odd};\\
a, & \text{if $i$ is even}.
\end{array}\right.
\end{align}
If $\ell$ is even, we have 
\[u'y =  \varphi^k (u''_{[\frac{\ell+2}{2},|u''|]}) s_u \quad
\text{ and } 
\quad v'y = \varphi^{k-1}(v'''_{\ell}) \varphi^k (v''_{[\frac{\ell+2}{2},|v''|]}) \varphi^{k-1}(b) w_2\, . \]
Thus, $(p_{u'y}, s_{u'y})= (\varepsilon, s_u)$ and $(p_{v'y}, s_{v'y})= (\varphi^{k-1}(\overline{a}), \varphi^{k-1}(b) w_2)$ and $u'y$, $v'y$ do not have the same type of order~$k$.

If $\ell$ is an odd number,
\[
u'y =  \varphi^{k-1}(u'''_{\ell+1}) \varphi^k (u''_{[\frac{\ell+3}{2},|u''|]}) s_u
\quad \text{ and }\quad
v'y =  \varphi^k (v''_{[\frac{\ell+1}{2},|v''|]}) \varphi^{k-1}(b) w_2\, .
\]
In that case, $(p_{u'y}, s_{u'y})= (\varphi^{k-1} (a), s_u)$ and $(p_{v'y}, s_{v'y})= (\varepsilon, \varphi^{k-1}(b) w_2)$ and again, $u'y$, $v'y$ do not have the same type of order~$k$.

\item[2.b)] Let us care about the second sub-case: $|s_u|  = |s_v| + 2^{k-1}$. By definition of $b$, $s_u$ is a prefix of $\varphi^k(b)$. Thus $p_v= w_1 \varphi^{k-1}(\overline{a})$ and $s_u=\varphi^{k-1}(b) w_2$ where $w_1$ (resp., $w_2$) is a suffix (resp., prefix) of $\varphi^{k-1}(a)$ (resp., $\varphi^{k-1}(\overline{b})$).
Otherwise stated, as illustrated in Figure~\ref{fig:cas2}, we have 
\[
u = p_u \varphi^{k} (u'') \varphi^{k-1}(b) w_2
\quad\text{ and }\quad
v= p_u \varphi^{k-1}(\overline{a}) \varphi^{k}(v'') s_v\, .
\]
\begin{figure}[h!]
\begin{center}
\begin{tikzpicture}
\draw (0,0) rectangle (8,0.5);
\draw (1,0) -- (1,0.5);
\draw (7.2,-1) -- (7.2,-1.5);
\draw [color=gray] (1.05,0.05) rectangle (3.95,0.45);
\draw [dashed,color=gray] (2.5,0.05) -- (2.5,0.45);

\node (0) at (0.5,0.25) {\footnotesize{$w_1$}};
\node (1) at (-1,0.25) {$u$ :};
\node (2) at (7.6,0.25) {\footnotesize{$w_2$}};

\draw (0,-1.5) rectangle (8,-1);
\draw (2.5,-1.5) -- (2.5,-1);
\draw (5.7,0) -- (5.7,0.5);
\draw [dashed,color=gray] (7.2,0) -- (7.2,0.5);
\draw [dashed,color=gray] (1,-1.5) -- (1,-1);

\node (0) at (0.5,-1.25) {\footnotesize{$w_1$}};
\node (0) at (1.8,-1.25) {\footnotesize{$\varphi^{k-1}(\overline{a})$}};
\node (1) at (-1,-1.25) {$v$ :};
\node (0) at (6.5,0.25) {\footnotesize{$\varphi^{k-1}(b)$}};
\node (0) at (7.6,-1.25) {\footnotesize{$s_v$}};

\draw [decorate,decoration={brace,amplitude=10pt}] (5.7,0.6) -- (8.7,0.6);
\node (3) at (7.2,1.3) {$\varphi^k (b)$};
\draw [decorate,decoration={brace,amplitude=10pt,mirror}]
(-0.5,-1.6) -- (2.5,-1.6);
\node (4) at (1,-2.3) {$\varphi^k (a)$};

\draw [color=red] (0,-0.1) -- (0,-0.3);
\draw [color=red] (0,-0.2) -- (1,-0.2);
\draw [dashed,color=red] (1,-0.2) -- (1.5,-0.2);

\draw [color=red] (0,-0.7) -- (0,-0.9);
\draw [color=red] (0,-0.8) -- (1,-0.8);
\draw [dashed,color=red] (1,-0.8) -- (1.5,-0.8);
\node [color=red] (5) at (0.7,-0.5) {$x$};

\end{tikzpicture}
\end{center}
\caption{Decomposition of $u$ and $v$ in the second sub-case.}
\label{fig:cas2}
\end{figure}
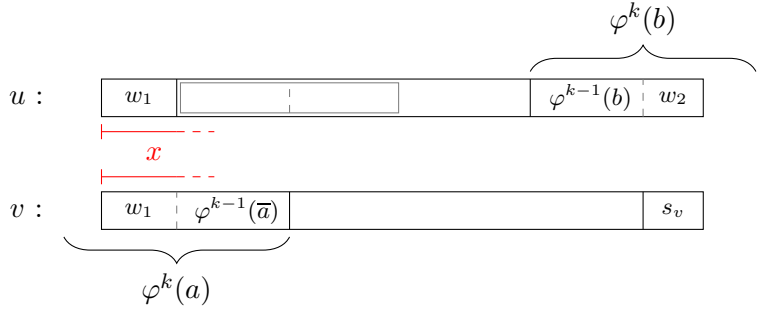

Observe again that $w_2 = s_v$ or $w_2=\overline{s_v}$.

Because $u$ and $v$ do not have the same type of order~$k$, $w_2 = \overline{s_v}$ and these words are non-empty or, $a = b$. 

If $x=p_u$, $(p_{u'y},s_{u'y}) = (\varepsilon, \varphi^{k-1}(b) w_2)$ and $(p_{v'y},s_{v'y}) = (\varphi^{k-1} (\overline{a}), s_v)$. Comparing these two pairs, we get $(p_{u'y}, s_{u'y}) \not\equiv_k (p_{v'y},s_{v'y})$, i.e., $u'y$ and $v'y$ do not have the same type.

Otherwise, as in the previous case, there exists $\ell > 0$ such that $x=p_u \varphi^{k-1}(u'''_{[1,\ell]})$. Observe that equalities (\ref{equ:utierce1}) and (\ref{equ:utierce2}) are still valid. It remains to discuss about the parity of $\ell$ to conclude.
\end{itemize}
\end{proof}

\section{$k$-binomial complexity of the Thue--Morse word}\label{sec:comlexity}

Using the lemmas from the previous section, we first show that two different factors of $\bt$ of length at most $2^k-1$ are never $k$-binomially equivalent.
Then, we take into account factors of length at least $2^k$. On the one hand, we prove that $(p_u,s_u) \not \equiv_k (p_v,s_v)$ implies $u \nsim_k v$. On the other hand, we compute the number of equivalence classes of $\# \left( \{(p_u,s_u) : u \in \Fac_n(\mathbf{t}) \}/\!\equiv_k\right)$.

\begin{proposition}
\label{prop:smallWords}
Let $u,v$ be two different factors of $\bt$ of length $n \leq 2^k-1$, which do not have any common prefix or suffix.  
We have $u \nsim_k v$.
\end{proposition}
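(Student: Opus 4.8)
The plan is to produce, for a suitable scale $j\le k$, an explicit word $t$ with $|t|\le k$ and $\binom{u}{t}\neq\binom{v}{t}$; since $u\nsim_{|t|}v$ together with $|t|\le k$ entails $u\nsim_k v$, this suffices. The computations are the same in spirit as those in Lemma~\ref{lem:nonlengthmod} and Lemma~\ref{lem:lengthMod}, the difference being that a word of length $n<2^k$ carries at most one cutting bar of order $k$, so the inner $\varphi^k$-block is empty and the whole analysis takes place at a strictly smaller scale.

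First I would use the hypotheses to fix the structure. As $u$ and $v$ share no nonempty prefix, their first letters differ, say $u$ starts with $a$ and $v$ with $\bar a$; symmetrically their last letters differ. Reading the (unique, for $n\ge 4$) cutting bars of $u$ and $v$, I would let $j$ be the largest order, necessarily $j\le k$ because $2^{j-1}\le n<2^k$, for which $u$ and $v$ begin (or, after reversal, end) with the complementary full blocks $\varphi^{j-1}(a)$ and $\varphi^{j-1}(\bar a)$. Lemma~\ref{lem:equiv-not} supplies the engine of the argument: $\varphi^{j-1}(a)\sim_{j-1}\varphi^{j-1}(\bar a)$ but $\varphi^{j-1}(a)\nsim_{j}\varphi^{j-1}(\bar a)$. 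Using Lemma~\ref{lem:equiv2} and Corollary~\ref{cor:permut} I would move this distinguished block to the front and reduce $u,v$ to one of the normal forms already met in Lemma~\ref{lem:nonlengthmod} and Lemma~\ref{lem:lengthMod}, now with empty middle $\varphi^{j}$-block.

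When $j\le k-1$, I would take $t=01^{j}$ (or its reversal) and evaluate $\binom{u}{t}-\binom{v}{t}$ through Remark~\ref{rem:difference}: the multiplicities collapse to those of $f^{j-1}(01^{j})$, whose elements have length at most $2$, and Propositions~\ref{pro:diffpower} and~\ref{pro:diffpower2} control the surviving gaps $m(01)-m(10)$ and $m(00)-m(11)$, both positive. As in the sub-cases of Lemma~\ref{lem:nonlengthmod}, the difference then reduces to such a positive gap times an expression of the form $c+\binom{u^\ast}{0}-\binom{v^\ast}{0}$ with $c>0$ and $\binom{u^\ast}{0}-\binom{v^\ast}{0}\in\{-2,-1,0,1,2\}$; the absence of a common prefix (resp.\ suffix) excludes the single extreme value that could cancel the expression, so $\binom{u}{t}\neq\binom{v}{t}$ and $u\nsim_{j+1}v$, hence $u\nsim_k v$.

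The case $j=k$ is the main obstacle, since then $j+1=k+1$ and refinement is no longer available, so I must exhibit a witness of length exactly $k$. Here $u=\varphi^{k-1}(a)\,u^\ast$ and $v=\varphi^{k-1}(\bar a)\,v^\ast$ with $|u^\ast|=|v^\ast|<2^{k-1}$, and I would argue as in the last bullet of case~2 of Lemma~\ref{lem:lengthMod}: pick $w$ of length $k$ with $\binom{\varphi^{k-1}(a)}{w}\neq\binom{\varphi^{k-1}(\bar a)}{w}$ (one exists because $\varphi^{k-1}(a)\nsim_k\varphi^{k-1}(\bar a)$), expand $\binom{u}{w}-\binom{v}{w}$ by Lemma~\ref{lem:binomial2}, and observe that every term splitting $w$ into factors each of length $<k$ cancels because $\varphi^{k-1}(a)\sim_{k-1}\varphi^{k-1}(\bar a)$. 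What remains is a nonzero multiple of $\binom{\varphi^{k-1}(a)}{w}-\binom{\varphi^{k-1}(\bar a)}{w}$, and the delicate point, where the no-common-affix hypothesis is indispensable, is to verify that the trailing factors $u^\ast,v^\ast$ do not reintroduce a cancellation, which is again settled by the parity analysis of $\binom{u^\ast}{0}-\binom{v^\ast}{0}$ used above.
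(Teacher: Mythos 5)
Your overall strategy (exhibit a witness word of length at most $k$ and control its coefficient gap via Propositions~\ref{pro:diffpower} and~\ref{pro:diffpower2}) is the right one, but the way you choose the scale $j$ creates a case your argument cannot close. The paper fixes $j$ purely from the length: $j=\max\{i\le k: n\ge 2^{i+1}\}$, and the hypothesis $n\le 2^k-1$ then forces $j\le k-2$, so every witness used ($ab^j$ or $ab^{j+1}$) automatically has length at most $k$ and your problematic case ``$j=k$'' simply never arises. With that choice, either the types of order~$j$ of $u$ and $v$ differ --- and then $|p_u|+|s_u|\le 2^{j+1}-2<n$, so Lemmas~\ref{lem:nonlengthmod} and~\ref{lem:lengthMod} apply verbatim --- or they agree, in which case $u$ and $v$ reduce to $\varphi^j$- or $\varphi^{j-1}$-images of words $u',v'$ of length $2$ or $3$, and the proof enumerates the finitely many admissible pairs $(u',v')$ (using cube- and overlap-freeness of $\bt$) and computes each difference explicitly. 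Note also that your reduction to the ``normal forms of Lemma~\ref{lem:nonlengthmod} with empty middle block'' is not licit as stated: that lemma explicitly assumes $|p_u|+|s_u|<|u|$, i.e.\ a non-empty middle $\varphi^j$-block, which is exactly what fails here; this is why the short-word case needs its own case analysis rather than a citation.

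The concrete gap is in your last paragraph. If $u=\varphi^{k-1}(a)u^\ast$ and $v=\varphi^{k-1}(\bar a)v^\ast$, expanding $\binom{u}{w}-\binom{v}{w}$ by Lemma~\ref{lem:binomial2} gives
$\sum_{rs=w}\bigl[\binom{\varphi^{k-1}(a)}{r}\binom{u^\ast}{s}-\binom{\varphi^{k-1}(\bar a)}{r}\binom{v^\ast}{s}\bigr]$,
and the terms with $|r|<k$ cancel only if $\binom{u^\ast}{s}=\binom{v^\ast}{s}$ for all $s$ with $|s|\le k-1$ --- i.e.\ only if $u^\ast\sim_{k-1}v^\ast$, which is not available: $u^\ast$ and $v^\ast$ are arbitrary short factors with nothing in common. (Contrast with case~2 of Lemma~\ref{lem:lengthMod}, where the middle parts are $\varphi^k(z)$ and $\varphi^k(z')$ with $|z|=|z'|$, hence $k$-binomially equivalent by Lemma~\ref{lem:equiv-not}, which is what makes the cross terms vanish there.) The ``parity analysis of $\binom{u^\ast}{0}-\binom{v^\ast}{0}$'' controls only the length-one subwords and cannot repair this. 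Replacing your block-based choice of $j$ by the paper's length-based one removes the obstruction and lets the rest of your computations go through.
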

\begin{proof}
If $n \leq 3$, this is trivial: take $u$, $v$ two different factors of length $n$, $\binom{u}{u}=1$ and $\binom{v}{u}=0$. Since $k\ge 3$, $u\nsim_k v$. 
Let us assume $n\ge 4$ and set $j=\max \{i \leq k : |u| \geq 2^{i+1} \}$. 
Notice that $1 \leq j \leq k-2$. The type of order~$j$ of $u$ and $v$  is well-defined. Either they have the same type of order~$j$ or, they don't.
If they do not have the same type, since we always have $|p_u|+|s_u| \leq 2^{j+1} -2$, we have $|p_u|+|s_u| < |u|$. The same holds for $v$. 
By applying Lemmas~\ref{lem:nonlengthmod} or~\ref{lem:lengthMod}, we obtain that $u \not \sim_j v$, thus $u \nsim_k v$.

We can thus assume that $(p_u,s_u) \equiv_j (p_v,s_v)$.
Let us consider the different cases of Definition \ref{def:type}.
Since $u$ and $v$ do not share any common prefix or suffix, this gives restrictions to the different possibilities. For instance, in the situation (1.a) of the definition, we get $(p_u,s_u) = (p_v,s_v) = (\varepsilon,\varepsilon)$.
For all what remains, let $a,b$ be two different letters.

\begin{itemize}
\item[(1.a)]
If $(p_u,s_u)=(p_v,s_v)$, then $p_u = p_v = s_u = s_v = \varepsilon$, so $u = \varphi^j(u')$ and $v = \varphi^j(v')$ for some words $u'$ and $v'$ of length 2 or 3 (by definition of $j$).
Since $u$ and $v$ do not have any non-empty common prefix and suffix, $u'$ and $v'$ have distinct first and last letter.
Recalling that $\bt$ is cube-free, we thus have to consider the cases
\[
(u',v') \in \{(aa,bb),(aab,bba),(aba,bab),(ab,ba),(aab,baa) \mid a,b \in A, a \neq b\}.
\]
Using the same kind of computations as before, e.g., in the proof of Lemma~\ref{lem:nonlengthmod} making use of Remark~\ref{rem:difference}, we get 
\[
\binom{u}{ab^j} - \binom{v}{ab^j} = \binom{\varphi^j(u')}{ab^j} - \binom{\varphi^j(v')}{ab^j} =  \left(m_{f^{j-1}(ab^j)}(ab)-m_{f^{j-1}(ab^j)}(ba)\right)(|u'|_a - |v'|_a)
\]
which is positive for the first three pairs $(u',v')$. For the last two ones, simply compute
\begin{align*}
\binom{u}{ab^{j+1}} - \binom{v}{ab^{j+1}} =& \, m_{f^{j}(ab^{j+1})}(ab)\left( \binom{u'}{ab} - \binom{v'}{ab} \right) \\ &-m_{f^{j}(ab^{j+1})}(ba) \left( \binom{u'}{ba} - \binom{v'}{ba} \right) >0.
\end{align*}

Since $j \leq k-2$, this implies in all cases that $u \nsim_k v$.

\item[(1.b) and (1.c)]
Let us consider the case where there exists some letter $a$ such that $u = \varphi^{j-1}(a)\varphi^j(u')$ and $v = \varphi^j(v') \varphi^{j-1}(a)$ (or the converse). We know that the first letter of $v'$ is $b$ and the last letter of $u'$ is $a$ (because $u$ and $v$ have distinct first and last letter).
Since $\bt$ is overlap-free, we have to consider the cases
\begin{align*}
	(u',v') \in 
	&\{ (aa,ba),(aa,bb),(ba,ba),(ba,bb),\\
	&(aba,bab),(aba,bba),
	(baa,bab),(baa,bba)  \mid a,b \in A, a \neq b \}.
\end{align*}
Indeed, $u'$ cannot be equal to $bba$ because otherwise $u$ would be equal to $\varphi^{j-1}(ababaab)$. Similarly, $v'$ cannot be equal to $baa$. 
For every pairs in the above set, compute 
\[
\binom{\varphi
^{j-1}(a\varphi(u'))}{ab^j} - \binom{\varphi^{j-1}(\varphi(v')a)}{ab^j} = (m_{f^{j-1}(ab^j)}(ab)-m_{f^{j-1}(ab^j)}(ba)) (|u'| + |u'|_a- |v'|_a) > 0.
\]

\item[(1.d)]
Assume now that there exists a letter $a$ such that $u=\varphi^{j-1}(a) \varphi^j(u') \varphi^{j-1}(b)$ and $v=\varphi^{j-1}(b) \varphi^j(v') \varphi^{j-1}(a)$. We have $|u|=2^j+2^j|u'|$ and from the definition of $j$, we get $|u'|\le 2$. Then compute
\[
\binom{\varphi
^{j-1}(a\varphi(u')b)}{ab^j} - \binom{\varphi^{j-1}(b\varphi(v')a)}{ab^j} = (m_{f^{j-1}(ab^j)}(ab)-m_{f^{j-1}(ab^j)}(ba)) (1+2|u'| + |u'|_a- |v'|_a).
\]
This quantity is positive for every $u',v' \in A \cup A^2$.

\item[(2.a) and (2.b)]
Otherwise, there exists some letter $a$ such that $u=\varphi^{j-1}(a) \varphi^{j}(u') \varphi^{j-1}(b)$ and $v=\varphi^j(v')$. Again $|u'|\le 2$ and $|v'|=|u'|+1$. Then, $v'$ has to begin with $a$ and end with $b$. The cases to consider are the following ones: 
\begin{align*}
(u',v') \in &\{(a,ab),(b,ab), (aa,aab), (aa,abb),\\
& (ab,aab), (ab,abb), (ba,aab), (ba, abb)   \mid a,b \in A, a \neq b \}.
\end{align*}
Reasoning as in the sub-case $2.2)$ of Lemma \ref{lem:nonlengthmod},
\[
\binom{\varphi^{j-1}(a\varphi(u')b)}{ab^j} - \binom{\varphi^{j-1}(\varphi(v'))}{ab^j} = (m_{f^{j-1}(ab^j)}(ab)-m_{f^{j-1}(ab^j)}(ba)) (1+|u'| + |u'|_a - |v'|_a) >0.
\]
\end{itemize} 
\end{proof}

\begin{corollary}
\label{cor:smallWords}
Let $k \geq 3$. For all $n \leq 2^k-1$, we have $b_{\bt,k}(n) = p_\bt(n)$.
\end{corollary}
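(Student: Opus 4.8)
The plan is to upgrade Proposition~\ref{prop:smallWords} from the special case of factors sharing no common prefix or suffix to \emph{all} pairs of distinct factors, and then merely count. Concretely, I would establish that any two distinct factors $u,v$ of $\bt$ of length $n\le 2^k-1$ satisfy $u\nsim_k v$. This forces every $\sim_k$-class inside $\Fac_n(\bt)$ to be a singleton, so that $b_{\bt,k}(n)=\#\Fac_n(\bt)=p_\bt(n)$, which is exactly the claim.

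First I would reduce to the situation already handled by the proposition. Given distinct $u,v\in\Fac_n(\bt)$, let $x$ be their longest common prefix; since $u\neq v$ and $|u|=|v|=n$, we have $|x|<n$ and may write $u=x\tilde u$, $v=x\tilde v$ with $\tilde u,\tilde v$ nonempty and beginning with distinct letters. Letting $y$ be the longest common suffix of $\tilde u$ and $\tilde v$, write $\tilde u=u'y$ and $\tilde v=v'y$. Because $\tilde u$ and $\tilde v$ have equal length and distinct first letters, $u'$ and $v'$ are nonempty, of equal length, with distinct first letters and, by maximality of $y$, distinct last letters. Hence $u'$ and $v'$ are distinct factors of $\bt$ of length $|u'|\le n\le 2^k-1$ with no common prefix or suffix, and Proposition~\ref{prop:smallWords} gives $u'\nsim_k v'$.

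Then I would transport this back to $u$ and $v$ via the cancellation property. Applying Lemma~\ref{lem:cancel} first on the left and then on the right yields $u'\nsim_k v'\Rightarrow x\,u'\nsim_k x\,v'\Rightarrow x\,u'\,y\nsim_k x\,v'\,y$, i.e.\ $u\nsim_k v$, as wanted. The small lengths $n\le 3$ need no separate treatment, being already covered by the proposition.

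The only real care-point, rather than a genuine obstacle, is the affix bookkeeping in the reduction: one must verify that stripping the longest common prefix and then the longest common suffix leaves two nonempty words of equal length whose first and last letters genuinely differ, so that the hypotheses of Proposition~\ref{prop:smallWords} are met. All the combinatorial content sits in that proposition and in Lemma~\ref{lem:cancel}, so once the reduction is justified the corollary is immediate.
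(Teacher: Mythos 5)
Your proposal is correct and follows essentially the same route as the paper: strip the longest common prefix and suffix to reduce to two factors with no common affixes, invoke Proposition~\ref{prop:smallWords}, and transport the non-equivalence back via the cancellation property of Lemma~\ref{lem:cancel}. The affix bookkeeping you flag (nonemptiness and distinct first/last letters of $u'$, $v'$) is exactly the detail the paper leaves implicit, and your verification of it is sound.
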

\begin{proof}
Let us take two different factors $u$ and $v$ of the same length $n \leq 2^k-1$. If $u$ and $v$ do not share any common prefix or suffix, $u \nsim_k v$ by the previous proposition. 
Otherwise, there exist words $x,y,u',v'$ such that $u=xu'y$, $v=xv'y$ where $u'$ and $v'$ do not share any common prefix or suffix. We apply the previous proposition to $u',v'$ and conclude using the cancellation property (Lemma~\ref{lem:cancel}).
\end{proof}

Let $u,v$ be distinct factors of $\bt$ of length $n \geq 2^k-1$. We are now ready to prove that $(p_u,s_u) \not \equiv_k (p_v,s_v)$ implies $u \nsim_k v$.

\begin{proof}[Proof of Theorem~\ref{thm:typediff}]

Let $x$ and $y$ respectively denote the longest common prefix and suffix of $u$ and $v$: $u = xu'y$ and $v = xv'y$.
We obviously have $u' \neq v'$ and, by Lemma \ref{lem:cancel}, we have $u \sim_k v$ if and only if $u' \sim_k v'$. 

If $|u'|\leq 2^k-1$, using Proposition~\ref{prop:smallWords}, we conclude that $u' \nsim_k v'$.
Otherwise, from Lemma~\ref{lem:deletePref}, $u'$ and $v'$ do not have the same type of order~$k$.
Thus, without loss of generality we may now assume that $u$ and $v$ do not have any non-empty common prefix or suffix.

Let $j$ be the greatest integer less than or equal to $k$ such that $|u|,|v| \geq 2^{j+1}$.
If $u$ and $v$ do not have the same type of order $j$, then we fall into one of the complementary situations of Lemmas~\ref{lem:nonlengthmod} or \ref{lem:lengthMod} (indeed, the extra assumption of Lemmas~\ref{lem:nonlengthmod} holds because $|u|\ge 2^{j+1}$, $|p_u|,|s_u|\le 2^j-1$ and thus  $(p_u,s_u),(p_v,s_v)$ satisfy $|p_u|+|s_u| < |u|$, $|p_v|+|s_v| < |v|$).
We thus have $u \nsim_j v$ and then $u \nsim_k v$.

Otherwise $u$ and $v$ have the same type of order $j$. By assumption, they do not have the same type of order~$k$, hence $j<k$. One has to do the same proof as the one of Proposition \ref{prop:smallWords}, except that one more argument is needed.
In case (1.a) and if $(u',v') \in \{(ab, ba), (aab,baa) \}$, we compute $\binom{u}{ab^{j+1}} - \binom{v}{ab^{j+1}}$. We need to stress the fact that in this particular case, $j < k-1$. Indeed, $j=k-1$ would give $u=\varphi^{k-1}(ab)$, $v= \varphi^{k-1}(ba)$ or $u=\varphi^{k-1}(a) \varphi^k(a)$, $v=\varphi^{k}(b) \varphi^{k-1}(a)$. In both cases, this is impossible since $u$ and $v$ do not have the same type of order~$k$.
\end{proof}


Due to Theorem~\ref{thm:typediff}, the $k$-binomial complexity of $\bt$ can be computed from 
\[
b_{\mathbf{t},k}(n)= \# \left( \Fac_n(\mathbf{t}) /\!\sim_k \right) =  \# \left( \{(p_u,s_u) : u \in \Fac_n(\mathbf{t}) \}/\!\equiv_k\right)\, .
\]

The last theorem provides this quantity.

\begin{theorem}
\label{thm:numberTypes}
For all $k \geq 3$, $n \geq 2^k$, we have
\begin{align*}
\# \left( \{(p_u,s_u) : u \in \Fac_n(\mathbf{t}) \}/\!\equiv_k\right) = \left\{\begin{array}{ll}
3 \cdot 2^k-3, & \text{if } n \equiv 0 \pmod{2^k} ;\\
3 \cdot 2^k-4, & \text{otherwise}.
\end{array}\right.
\end{align*}

\end{theorem}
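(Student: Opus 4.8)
The plan is to turn the problem into a purely combinatorial count on pairs $(p,s)$. Write $P_n$ for the set of all factorizations of order $k$ of the factors of $\mathbf{t}$ of length $n$; since any two factorizations of order~$k$ of the same word are $\equiv_k$-equivalent (the corollary following Lemma~\ref{lem:relationCuts}), we have $b_{\mathbf{t},k}(n)=\#(P_n/\!\equiv_k)$. \emph{First} I would pin down $P_n$ exactly. A pair $(p,s)$, with $p$ a proper suffix of some $\varphi^k(a)$ and $s$ a proper prefix of some $\varphi^k(b)$, lies in $P_n$ if and only if $|p|+|s|\equiv n \pmod{2^k}$: the congruence is forced by $u=p\varphi^k(v)s$, and conversely Proposition~\ref{pro:factorsappear} produces, for any such $(p,s)$, a length-$n$ factor with this factorization (the inequality $|p|+|s|\le n$ being automatic once $n\ge 2^k$). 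For each length $i\in\{1,\dots,2^k-1\}$ there are exactly two admissible suffixes and two admissible prefixes (those of $\varphi^k(0)$ and $\varphi^k(1)$, which are complementary), while length $0$ gives only $\varepsilon$. Summing the number of decorated pairs over the lengths $i=|p|,\,j=|s|$ with $i+j\in\{r,\,r+2^k\}$, where $r:=n\bmod 2^k$, yields $\#P_n=4\cdot 2^k-3$ if $r=0$ and $\#P_n=4\cdot 2^k-4$ otherwise.

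\emph{Second}, I would analyse the classes through canonical representatives. Set $N=2^{k-1}$. Reading Definition~\ref{def:type}, the relations (1b)/(1c) change $|p|$ by $\pm N$ at constant total length, (2a)/(2b) change $|p|$ by $\pm N$ while shifting the total by $2^k$, and (1d) fixes $|p|=N$; in particular $|p|$ is constant modulo $N$ along a class (Remark~\ref{rem:lgequivk}). From this I would extract a clean reducibility criterion: a pair with $|p|<N$ is never reducible, whereas a pair with $|p|\ge N$ is $\equiv_k$-equivalent to a (necessarily unique) pair of strictly smaller $|p|$ precisely when $s=\varepsilon$, or $s\ne\varepsilon$ and the letters defining $p$ and $s$ coincide. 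Checking this amounts to matching the block $\varphi^{k-1}(\cdot)$ that is shifted from $p$ to $s$ against the admissible prefixes and suffixes, and is where the letter decorations must be handled with care. Granting it, every class contains a unique irreducible pair, so
\[
b_{\mathbf{t},k}(n)=\#\{(p,s)\in P_n: |p|<N\}+\#\{(p,s)\in P_n:\ |p|\ge N,\ s\ne\varepsilon,\ \text{defining letters opposite}\}.
\]

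\emph{Third}, I would carry out these two counts, distinguishing the position of $r$ relative to $N$ as needed, and check that both terms conspire to give $6N-4=3\cdot 2^k-4$ when $r\ne 0$. The case $r=0$ needs one extra observation: the pair $(\varepsilon,\varepsilon)$ is joined by (2b) to \emph{both} $(\varphi^{k-1}(0),\varphi^{k-1}(1))$ and $(\varphi^{k-1}(1),\varphi^{k-1}(0))$, which are joined to each other by (1d), so this class has size three; however, both of its length-$N$ members reduce down to $(\varepsilon,\varepsilon)$, so the class is still counted exactly once by its unique irreducible pair and the bookkeeping above applies unchanged, giving $6N-3=3\cdot 2^k-3$.

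The main obstacle is the reducibility criterion together with the class bookkeeping in the second step: one must track the decorating letters $a,b$ and the empty-word conventions throughout the case analysis of $\equiv_k$ from Definition~\ref{def:type}, and prove that each class contains a \emph{unique} irreducible pair (equivalently, a unique pair of minimal $|p|$). The delicate boundary cases are empty $p$ or $s$, the value $|p|=N$ where relation (1d) lives, the vanishing of the high level $i+j=r+2^k$ when $r=2^k-1$, and the size-three class occurring at $r=0$.
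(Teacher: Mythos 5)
Your proposal is correct, and it reorganizes the count in a genuinely different way from the paper. The paper's proof partitions the set of factorizations according to the invariant $|p_u| \bmod 2^{k-1}$ (Remark~\ref{rem:lgequivk}) into sets $P_\ell$, then for each $\ell$ writes out the $4$ to $8$ elements of $P_\ell$ explicitly (in terms of a suffix $x$ and a prefix $y$ of $\varphi^{k-1}(0)$ of the appropriate lengths) and groups them into classes by direct inspection of Definition~\ref{def:type}, obtaining $6$ classes per generic $\ell$ and a separate count for $P_0\cup S_0$. You instead set up a normal form: each class contains a unique pair of minimal $|p|$, characterized by your reducibility criterion, and you count those representatives globally. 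Both arguments rest on the same two inputs — Proposition~\ref{pro:factorsappear} to know that every admissible pair $(p,s)$ with $|p|+|s|\equiv n \pmod{2^k}$ is realized, and the structure of Definition~\ref{def:type} — and both require a case split on the position of $r=n\bmod 2^k$ relative to $2^{k-1}$. I checked the pieces you flag as the main obstacle and they do go through: a reduction step from $|p|\ge 2^{k-1}$ is forced to use relation (1b)/(1c) when $|s|<2^{k-1}$ and relation (2a)/(2b) when $|s|\ge 2^{k-1}$ (so the reduced pair is unique), in both cases validity of the reduced pair is exactly the condition that the defining letters of $p$ and $s$ coincide (or $s=\varepsilon$), relation (1d) only ever involves the two pairs $(\varphi^{k-1}(a),\varphi^{k-1}(\bar a))$, which are both reducible to $(\varepsilon,\varepsilon)$, and no other relation preserves $|p|$; hence single-step-irreducible coincides with minimal-$|p|$ and the representative is unique. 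Your final tallies ($4\cdot 2^{k-1}-3+2\cdot 2^{k-1}$ for $r=0$, and $6\cdot 2^{k-1}-4$ in the three remaining configurations of $r$) agree with the statement. What your route buys is a uniform bookkeeping principle in place of the paper's exhaustive listing; what it costs is the extra lemma that every class has a unique irreducible element, which the paper avoids by simply exhibiting each class.
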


\begin{proof}
Let $n \geq 2^k$ and set $\lambda \in \{0,\ldots,2^k-1 \}$ the integer such that $n \equiv \lambda \pmod{2^k}$.

For every $\ell \in \{0, \ldots, 2^{k-1}-1 \}$, 
\[
	P_\ell = \{ (p_u, s_u) : u \in \Fac_n(\mathbf{t}), |p_u|= \ell \, \text{ or } \, |p_u|=2^{k-1}+\ell \}
\]
and
\[
S_\ell = \{ (p_u, s_u) : u \in \Fac_n(\mathbf{t}), |s_u|= \ell \, \text{ or } \, |s_u|=2^{k-1}+\ell \}.
\]
If $\ell$ and $\ell'$ are two distinct elements from $\{0, \ldots, 2^{k-1}-1 \}$ then, due to Definition \ref{def:type}, for all $(p_u, s_u) \in P_\ell, (p_v,s_v) \in P_{\ell'}$, we have $(p_u,s_u) \not\equiv_k (p_v,s_v)$.
The idea of the proof is to count the number of equivalence classes in the pairwise disjoint sets $P_{\ell}$.

We can notice that $|p_u| + |s_u|\in \{\lambda, \lambda+2^k \}$ for all factorizations $(p_u,s_u)$.
From that, note that $P_0 = S_0$ if and only if $\lambda = 0$ or $\lambda=2^{k-1}$. In that case, we set $\ell_0=0$ and thus $P_{\ell_0}=S_0$. Otherwise, there exists $\ell_0 \neq 0$ such that $P_{\ell_0}=S_0$.

We will show that 
\begin{align*}
\# \left( (P_0 \cup P_{\ell_0}) /\!\equiv_k\right) =
\# \left( (P_0 \cup S_0) /\!\equiv_k\right) = \left\{\begin{array}{ll}
3, & \text{if } \lambda = 0;\\
2, & \text{if } \lambda =2^{k-1};\\
8, & \text{otherwise}
\end{array}\right.
\end{align*}
and that, for every $\ell \in \{0, \ldots, 2^{k-1}-1 \} \setminus \{ 0, \ell_0 \}$,
\begin{align*}
\# \left( P_\ell /\!\equiv_k\right)= 6.
\end{align*}

Obviously, we have
\[
\# \{0, \ldots, 2^{k-1}-1 \} \setminus \{ 0, \ell_0 \} = 
\left\{\begin{array}{ll}
2^{k-1}-1, & \text{if } \lambda = 0  \text{ or } \lambda=2^{k-1};\\
2^{k-1}-2, & \text{otherwise}.
\end{array}\right.
\]
Putting together the previous observations, we therefore get
\[
\# \left( \{(p_u,s_u) : u \in \Fac_n(\mathbf{t}) \}/\!\equiv_k\right) = \# \bigcup_{\ell=0}^{2^{k-1}-1} P_\ell = \left\{\begin{array}{ll}
6\,(2^{k-1}-1) +3, & \text{if } \lambda = 0;\\
6\,(2^{k-1}-1) +2, & \text{if } \lambda =2^{k-1};\\
6\,(2^{k-1}-2) +8, & \text{otherwise},
\end{array}\right.
\]
which gives us the expected result.

First, let us deal with $P_0$ and $S_0$. If $\lambda=0$, due to Proposition~\ref{pro:factorsappear} (ensuring that every pair appears and this argument is repeated all along the proof), we have $P_0 = S_0$ which is equal to
\[
\{ (\varepsilon, \varepsilon), (\varphi^{k-1}(0),\varphi^{k-1}(0)), (\varphi^{k-1}(0),\varphi^{k-1}(1)), (\varphi^{k-1}(1),\varphi^{k-1}(0)), (\varphi^{k-1}(1),\varphi^{k-1}(1)) \}.
\]
By Definition~\ref{def:type}, $\# \left( P_0 /\!\equiv_k \right)= 3$.
If $\lambda=2^{k-1}$, 
\[
P_0=S_0 = \{ (\varepsilon, \varphi^{k-1}(0)),(\varphi^{k-1}(0),\varepsilon), (\varepsilon, \varphi^{k-1}(1)),(\varphi^{k-1}(1),\varepsilon) \}
\]
and $\# \left( P_0 /\!\equiv_k \right)= 2$.
Finally, two sub-cases have to be distinguished if $\lambda \not\in \{0, 2^{k-1}\}$: either $0<\lambda < 2^{k-1}$ or $2^{k-1}<\lambda < 2^{k}$. Let $y$ be the prefix of $\varphi^{k}(0)$ of length $\lambda$. 

In the first sub-case, $y$ is also a prefix of $\varphi^{k-1}(0)$. We thus have
\begin{align*}
P_0 = \{ (\varepsilon,y), (\varepsilon, \overline{y}), &(\varphi^{k-1}(0), \varphi^{k-1}(1)y),(\varphi^{k-1}(1), \varphi^{k-1}(1)y), \\ &(\varphi^{k-1}(0), \varphi^{k-1}(0)\overline{y}),(\varphi^{k-1}(1), \varphi^{k-1}(0)\overline{y})
\}
\end{align*}
and $\# \left( P_0 /\!\equiv_k \right)=4$. We can proceed in the same way for $S_0$ and get a total of $8$ classes.

In the second sub-case, we can write $y = \varphi^{k-1}(0)z$ where $z$ is the prefix of $\varphi^{k-1}(1)$ of length $\lambda-2^{k-1}$.
We have
\begin{align*}
P_0=\{
(\varepsilon,\varphi^{k-1}(0)z), (\varepsilon,\varphi^{k-1}(1) \overline{z}), (\varphi^{k-1}(0),z),(\varphi^{k-1}(1),z),(\varphi^{k-1}(0),\overline{z}),(\varphi^{k-1}(1),\overline{z})
\}
\end{align*}
and once again, $\# \left( P_0 /\!\equiv_k \right)= 4$. The same result holds for $S_0$.
\\

Let us now consider $\ell \in \{0, \ldots, 2^{k-1}-1 \} \setminus \{ 0, \ell_0 \}$ and show that $\# \left( P_\ell /\!\equiv_k \right)=6$. Two cases have to be considered: either $\lambda< \ell$ or, $\lambda > \ell$. Indeed, we cannot have $\lambda=\ell$. Observe that if $\lambda=\ell$ or $\lambda=\ell + 2^{k-1}$, then $S_0=P_\ell$ which means that $\ell_0=\ell$ but we are assuming that $\ell\not\in\{ 0, \ell_0 \}$. Recall that that $|p_u| + |s_u|\in \{\lambda, \lambda+2^k \}$ for all factorizations $(p_u,s_u)$. We will make a constant use of this fact.
\begin{itemize}
  \item[a)] If $\lambda < \ell$, we cannot have $|p_u| + |s_u| = \lambda$, so obviously $|p_u| + |s_u| = 2^k + \lambda$ for all $(p_u,s_u) \in P_\ell$. Therefore, if $|p_u| = \ell < 2^{k-1}$, then $|s_u| > 2^{k-1}$. On the opposite, if $|p_u| = \ell + 2^{k-1}$, then $|s_u|<2^{k-1}$.
Set $x$ (resp., $y$) the suffix (resp., prefix) of $\varphi^{k-1}(0)$ of length $\ell$ (resp., $\lambda + 2^k - \ell$). 
We thus have 
\begin{align*}
P_\ell =\{
&(x, \varphi^{k-1}(0) \overline{y}), (x, \varphi^{k-1}(1)y),(\overline{x}, \varphi^{k-1}(0) \overline{y}),(\overline{x}, \varphi^{k-1}(1)y), \\ &(x \varphi^{k-1}(1),y),(x \varphi^{k-1}(1),\overline{y}),(\overline{x} \varphi^{k-1}(0),y),(\overline{x} \varphi^{k-1}(0),\overline{y})
\}
\end{align*}
and, from Definition~\ref{def:type}, $\# \left( P_\ell /\!\equiv_k \right)=  6$.

\item[b)] If $\ell < \lambda$, observe that $|p_u| = \ell \Rightarrow |s_u|=\lambda-\ell$. Indeed, since $|s_u|<2^k$, $|p_u| + |s_u|< \ell+2^k< \lambda+2^k$, hence we have $|p_u| + |s_u| = \lambda$. Two sub-cases have to be considered: $\lambda- \ell < 2^{k-1}$ or, $\lambda- \ell > 2^{k-1}$.
\begin{itemize}
  \item[b.1)] In the first sub-case, $|p_u| = \ell \Rightarrow |s_u| = \lambda-\ell < 2^{k-1}$. Otherwise stated, if $p_u$ is a suffix of some $\varphi^{k-1}(a)$, then $s_u$ is a prefix of some $\varphi^{k-1}(b)$.
Moreover, $\ell>\lambda-2^{k-1}$ ensures that $$|p_u| = \ell +2^{k-1} \Rightarrow |p_u|+|s_u| = \lambda+2^k.$$ Therefore, if $|p_u|=\ell+2^{k-1}$, then $|s_u| > 2^{k-1}$. Otherwise stated, if $p_u$ has a suffix of the form $\varphi^{k-1}(a)$, then $s_u$ has a prefix of the form $\varphi^{k-1}(b)$.
  \item[b.2)] In the second sub-case, $\ell+ 2^{k-1} < \lambda$ implies that
$$
|p_u|=\ell +2^{k-1} \Rightarrow |p_u| + |s_u| = \lambda.
$$
This is why, if $|p_u|=\ell +2^{k-1}$, then $|s_u| < 2^{k-1}$. Otherwise stated, if $p_u$ has a suffix of the form $\varphi^{k-1}(a)$, then $s_u$ is a prefix of some $\varphi^{k-1}(b)$. Finally, we already know that if $|p_u|=\ell$, then $|s_u|=\lambda-\ell$ which is here greater than $2^k-1$. Otherwise stated, if $p_u$ is a suffix of some $\varphi^{k-1}(a)$, then $s_u$ has a prefix of the form $\varphi^{k-1}(b)$.
\end{itemize}
Let us denote by $x$ the suffix of $\varphi^{k-1}(0)$ of length $\ell$ and $y$ the prefix of $\varphi^{k-1}(0)$, whose length is $\lambda - \ell$ in the first sub-case, $\lambda  - \ell- 2^{k-1}$ in the second one.
The case b.1) gives us 
\begin{align*}
P_\ell= \{
&(x, y), (\overline{x},y), (x, \overline{y}), (\overline{x}, \overline{y}), (x \varphi^{k-1}(1),  \varphi^{k-1}(1)y), \\&(x \varphi^{k-1}(1),  \varphi^{k-1}(0)\overline{y}), (\overline{x} \varphi^{k-1}(0),  \varphi^{k-1}(1)y), 
(\overline{x} \varphi^{k-1}(0),  \varphi^{k-1}(0)\overline{y})
\}
\end{align*}
while the case b.2) gives 
\begin{align*}
P_\ell = \{
&(x, \varphi^{k-1}(1)y), (\overline{x},\varphi^{k-1}(1)y), (x, \varphi^{k-1}(0)\overline{y}), (\overline{x}, \varphi^{k-1}(0)\overline{y}),\\ &(x \varphi^{k-1}(1),  y), (x \varphi^{k-1}(1),  \overline{y}), (\overline{x} \varphi^{k-1}(0),  y), 
(\overline{x} \varphi^{k-1}(0),  \overline{y})
\}.
\end{align*}
Both of them lead to the conclusion that $\# \left( P_\ell /\!\equiv_k \right)= 6$.
\end{itemize}

\end{proof}

As a consequence of Corollary~\ref{cor:smallWords}, Theorem~\ref{thm:typediff} and Theorem~\ref{thm:numberTypes}, we get the expected result stated in Theorem~\ref{the:main}.

\section{Acknowledgments}

We would like to thank Jeffrey Shallit for his participation in the statement of the initial problem. A conjecture about $b_{\mathbf{t},k}(n)$ was made when he was visiting the last author.

\end{document}